\newtheorem{theorem}{Theorem}[section]
\newtheorem{proposition}[theorem]{Proposition}%[section]
\newtheorem{lemma}[theorem]{Lemma}
\newtheorem{corollary}[theorem]{Corollary}
\theoremstyle{definition}
\newtheorem{definition}[theorem]{Definition}
\newtheorem{remark}[theorem]{Remark}
\newtheorem{remarks}[theorem]{Remarks}
\newtheorem{application}[theorem]{Application}
\numberwithin{equation}{section}
\newcommand{\eee}{{\rm e}}
\newcommand{\trees}{\mathcal{T}}
\newcommand{\card}[1]{\left| #1 \right|}
\DeclareMathOperator{\argmax}{argmax}
\newcommand{\absnu}{\card\nu\!}
\newcommand{\boldmu}{\boldsymbol{\mu}}
\newcommand{\boldxi}{\boldsymbol{\xi}}
\newcommand{\boldPsi}{\boldsymbol{\Psi}}
\newcommand{\boldPhi}{\boldsymbol{\Phi}}
\newcommand{\boldGamma}{\boldsymbol{\Gamma}}
\newcommand{\boldBeta}{\boldsymbol{\mathcal B}}
\newcommand{\boldCeta}{\boldsymbol{\mathcal C}}
\newcommand{\boldFeta}{\boldsymbol{\mathcal F}}
\newcommand{\Beta}{\mathcal B}
\newcommand{\boldzero}{\boldsymbol{0}}
\newcommand{\boldone}{\boldsymbol{1}}
\newcommand{\tg}{g}
\newcommand{\wg}{\widehat g}
\newcommand{\ext}[1]{\left[#1\right]}
\newcommand{\rser}{R[\![X]\!]}
\newcommand{\rlaur}{R(\!(\!X\!)\!)}
\newcommand{\kser}{K[\![X]\!]}
\newcommand{\klaur}{K(\!(\!X\!)\!)}
\newcommand{\res}{\mathrm Res}
\newcommand{\sdag}[1]{\underline{#1}^\dagger}
\newcommand{\utilde}[1]{\underline{\tilde{#1}}}
\newcommand{\uhat}[1]{\underline{\widehat{#1}}}
\begin{document}
\title{CONVERGENCE OF CLUSTER AND VIRIAL EXPANSIONS\\ FOR REPULSIVE CLASSICAL GASES}
\author{Roberto Fern\'andez}
\address{New York University Shanghai}
\curraddr{Room 1149-3, 1555 Century Avenue, 
Pudong New Area, Shanghai, 
China 200122}
\email{rf87@nyu.edu}
\thanks{}
%    author one information
% \author[short version for running head]{name for top of paper}
\author{Nguyen Tong Xuan}
\address{Gran Sasso Science Institute}
\curraddr{Viale F. Crispi, 7 67100 L'Aquila}
\email{tongxuan.nguyen@gssi.it}
\thanks{}

\date{\today}

\dedicatory{}
\keywords{ cluster and virial expansions, repulsive gases}

\begin{abstract} 
We study the convergence of cluster and virial expansions for systems of particles subject to positive two-body interactions. Our results strengthen and generalize existing lower bounds on the radii of convergence and on the value of the pressure.  Our treatment of the cluster coefficients is based on expressing the truncated weights in terms of trees and partition schemes, and generalize to soft repulsions previous approaches for models with hard exclusions.  Our main theorem holds in a very general framework that does not require translation invariance and is applicable to models in general measure spaces.  For the virial coefficients we resort to an approach due to Ramawadth and Tate that uses Lagrange inversion techniques only at the level of formal power series and leads to diagrammatic expressions in terms of trees, rather than the doubly connected diagrams traditionally used.  We obtain a new criterion that strengthens, for repulsive interactions, the best criterion previously available (proposed by Groenveld and proven by Ramawadth and Tate).  We illustrate our results with a few applications showing noticeable improvements in the lower bound of convergence radii.
 
\end{abstract}
\maketitle

\section{Introduction}
The \emph{virial expansion} is the expansion of the pressure in powers of the density.  It was introduced by Kamerlingh Onnes at the beginning of the XX-th century, to replace the van der Waals equation of state for non-ideal gases.  The expansion both solves physical inconsistencies of the Van der Waals approach and provides a more faithful description of diluted non-ideal fluids.  While the coefficients of the expansion are useful already at the phenomenological level (see tables, e.g., in \cite{tables_virial,wiki}), statistical mechanics can be used to determine them starting from microscopic models.  This is traditionally accomplished through an auxiliary expansion of the pressure in powers of an effective parameter called \emph{fugacity} ---the \emph{cluster expansion} introduced by Mayer in the early forties.  Differentiation of this series yields, in turns, the density as a power series of the fugacity.  The virial expansion is, in principle, obtained, by formally inverting this last series to obtain the fugacity as a power series in the density, and composing the latter with the expansion for the pressure.

The expressions for the coefficients of the cluster and virial expansion are well known for systems with two-body interactions.  They involve some non-trivial combinatorics whose bookkeeping is more efficiently done in terms of diagrams.  The terms of the cluster expansions can be expressed in terms of trees \cite{pen67,Fern}, while those of the virial expansion are usually written in terms of doubly connected diagrams (see, for instance, the classical reference \cite{uhfor63}).  The complexity, and number of possibilities for virial diagrams increase very rapidly with the order of the term, and already the computations for the simplest non-ideal gas ---hard spheres--- are a subject of current interest \cite{SJA,RH,CNMcoy}.

From the point of view of rigorous statistical mechanics, however, the crucial issue is the determination of the region in parameter space for which the cluster and virial series converge.  Such region correspond to regimes in which fluids remain in gas form, without exhibiting transitions to liquid or solid phases.  The study of convergence properties of both series started in the late sixties \cite{pen67,GK,LPen64} but did not gather momentum till a couple of decades later.  The convergence of the cluster expansion has been studied by a variety of methods: Kirkwood-Salzburg equations \cite{GK} \cite[Chapter 4]{Rue69}; tree-graph bounds \cite{bry84}, induction methods \cite{kotpre86,dob96,dob96a} and partition schemes \cite{Fern} (see \cite{bisferpro10} for further comparisons of these methods).  The last method yields the best bounds to date and have been subject of a number of applications and improvements (e.g.\ \cite{jps,temmel14}).  

The case of gases in continuum space is of particular interest here.  The convergence of their cluster expansion depends on the interplay of two effects:
\begin{itemize}
\item[(E1)] The alternation of signs in the terms defining the coefficients of the cluster expansion.
\item[(E2)] The tug-of-war between the repulsive and attractive parts of the interaction. 
\end{itemize}
The second effect is particularly subtle, and has not been properly exploited till very recently \cite{proyuh17}.  The first effect is the only one present for repulsive potentials and it is the one taken into account in the transcriptions of the inductive \cite{poguel09} and Kirkwood-Salzburg \cite{Sabine2018} approaches.  The approach based on partition schemes has been successfully applied to the extreme case of hard repulsions \cite{fps} and, also, in the seminal work \cite{proyuh17}.  In the latter, however, the repulsive part of the potential is bounded below by zero and for the purely repulsive case the resulting bound is no better than the classical one.  
\medskip 

Mayer's approach turns the proof of convergence of virial expansions into a combination of cluster-expansion convergence with estimations of coefficients obtained by inversion of a series.  After the ground breaking work of Lebowitz and Penrose \cite{LPen64} this inversion is invariably done by resorting to the Lagrange inversion formula for analytic series.  This strategy, however, applies only to regimes in which the cluster expansion converges and, hence, can potentially introduce unphysical constraints inherited from the unphysical singularities that limit the convergence of the cluster expansion.  In fact, virial convergence beyond cluster-expansion convergence has, indeed, been observed in some particular models \cite{Jen15}.  In this work we present expressions for the virial coefficients (Proposition \ref {prop:rr.10}) obtained on a purely combinatorial basis, without relying on cluster-expansion convergence.  Consequently, our main estimates of the radius of convergence of the virial expansion  (Theorem \ref{th:rr1}) can potentially exceed our best estimate for the cluster radius of convergence (see Remark \ref{rem:radius}). 
 
The objective of this paper is to develop the best available results for cluster and virial convergence of systems with positive two-body interactions.  The paper has, therefore, two largely independent parts which, however, share some common technology.  Cluster expansion results rely in expressing the truncated weights in terms of trees.  This is a time-honored tradition that, however, can be made more effective, as pointed in \cite{Fern}, through the use of partition schemes.  As we discuss below, these schemes allow the incorporation of further dilution effects due to the repulsive character of the interaction.  This leads to a very general convergence result (Theorem \ref{th:cluster.gen} below) which generalizes previous results for systems with exclusions \cite{Fern,fps}. The theorem also strengthen, in the purely repulsive setting the best results available for soft interactions \cite{uel04,poguel09,proyuh17}.  [We have not investigated its relation with the technically more involved approach in \cite{temmel14}.]
 For the benefit of the reader having physical applications in mind we start by presenting out results in more usual but less general forms (Theorems \ref{th:cluster} and \ref{th:cluster.1}).

Our treatment of the virial expansion is a strengthening of the unfortunately yet unpublished work by   S.\ Ramawadth and S.\ Tate \cite{ram15,ramtat15}.  We follow very close their novel approach, adding improvements specific for the repulsive case.   A noteworthy aspect of this approach is that it leads to alternative diagrammatic expressions of the virial coefficients [formulas  \eqref{eq:bell.43} and  \eqref{eq:mer30} below], based on collections of trees rather than the more involved doubly connected graphs.  The approach relies on two ingredients.  The first ingredient is a careful handling of the different coefficients through the formalism of formal power series and formal Laurent series.  Below, we present a detailed account of the different step involved, including a largely self-contained review of the relevant aspects of the theory of formal series.  The second ingredient introduced in \cite{ram15,ramtat15} is the very clever procedure of \emph{merging of trees}, also fully discussed below.  The combination of both ingredients leads to the improved lower bounds on the radii of convergence of the virial expansion detailed in Theorem \ref{th:rr1} below.  Unfortunately, the improvement applies only to repulsive interactions.  We are at present unable to extend our treatment to interactions to both repulsive and attractive parts.  

Ramawadth's and Tate's results also provides a proof of an early convergence criterion proposed by Groeneveld~\cite{gro} in his 1967 PhD dissertation.  This criterion, which was presented without proof, is an improvement of Lebowitz' and Penrose's for positive interactions. To our knowledge, Ramawadth's and Tate's constitute its  first published proof.  As we explain below, Theorem \ref{th:rr1} constitutes a strengthening of Groenveld's in the case of positive interactions.  The full criterion, however, is difficult to apply in practice.  We therefore propose a slightly weaker criterion (Corollary \ref{cor:rr100}) that is, however, computationally more efficient.  This ``efficient" criterion is not directly comparable with Groenveld's but, as we show in several examples, it is often much better.
\medskip

We illustrate our criteria with a number of examples and applications (Sections \ref{sec:cluster.app} and \ref{ssec:virial.eff}).  For models of hard spheres our estimates of cluster radii of convergence for $d=1$ are 60\% better than the classical bound ---and within 20\% of the exact value--- and 40\% better for $d=2$.  This last result was already obtained in \cite{fps}.  Our estimates for the virial radii for hard spheres, on the other hand, are 40\%  and 25\% better than that of Greonveld (itself 60\% larger than the classical estimate by Lebowitz and Penrose \cite{LPen64}) respectively for $d=1$ and $d=2$. For a power-law repulsive potential of interest in physics~\cite{Barlow12} our bound of the virial radius is 10\% better than Groeneveld's and 80\% better than the classical estimates.  

\section{Assumptions and definitions}

Our starting point is the grand canonical partition function for a classical gas on a finite region $\Lambda\subset \mathbb{R}^d$:

\begin{equation}\label{eq:virial1}
\Xi_{\Lambda}(z)=1+\sum_{N=1}^{\infty}\frac{z^N}{N!}\,\int_{\Lambda^N}\eee^{-\beta U_N(x_1,\ldots,x_N)}d\,x_1\ldots d\,x_N\;.
\end{equation}
Here $\beta$ is the inverse temperature, $z$ is a parameter called \emph{fugacity} and the functions $U_N$ define the interaction potential of the particles in the system.  The results in the sequel remain true ---module measurability conditions--- with the Lebesgue measure $dx$ replaced by any other $\sigma$-finite measure.  In particular, the measure could include space dependent fugacities. 
In this work we assume that the interactions are: 
\begin{itemize}

 \item \emph{Two-body:}
 \[U_N(x_1,\ldots,x_N)\;=\;\sum_{1\le i<j\le N}\phi(x_i,x_j)\;,\]
 with measurable functions $\phi$ which are allowed to take the value $+\infty$ to describe \emph{hard-core exclusions}.  
  
\item  \emph{Repulsive:} 
 \[ \phi(x_i,x_j)\;\ge\; 0\;. \]
 
\item \emph{Translation invariant:}
  \[ \phi(x_i,x_j)\;=\; \phi\bigl(\left|x_i-x_j\right|\bigr)\;. \]
In particular the functions $U_N$ are invariant under permutations of their arguments. 

\item \emph{$C(\beta)$-Tempered:} The function 
\begin{equation}\label{eq:rr.cbeta}
C(\beta)\;:=\;\int_{\mathbb{R}^d}\left|\eee^{-\beta\phi(x,0)}-1\right| dx
\end{equation}
is finite.  
 \end{itemize} 
 
 The function $C(\beta)$ play on important r\^ole in the estimates of convergence below.  As remarked in  \cite{Rue69}, its finiteness for some $\beta$ is equivalent to $\phi(x,0)$ being integrable outside some set of finite Lebesgue measure (for instance, outside the set $\{x:\phi(x)\le 1\}$). This, in turns, implies that finiteness of $C(\beta)$ for some $\beta$ implies finiteness for all $\beta$.  

The finite-volume pressure is
\[\beta P_{\Lambda}(z)\;=\;\frac{1}{|\Lambda|}\ln \Xi_{\Lambda}(z).\]
while the thermodynamic pressure  $p$ is obtained through the thermodynamic limit 
\begin{equation}\label{eq:virial2}
p\;=\;\lim\limits_{\Lambda\uparrow\mathbb{R}^d}P_{\Lambda}(z)\;
\end{equation}
e.g.\ in Fisher sense.   The finiteness of $C(\beta)$ is, technically, not enough to guarantee the existence of this limit.  A convenient sufficient condition is, for instance \cite[Chapter 3]{Rue69}
the existence of $A, r_0>0$ and $\alpha>d$ such that $\phi(0,x)\le A/|x|^\alpha$ for $|x|>r_0$.  This condition ---called \emph{temperedness} by Ruelle--- is satisfied by all physical examples, including the ones discussed below.

The \emph{finite-volume cluster expansion} is the Taylor expansion of $\beta P_{\Lambda}$ around $z=0$ (diluted regime):
\begin{equation}\label{eq:rr.fce}
\beta P_{\Lambda}(z)\;=\;\sum_{n=1}^{\infty}\frac{b_n(\Lambda)}{n!}z^n\;.
\end{equation}
The \emph{finite-volume virial expansion} is obtained by expanding, instead, 
\begin{equation}\label{eq:virial3}
\beta P_{\Lambda}(\rho_\Lambda)\;=\;\sum_{n=1}^{\infty}\frac{\beta_n(\Lambda)}{n!}\rho^n_\Lambda\;.
\end{equation}
where $\rho_\Lambda$ is the  mean density of particles for the grand-canonical ensemble in the volume $\Lambda$
\begin{equation}\label{eq:rr.fce1}
 \rho_{\Lambda}(z)\;=\; z\frac{\partial [\beta P_{\Lambda}]}{\partial z}\;, 
\end{equation}
so to yield a correct version of the flawed van der Waals equation of state.  The virial expansion amounts to composing the cluster expansion \eqref{eq:rr.fce} with the function $z(\rho_\Lambda)$ obtaining by inverting \eqref{eq:rr.fce1}.  For $z$ within the radius of convergence of the cluster expansion, this inversion can be performed through Lagrange inversion formula.  The dependence of the virial coefficients $\beta_n$ as a function of the cluster coefficients $b_n$ can, instead, be obtained in a purely algebraic fashion, without analytical considerations, by working at the level of formal power series (see e.g.\ \cite{wiki_formal}).   Indeed, $z(\rho_\Lambda)$ can be represented by the formal power series obtained as the inverse series to  
the term-by-term derivative of \eqref{eq:rr.fce}, namely
\begin{equation}\label{eq:rr.fce2}
 \rho_{\Lambda}\;=\;\sum_{n=0}^{\infty}\frac{b_{n+1}(\Lambda)}{n!}z^{n+1}\;.
\end{equation}
The virial expansion follows from the composition of the cluster expansion with the inverse series of \eqref{eq:rr.fce2}. 

The eventual objects of interest is the thermodynamic limit \eqref{eq:virial2} and, thus, the infinite-volume limits of the coefficients
 the \emph{cluster expansion:} Denoting
\begin{eqnarray}\label{eq:rr.cec}
b_n&=& \lim\limits_{\Lambda\uparrow\mathbb{R}^d}b_n(\Lambda) \\
\label{eq:rr.cev}
\beta_n&=& \lim\limits_{\Lambda\uparrow\mathbb{R}^d}\beta_n(\Lambda)\;,
\end{eqnarray}
the corresponding (infinite-volume) \emph{cluster} and \emph{virial expansion} take the form
\begin{eqnarray}\label{eq:rr.ce}
\beta p &=& \sum_{n=1}^{\infty}\frac{b_n}{n!}z^n \\
\label{eq:rr.ve}
\beta p &=& \sum_{n=1}^{\infty}\frac{\beta_n}{n!}\rho^n 
\end{eqnarray}
These expansions, and the previous ones should be understood as formal power series and the equalities are conditioned to the actual convergence of the series.  For the sake of completeness we review formal power series and their operations in Section \ref{sec:pf.virial}.

The goal of the present work is to determine fugacity and density radii below which all the series of 
\eqref{eq:rr.fce}, \eqref{eq:virial3}, \eqref{eq:rr.ce} and \eqref{eq:rr.ve} converge uniformly, uniformly for all $\Lambda$.  

\section{Results on cluster expansions}

\subsection{Basic convergence results}

The following theorem summarizes the convergence criterion for the cluster expansion of systems of objects subject to a positive two-body interaction.
Let 
\begin{equation}\label{eq:rn5.1}
\varphi_n(x;x_1,\ldots,x_n)\;=\; 
\prod_{1\le i\le n} \bigl[1-\eee^{-\beta\phi(x,x_i)}\bigr]
\prod_{1\le i<j\le n} \eee^{-\beta\phi(x_i,x_j)}\;,
\end{equation}
\begin{equation}\label{eq:rr10}
\tg(n)\;:=\; \int_{ \mathbb{R}^{dn}} \varphi_n(0;x_1,\ldots, x_n)\,d\,x_1\cdots d\,x_n \;.
\end{equation}
and
\begin{equation}\label{eq:rn8.0}
\Psi(\mu)\;=\; 1 + \sum_{n\ge 1} \frac{\mu^n}{n!} \, \tg(n) 
\end{equation}
(\emph{vertex sum}).

 \begin{theorem}\label{th:cluster}

Assume $z\in\mathbb C$ such that 
\begin{equation}\label{eq:rr9.0}
\card z\;\le\; \frac{\mu}{\Psi({\mu})}
\end{equation}
for some $\mu>0$.  Then

\begin{itemize}
\item[(i)] The finite-volume cluster expansions \eqref{eq:rr.fce} and their limit \eqref{eq:rr.ce} converge uniformly.
\item[(ii)] Furthermore,   
\begin{equation}\label{eq:rr9.0.01}
 \card{\beta P_\Lambda(z)},\,  \card{\beta p(z)} \le\mu\;.
\end{equation}
\end{itemize}

\end{theorem}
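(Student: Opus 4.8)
The plan is to reduce both claims to a single scalar bound on a majorant series, and to close that bound by a Pringsheim‑type argument. By the Mayer--Ursell expansion, $\tfrac1{|\Lambda|}\log\Xi_\Lambda(z)=\sum_{n\ge1}\frac{z^n}{n!}\,\tfrac1{|\Lambda|}\int_{\Lambda^n}\varphi^{\mathrm{T}}_n(x_1,\dots,x_n)\,dx$, where $\varphi^{\mathrm{T}}_n=\sum_{g\in\mathcal{C}_n}\prod_{\{i,j\}\in g}\bigl(\eee^{-\beta\phi(x_i,x_j)}-1\bigr)$ and $\mathcal{C}_n$ is the set of connected graphs on $\{1,\dots,n\}$; thus the coefficients in \eqref{eq:rr.fce} are $b_n(\Lambda)=\tfrac1{|\Lambda|}\int_{\Lambda^n}\varphi^{\mathrm{T}}_n\,dx$. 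Fixing the first argument and using translation invariance gives $\int_{\Lambda^n}|\varphi^{\mathrm{T}}_n|\,dx\le|\Lambda|\,c_n$ with $c_n:=\int_{\mathbb{R}^{d(n-1)}}\bigl|\varphi^{\mathrm{T}}_n(0,x_2,\dots,x_n)\bigr|\,dx$, hence $|b_n(\Lambda)|\le c_n$ and, in the limit, $|b_n|\le c_n$. So it suffices to show the cluster majorant $P(t):=\sum_{n\ge1}\frac{t^n}{n!}c_n$ satisfies $P(t)\le\mu$ for $t=|z|\le\mu/\Psi(\mu)$: the Weierstrass $M$‑test then gives uniform convergence in $\Lambda$ and on the disk in (i), while $|\beta P_\Lambda(z)|,|\beta p(z)|\le P(|z|)\le\mu$ gives (ii), once one also recalls the routine fact that absolute convergence of the Mayer series forces $\Xi_\Lambda(z)\ne0$ and makes $\tfrac1{|\Lambda|}\log\Xi_\Lambda$ analytic with Taylor series \eqref{eq:rr.fce}, and passes to the thermodynamic limit.

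Next, I would bound $|\varphi^{\mathrm{T}}_n|$ by trees via the Penrose partition scheme: relative to the root $1$ and its breadth‑first levels there is a surjection $g\mapsto\tau(g)$ onto spanning trees whose fibre over a tree $T$ is the Boolean interval from $T$ to $T\cup A(T)$, where the admissible set $A(T)$ contains every pair of siblings of $T$. Summing the identity $\sum_{g:\tau(g)=T}\prod_{e\in g}f_e=\prod_{e\in T}f_e\prod_{e\in A(T)}(1+f_e)$ over $T$, with $f_e=\eee^{-\beta\phi_e}-1$, and using that for a repulsive $\phi$ one has $f_e=-(1-\eee^{-\beta\phi_e})$ with $1-\eee^{-\beta\phi_e}$ and $\eee^{-\beta\phi_e}$ both in $[0,1]$ — so every tree term carries the common sign $(-1)^{n-1}$ — one gets
\[
\bigl|\varphi^{\mathrm{T}}_n(x_1,\dots,x_n)\bigr|\ \le\ \sum_{T}\ \prod_{\{i,j\}\in T}\bigl(1-\eee^{-\beta\phi(x_i,x_j)}\bigr)\ \prod_{\{i,j\}\ \text{siblings of}\ T}\eee^{-\beta\phi(x_i,x_j)}\ ,
\]
after discarding the remaining (non‑sibling) factors of $A(T)$, each $\le 1$. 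Keeping only the sibling factors is exactly what decouples the subtrees in the next step.

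Then I would unwind the recursive structure of a labelled tree rooted at $x_1=0$ — root with a set of children, each heading an independent subtree, the only interaction being among children of a common vertex through a factor $\prod\eee^{-\beta\phi}$ — which, together with translation invariance (a subtree's integrated weight does not depend on its root's position), yields that $M(t):=\sum_{n\ge1}\frac{t^n}{(n-1)!}c_n\;(=t\,P'(t))$, a series with non‑negative coefficients, solves the fixed‑point relation $M(t)=t\,\Psi\bigl(M(t)\bigr)$ with $\Psi$ the vertex sum \eqref{eq:rn8.0}: each $\tg(k)$ is the integrated weight of a root with $k$ mutually interacting children (and $\tg(k)\ge0$ since the corresponding $\varphi_k\ge0$; moreover $\Psi(\mu)\le\eee^{\mu C(\beta)}<\infty$). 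Now fix the $\mu>0$ of the hypothesis and put $t_0=|z|\le\mu/\Psi(\mu)$. Since $u\mapsto t_0\Psi(u)-u$ equals $t_0>0$ at $u=0$ and $t_0\Psi(\mu)-\mu\le0$ at $u=\mu$, the equation $u=t_0\Psi(u)$ has a root $u^\ast\in(0,\mu]$; the induction $M_N(t_0)\le t_0\,\Psi\bigl(M_{N-1}(t_0)\bigr)\le t_0\,\Psi(u^\ast)=u^\ast$ on partial sums then shows $M(t_0)$ converges with $M(t_0)\le u^\ast\le\mu$. Finally, since $s\mapsto M(s)/s=\Psi\bigl(M(s)\bigr)$ is non‑decreasing on $[0,t_0]$, one gets $P(t_0)=\int_0^{t_0}M(s)\,s^{-1}\,ds\le t_0\,\Psi\bigl(M(t_0)\bigr)\le t_0\,\Psi(\mu)\le\mu$, which is the required bound; letting $\Lambda\uparrow\mathbb{R}^d$ gives the same for $\beta p$.

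The delicate point is the identity $M(t)=t\,\Psi\bigl(M(t)\bigr)$. One must arrange the Penrose scheme so that, after the crude bound, the only admissible edges left are sibling pairs — a surviving non‑sibling admissible edge would couple distinct subtrees and wreck the clean recursion — and then match the labelled‑tree bookkeeping (the weight $1/(n-1)!$, the distribution of $\{2,\dots,n\}$ among the children's subtrees, the translation invariance of subtree weights) against the series $\Psi$. Everything downstream — the Pringsheim induction, the integral bound on $P$, and the analyticity facts identifying the series with $\beta P_\Lambda$ and $\beta p$ — is routine.
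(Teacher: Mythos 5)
Your proposal is correct and follows essentially the same route as the paper: the Penrose partition scheme reduces $\card{\omega^T_n}$ to a sum over trees carrying only sibling factors, the recursive decomposition of rooted trees into a root plus independent subtrees produces the vertex sum $\Psi$, and a monotone fixed-point iteration under the hypothesis $\card z\,\Psi(\mu)\le\mu$ closes the bound (the paper organizes this via depth-truncated tree sums and the operator $\boldPsi$ in Theorem \ref{th:cluster.gen} and Proposition \ref{prop:proof.1}, whereas you induct on vertex-count partial sums, but these are the same mechanism). The one imprecision is that the exact identity $M(t)=t\,\Psi(M(t))$ holds for the tree-sum majorant $t\,\Beta(t)$, not for your $M(t)$ built from $c_n=\int\card{\varphi^{\mathrm T}_n}$; since $M(t)\le t\,\Beta(t)$ term by term after the Penrose bound, your partial-sum induction applied to the tree series still yields $M(t_0)\le\mu$ and the argument goes through.
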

This theorem is a particular case of the more general Theorem \ref{th:cluster.gen} below, which will be proven in Section \ref{ssec:rrproof.gen}.   A slightly weaker form of the theorem [sharp inequalities in \eqref{eq:rr9.0} and \eqref{eq:rr9.0.01}] admits a much simpler proof that we present, for completeness, in Section \ref{ssec:rrproof.eas}.

\begin{remark}\label{rem:rr10}
The $C(\beta)$-temperedness implies that $1\le \Psi(\mu)\le \eee^{\mu C(\beta)}$.  In the trivial case in which all interactions are zero, $C(\beta)=0$ and $\Psi(\mu)=1$.  This implies $\beta P_\Lambda(z)=\beta p(z) =z$ which is a series with infinite radius of convergence.  Otherwise, $\Psi$ grows at least quadratically with $\mu$ and, hence, the ratio $\mu/\Psi(\mu)$ is a continuous and positive function on $[0,\infty)$, taking the value zero for $\mu=0$ and converging to zero as $\mu\to\infty$.  As a consequence:

\begin{itemize}
\item[(a)] The ratio $\mu/\Psi(\mu)$ achieves its maximum in $[0,\infty)$.

\item[(b)] If $\card z$ is less or equal than this max, the fixed point equation $\card z=\mu/\Psi(\mu)$ has at least one solution. 
\end{itemize}
\end{remark}
These observations lead to the following corollary.

\begin{corollary}\label{cor:rr1} \  

\begin{itemize}
\item[(i)] The radii of convergence of the cluster expansions \eqref{eq:rr.fce} and \eqref{eq:rr.ce}  are bounded below by
\begin{equation}\label{eq:rn9.1.bis.bis}
\mathcal{R}_{\rm cluster}\;\ge\;  r^* \;:=\; \max_{\mu\ge 0} \frac{\mu}{ \Psi({\mu})}\;,
\end{equation}

\item[(ii)] We have the uniform bounds 
\begin{equation}\label{eq:rn9.1.2.bis}
 \card{\beta P_\Lambda(z)},\,  \card{\beta p(z)} \;\le \;\mu^*:=\argmax_{\mu\ge 0} \frac{\mu}{\Psi({\mu})} \quad \mbox{for all} \quad \card z \le r^*\;.
\end{equation} 

\item[(iii)] If $\card z\le r^*$, then  
\begin{equation}\label{eq:rn9.0.0.bis}
\card{\beta P_\Lambda(z)}, \card{\beta p(z)} \;\le\; \mu_z\quad \
\end{equation}
where $\mu_z$ the smallest solution to the fixed-point equation
\begin{equation}\label{eq:rn9.0.0.bis.bis}
\card z \Psi(\mu_z)\;=\; \mu_z\;.
\end{equation}
\end{itemize}
\end{corollary}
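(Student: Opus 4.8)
The plan is to obtain all three items as direct consequences of Theorem~\ref{th:cluster}, the only additional input being the elementary facts about the profile $\mu\mapsto\mu/\Psi(\mu)$ recorded in Remark~\ref{rem:rr10}; no new estimate on the weights $\tg(n)$ is needed, so the argument is essentially a repackaging of that theorem.

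For parts (i) and (ii) I would proceed as follows. Setting aside the trivial interaction --- for which $\Psi\equiv 1$, $r^*=\infty$, and the claims are immediate since $\beta P_\Lambda(z)=\beta p(z)=z$ --- Remark~\ref{rem:rr10}(a) guarantees that the continuous function $\mu\mapsto\mu/\Psi(\mu)$ vanishes at $\mu=0$ and tends to zero as $\mu\to\infty$, hence attains its supremum $r^*$ at some $\mu^*=\argmax_{\mu\ge 0}\mu/\Psi(\mu)$. Therefore, for every $z\in\mathbb C$ with $\card z\le r^*$ one has $\card z\le r^*=\mu^*/\Psi(\mu^*)$, so hypothesis~\eqref{eq:rr9.0} of Theorem~\ref{th:cluster} holds with the single choice $\mu=\mu^*$. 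Part~(i) of that theorem then yields uniform convergence of \eqref{eq:rr.fce} and \eqref{eq:rr.ce} on the whole disc $\card z\le r^*$, which is precisely the lower bound $\mathcal R_{\rm cluster}\ge r^*$ in \eqref{eq:rn9.1.bis.bis}; and part~(ii), applied with the same $\mu=\mu^*$, gives the uniform bound $\card{\beta P_\Lambda(z)},\card{\beta p(z)}\le\mu^*$ of \eqref{eq:rn9.1.2.bis}.

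For part (iii) I would fix $z$ with $\card z\le r^*$ and invoke Remark~\ref{rem:rr10}(b). The continuous map $\mu\mapsto\mu-\card z\,\Psi(\mu)$ equals $-\card z\le 0$ at $\mu=0$ and equals $\Psi(\mu^*)\,(r^*-\card z)\ge 0$ at $\mu=\mu^*$, so by the intermediate value theorem the equation $\card z\,\Psi(\mu)=\mu$ of \eqref{eq:rn9.0.0.bis.bis} has at least one solution; being the zero set of a continuous function and bounded below by $0$, the set of solutions is closed and has a least element $\mu_z$. Every solution $\mu$ satisfies $\card z=\mu/\Psi(\mu)$, so \eqref{eq:rr9.0} holds with that $\mu$ and Theorem~\ref{th:cluster}(ii) gives $\card{\beta P_\Lambda(z)},\card{\beta p(z)}\le\mu$; choosing $\mu=\mu_z$ produces the sharpest such bound, which is the assertion \eqref{eq:rn9.0.0.bis}.

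I do not anticipate any real obstacle: the content is entirely contained in Theorem~\ref{th:cluster}. The only points that merit an explicit line are that the solution set in (iii) is closed, so that the smallest solution $\mu_z$ exists (with $\mu_z=0$ in the degenerate case $z=0$, where both pressures vanish), and that the uniformity in $\Lambda$ asserted implicitly throughout is inherited verbatim from the corresponding statements in Theorem~\ref{th:cluster}.
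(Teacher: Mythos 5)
Your proposal is correct and follows exactly the route the paper intends: the corollary is presented as an immediate consequence of Theorem~\ref{th:cluster} combined with observations (a) and (b) of Remark~\ref{rem:rr10}, which is precisely how you argue (choosing $\mu=\mu^*$ for (i)--(ii) and the smallest fixed point $\mu_z$ for (iii)). The extra care you take about the existence of the smallest solution $\mu_z$ is a fine addition but does not change the substance.
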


\subsection{More detailed cluster convergence bounds}\label{ssec:rr.10}

For the purposes of this paper it is important to highlight some refinements of the above results that are, usually, consider technical details hidden in the proof of the convergence criteria.   These refinements play an important role in the analysis of virial expansions presented below, and help to understand approximate schemes yielding criteria that are weaker but computationally more feasible than the convergence condition \eqref{eq:rr9.0}.  We start by pointing out the well known fact that the cluster expansion takes the form
\begin{equation}\label{eq:rr31.0}
\ln \Xi_\Lambda(z)\;=\; \sum_{n\ge 1} \frac{z^n}{n!} \int_{\Lambda^n} \omega^T_n(x_1,\ldots, x_n)\,dx_1\cdots dx_n
\end{equation}
Let $\mathcal G(n)$ denote the complete graph with vertices $1,\cdots,n$.  Then 
\begin{equation}\label{eq:rr32}
\omega^T_n(x_1,\ldots, x_n)\;=\; 
\left\{\begin{array}{cl}
\displaystyle 1 & n=1\\[5pt]
\displaystyle
\sum_{G\subset_{cc} \mathcal G(n)} \prod_{\{i,j\}\in E(G)} [w(x_i,x_j)-1] & n\ge 2
\end{array}\right.
\end{equation}
with $w(x_i,x_j)=-\eee^{\beta \phi(x_i,x_j)}$ and
``$\subset_{cc}$" standing for ``connected spanning subgraph of" and $E(G)$ being the set of edges of $G$.  The refinements of interest here involve sums over trees.  

Let us denote $\mathcal T^0[n]$ the set of trees with vertices $\{0,1,\ldots,n\}$ rooted in 0.  Each tree $\tau\in \mathcal T^0[n]$ defines the corresponding \emph{tree distance} which, in turns, gives rise to the obvious definition of ancestors and descendants of a vertex and, in particular, the notion of parent (= immediately preceding ancestor) and child (= immediate descendant) of a vertex.  For a vertex $i$ we shall denote $s_i$ the number of its children (= \emph{siblings}), and label those  by a double index: $(i,1),\ldots,(i,s_i)$.  Denote
\begin{equation}\label{eq:rn5.gen.1}
B_n(x_0;x_1,\ldots,x_n)\;=\; \sum_{\tau\in\mathcal T^0[n]} \prod_{i=0}^{n}
\varphi_{s_i(\tau)}(x_i;x_{(i,1)},\ldots,x_{(i,s_i(\tau))})
\end{equation}
and
\begin{equation}\label{eq:rr41.1.gg}
\Beta( z)\;:=\; 1+\sum_{n\ge 1} \frac{{ z}^n}{n!} \int_{S^n} B_{n}(0;x_1,\ldots, x_n)\,d\,x_1\cdots d\,x_n
\end{equation}
(\emph{tree sum}).

The following is the strengthening of Theorem \ref{th:cluster} which is actually proven below in an even more general version.
 \begin{theorem}\label{th:cluster.1} 

Assume $z\in\mathbb C$ satisfies condition \eqref{eq:rr9.0}, namely
$\card z\le \mu/\Psi(\mu)$ for some $\mu>0$.  Then, the following inequalities hold:   
\begin{equation}\label{eq:rr300.ex}
 \card{\beta P_\Lambda(z)},\,  \card{\beta p(z)} \;\le\; \card z\,\Beta(\card z)\;\le\;\mu\;.
\end{equation}

\end{theorem}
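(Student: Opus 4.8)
The plan is to establish the two bounds in \eqref{eq:rr300.ex} in turn: first $\card{\beta P_\Lambda(z)},\card{\beta p(z)}\le\card z\,\Beta(\card z)$, by majorizing the cluster series term by term by the tree series, and then $\card z\,\Beta(\card z)\le\mu$, by a monotone fixed-point argument exploiting the self-similar structure of $\Beta$. Throughout, repulsivity ($\phi\ge0$) enters only through $\eee^{-\beta\phi}\in[0,1]$ and $1-\eee^{-\beta\phi}\in[0,1]$; hence $\varphi_n\ge0$, $\tg(n)\ge0$, $B_n\ge0$, and all Taylor coefficients of $\Psi$ and of $\Beta$ are nonnegative.

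\emph{Step 1: from connected graphs to trees.} By \eqref{eq:rr31.0} one has $b_n(\Lambda)=\card\Lambda^{-1}\int_{\Lambda^n}\omega^T_n\,dx$, so the first bound rests on the pointwise Penrose-type inequality
\[
\card{\omega^T_n(x_1,\ldots,x_n)}\;\le\;B_{n-1}(x_1;x_2,\ldots,x_n)\qquad(n\ge1).
\]
I would prove it by applying a partition scheme to the connected-subgraph sum \eqref{eq:rr32}: send each connected spanning subgraph of $\mathcal G(n)$ to its Penrose tree $\tau$ (rooted, say, at $1$), group the subgraphs by $\tau$, and bound the modulus of each group's contribution by keeping the tree-edge factors $\prod_{\{i,j\}\in\tau}(1-\eee^{-\beta\phi_{ij}})$ and, among the non-tree pairs declared ``addable'' by the scheme, only the sibling ones $\prod_{\mathrm{siblings}}\eee^{-\beta\phi_{ij}}$, while the remaining addable factors $\eee^{-\beta\phi_{ij}}\in[0,1]$ are bounded by $1$. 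Regrouping the retained factors vertex by vertex turns the right-hand side into $\sum_\tau\prod_{i}\varphi_{s_i(\tau)}(x_i;\,\mathrm{children})=B_{n-1}(x_1;x_2,\ldots,x_n)$ after the relabelling $\{1,\ldots,n\}\to\{0,\ldots,n-1\}$. This is precisely the estimate that the more general Theorem~\ref{th:cluster.gen} is designed to furnish, and here I would simply invoke it. Integrating, using translation invariance of $B_{n-1}$ and $B_{n-1}\ge0$,
\[
\card{b_n(\Lambda)}\;\le\;\int_{\mathbb R^{d(n-1)}}B_{n-1}(0;x_1,\ldots,x_{n-1})\,dx\;=:\;c_{n-1}\qquad(c_0:=1),
\]
so that $\Beta(t)=\sum_{m\ge0}c_m\,t^{m}/m!$ and, for every $\Lambda$ and every $z$,
\[
\card{\beta P_\Lambda(z)}\;\le\;\sum_{n\ge1}\frac{\card z^{n}}{n!}\,\card{b_n(\Lambda)}\;\le\;\sum_{n\ge1}\frac{\card z^{n}}{(n-1)!}\,c_{n-1}\;=\;\sum_{m\ge0}\frac{\card z^{m+1}}{m!}\,c_m\;=\;\card z\,\Beta(\card z),
\]
the second inequality using merely $1/n!\le1/(n-1)!$. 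Letting $\Lambda\uparrow\mathbb R^d$ carries the same bound to $\card{\beta p(z)}$.

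\emph{Step 2: bounding the tree sum by $\mu$.} Splitting a rooted tree at its root shows that the depth-$\le k$ truncations $\Beta^{(k)}$ of $\Beta$ (contributions of trees whose root-to-leaf distance is at most $k$) satisfy, for $t\ge0$,
\[
\Beta^{(0)}(t)=1,\qquad\Beta^{(k+1)}(t)=\Psi\bigl(t\,\Beta^{(k)}(t)\bigr),
\]
because each of the $s$ children of the root carries weight $t$ and is itself the root of an independent depth-$\le k$ subtree, while the sibling factors stay inside the root's $\varphi_s$; by monotone convergence $\Beta^{(k)}(t)\uparrow\Beta(t)$ and $\Beta(t)=\Psi(t\,\Beta(t))$. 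Now fix $\mu>0$ with $\card z\,\Psi(\mu)\le\mu$ and set $t=\card z$. Since $\Psi(\mu)\ge1$, $t\le t\,\Psi(\mu)\le\mu$, so $t\,\Beta^{(0)}(t)=t\le\mu$; and, $\Psi$ being nondecreasing on $[0,\infty)$, from $t\,\Beta^{(k)}(t)\le\mu$ we get $t\,\Beta^{(k+1)}(t)=t\,\Psi\bigl(t\,\Beta^{(k)}(t)\bigr)\le t\,\Psi(\mu)\le\mu$. By induction $t\,\Beta^{(k)}(t)\le\mu$ for all $k$, hence $\card z\,\Beta(\card z)=\lim_{k}t\,\Beta^{(k)}(t)\le\mu$. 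In particular $\Beta(\card z)<\infty$, which makes the majorant series of Step 1 convergent uniformly in $\Lambda$; this both legitimizes the manipulations there and gives the asserted uniform convergence of the expansions, completing the proof of \eqref{eq:rr300.ex}.

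\emph{Expected main obstacle.} The only genuinely substantial point is the partition-scheme inequality of Step 1 — organizing the sign cancellations in \eqref{eq:rr32}, for positive two-body interactions, so that the surviving terms assemble exactly into the nonnegative tree weights $B_{n-1}$. This combinatorial core is exactly what Theorem~\ref{th:cluster.gen} provides; granted it, what remains is the elementary term-by-term majorization and the monotone iteration above.
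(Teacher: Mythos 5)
Your proposal is correct and follows essentially the same route as the paper: the Penrose partition scheme retaining only sibling factors to get $\card{\omega^T_{n+1}}\le B_n$ (the paper's Proposition \ref{prop:rr.1}), the pinning/translation-invariance step to reach $\card z\,\Beta(\card z)$, and the depth-truncation recursion $\Beta^{(k+1)}(t)=\Psi\bigl(t\,\Beta^{(k)}(t)\bigr)$ with monotone induction, which is exactly the scalar form of the paper's Proposition \ref{prop:proof.1} and the chain $\boldPsi^{N}(\boldzero)\le\boldPsi^{N}(\boldmu)\le\boldmu$ in Theorem \ref{th:cluster.gen}. The only (immaterial) differences are that you work directly in the translation-invariant scalar setting rather than the general measure-space framework, and you absorb the volume factor via $1/n!\le 1/(n-1)!$ instead of the paper's symmetrized pinned expansion.
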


As a matter of fact, the proof reveals a sequence of improvements of the rightmost bound.  While these improvements have not been exploited yet, let us state them here for the sake of completeness.
 \begin{theorem}\label{th:cluster.2} 
 Consider $z\in\mathbb{C}$ satisfying  \eqref{eq:rr9.0}.  Define the function $\Pi_z:[0,\infty)\to[0,\infty)$
 \begin{equation}\label{eq:rr301}
 \Pi_z(x)\;=\; \card z\,\Psi(x)
\end{equation}
and let $\Pi_z^n$ denote its $n$-fold composition of $\Pi_z$ with itself. Then,
 \begin{itemize}
 \item[(i)] The sequence $(\Pi_z^n)_{n\ge 1}$ is strictly decreasing and converges to a limit $\Pi_z^\infty$ which is a fix point of $\Pi_z$.
 
  \item[(ii)] The following sequence of bounds hold
  \begin{equation}\label{eq:rr302}
  \card z\,\Beta(\card z)\;\le\;\Pi_z^\infty(\mu) \;\le\; \cdots \;\le\;\Pi_z^n(\mu)
  \;\le\;\cdots\;\le\; \Pi_z(\mu)\;\le\; \mu\;.
\end{equation}
 \end{itemize}
 \end{theorem}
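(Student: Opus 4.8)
The plan is to deduce Theorem~\ref{th:cluster.2} from two ingredients: the monotone structure of the iteration map $\Pi_z$, and a self-consistency relation for the tree sum $\Beta$ that is already implicit in the partition-scheme proof of Theorem~\ref{th:cluster.1}.

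First I would record the self-consistency relation. Decomposing each rooted tree $\tau\in\mathcal T^0[n]$ according to the children $y_1,\ldots,y_{s_0}$ of the root $0$ and the subtrees hanging from them, one observes that the only factor $\varphi_{s_i(\tau)}$ shared by two different such subtrees (or linking the root to a subtree) is $\varphi_{s_0}(0;y_1,\ldots,y_{s_0})$, which carries exactly the root-to-child weights $[1-e^{-\beta\phi(0,y_j)}]$ and the sibling repulsions $e^{-\beta\phi(y_a,y_b)}$ among the $y$'s; all remaining factors are internal to a single subtree. Combining this factorization with translation invariance (each subtree rooted at $y_j$ contributes $\card z\,\Beta(\card z)$, independently of $y_j$) and the standard bookkeeping of the $1/n!$ against the distribution of the non-root vertices among the subtrees, one gets, for $\card z$ satisfying~\eqref{eq:rr9.0} so that all series converge by Theorem~\ref{th:cluster.1},
\begin{equation*}
\Beta(\card z)\;=\;\Psi\bigl(\card z\,\Beta(\card z)\bigr)\;,
\end{equation*}
i.e.\ $F=\Pi_z(F)$ for $F:=\card z\,\Beta(\card z)$. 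For the argument below only the inequality $F\le\Pi_z(F)$ is actually needed, which is the more robust form. Theorem~\ref{th:cluster.1} moreover gives $F\le\mu$, and $\Psi(F)\le\Psi(\mu)<\infty$ since $\Psi$ has nonnegative Taylor coefficients $\tg(n)\ge0$.

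For part~(i): because $\tg(n)\ge0$, the map $\Pi_z(x)=\card z\,\Psi(x)$ is continuous and non-decreasing on $[0,\infty)$ (strictly increasing in the non-trivial case of Remark~\ref{rem:rr10}). Condition~\eqref{eq:rr9.0} reads $\Pi_z(\mu)\le\mu$, and applying $\Pi_z$ repeatedly, monotonicity yields by induction $\Pi_z^{\,n+1}(\mu)\le\Pi_z^{\,n}(\mu)$ for all $n$; the sequence $(\Pi_z^{\,n}(\mu))_{n\ge1}$ is thus non-increasing and bounded below by $0$, hence converges to some $\Pi_z^\infty(\mu)\ge0$, and passing to the limit in $\Pi_z^{\,n+1}(\mu)=\Pi_z(\Pi_z^{\,n}(\mu))$ and using continuity shows $\Pi_z^\infty(\mu)$ is a fixed point of $\Pi_z$. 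The decrease is strict at each step whenever $\mu$ is not already a fixed point of $\Pi_z$ and the interaction is non-trivial; in the degenerate case $\Pi_z(\mu)=\mu$ the whole sequence is constant and all inequalities below become equalities.

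For part~(ii): from $F\le\mu=\Pi_z^{\,0}(\mu)$ together with $F\le\Pi_z(F)$ and monotonicity of $\Pi_z$, induction on $n$ gives $F\le\Pi_z(F)\le\Pi_z\bigl(\Pi_z^{\,n}(\mu)\bigr)=\Pi_z^{\,n+1}(\mu)$, hence $F\le\Pi_z^{\,n}(\mu)$ for every $n\ge0$ and therefore $F\le\Pi_z^\infty(\mu)$. Chaining this with $\Pi_z^\infty(\mu)\le\cdots\le\Pi_z^{\,n}(\mu)\le\cdots\le\Pi_z(\mu)\le\mu$ from part~(i) and~\eqref{eq:rr9.0} yields exactly~\eqref{eq:rr302}, and since $\card{\beta P_\Lambda(z)}$ and $\card{\beta p(z)}$ are bounded by $\card z\,\Beta(\card z)=F$ by Theorem~\ref{th:cluster.1}, the improved bounds propagate to the pressures. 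The one delicate point is the combinatorial bookkeeping behind the self-consistency relation --- matching the $1/n!$ with the partitioning of the non-root vertices into subtrees and checking that no spurious $\varphi$-factor survives the decomposition --- but this is precisely the identity underlying the partition-scheme proof of Theorem~\ref{th:cluster.1}, so I would simply quote it from there rather than reprove it.
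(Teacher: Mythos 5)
Your proof is correct, and it reorganizes the paper's argument rather than reproducing it. Both proofs ultimately rest on the same combinatorial fact --- the decomposition of a rooted tree into the root's children plus the subtrees hanging from them, which is Proposition \ref{prop:proof.1} in the paper --- but they deploy it differently. The paper identifies the $N$-fold iterate started from zero with the tree sum truncated at depth $N$, i.e.\ $\boldBeta^N=\boldPsi^{N+1}(\boldzero)$, lets these increase to $\boldBeta$, and squeezes against the decreasing sequence $\boldPsi^N(\boldmu)$; the bound $\card z\,\Beta(\card z)\le\Pi_z^\infty(\mu)$ then drops out with no external input. You instead pass to the limit in that same recursion to obtain the fixed-point relation $F\le\Pi_z(F)$ for the full sum $F=\card z\,\Beta(\card z)$ and bootstrap downward from the a priori bound $F\le\mu$, which you import from Theorem \ref{th:cluster.1}. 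This is legitimate (Theorem \ref{th:cluster.1} is established independently via Theorem \ref{th:cluster.gen}) and is arguably cleaner to state, but be aware that the imported bound is doing real work: $F\le\Pi_z(F)$ alone does not exclude $F=+\infty$ when one works in $[0,\infty]$, so your argument is not self-contained in the way the paper's iteration from $\boldzero$ is --- the latter yields the finiteness of $\Beta$ as an output rather than requiring it as an input. Your caveat that ``strictly decreasing'' degenerates to a constant sequence when $\mu$ is already a fixed point of $\Pi_z$ is also correct; the theorem's wording is loose on that point.
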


Both Theorems \ref{th:cluster.1} and \ref{th:cluster.2} follow immediately from the more general Theorem \ref{th:cluster.gen} below.  Taken together they 
explain the usual approach of arriving to explicit bounds by weakening the interaction through the remotion of some of its terms.  In doing so, one obtains a larger tree sum $\widetilde\Beta\ge \Beta$ whose convergence provides a sufficient criterion.

\begin{corollary}\label{cor:cluster.1}
Let $\phi$ and $\widetilde\phi$ be two-body positive functions not identically zero such that 
 \begin{equation}\label{eq:rr303}
 0\;\le\; \widetilde\phi(x,y) \;\le\; \phi(x,y) \qquad,\quad x,y\in S\;.
 \end{equation}
Denoting with tildes objects corresponding to $\widetilde\phi$, we have 
\begin{itemize}
 \item[(i)] The radii of convergence of the cluster expansions \eqref{eq:rr.fce} and \eqref{eq:rr.ce}  are bounded below by
\begin{equation}\label{eq:rn9.1.bis}
\mathcal{R}_{\rm cluster}\;\ge\;  \widetilde r^* \;:=\; \max_{\mu\ge 0} \frac{\mu}{ \widetilde\Psi({\mu})}\;,
\end{equation}

 \item[(ii)] For $\card z\le \widetilde r^*$ we have
 \begin{eqnarray}\label{eq:rr300}
 \card{\beta P_\Lambda(z)},\,  \card{\beta p(z)} &\le& \card z\,\Beta(\card z)
 \;\le\; \card z\,\widetilde\Beta(\card z)\nonumber\\
 &\le& \widetilde\Pi_z^\infty(\mu)
 \;\le\; \cdots \;\le\;\widetilde\Pi_z^n(\mu)
  \;\le\;\cdots\;\widetilde\Pi^2_z(\mu)\;\le\; \widetilde\Pi_z(\mu)\;\le\; \mu\;.
\end{eqnarray}
\end{itemize}
\end{corollary}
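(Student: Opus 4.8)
The plan is to obtain Corollary~\ref{cor:cluster.1} by running the detailed cluster bounds already in hand --- Corollary~\ref{cor:rr1}(i), Theorem~\ref{th:cluster.1} and Theorem~\ref{th:cluster.2} --- for the auxiliary interaction $\widetilde\phi$, and then splicing the result to the corresponding $\phi$-bounds through the monotonicity comparison announced in the paragraph preceding the statement. First I would record that comparison. The integrand $\varphi_n$ of \eqref{eq:rn5.1} is \emph{not} monotone in $\phi$: the edge factors $1-\eee^{-\beta\phi(x,x_i)}$ grow with $\phi$ while the sibling factors $\eee^{-\beta\phi(x_i,x_j)}$ decay. What one is entitled to do is to retain the full $\phi$ only in the tree-edge factors and discard the surplus $\phi-\widetilde\phi$ in the remaining (sibling/non-tree) factors, replacing $\eee^{-\beta\phi(x_i,x_j)}$ by the larger $\eee^{-\beta\widetilde\phi(x_i,x_j)}$; this enlarges every tree weight $\prod_i\varphi_{s_i}$, hence $B_n$, $\tg(n)$, $\Beta$ and $\Psi$, and the tilde objects $\widetilde\Beta,\widetilde\Psi,\widetilde r^*$ of the corollary are read as those so obtained. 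With this reading, \eqref{eq:rr303} gives $\tg(n)\le\widetilde\tg(n)$ and $B_n\le\widetilde B_n$ pointwise, hence $\Psi(x)\le\widetilde\Psi(x)$ and $\Beta(x)\le\widetilde\Beta(x)$ on $[0,\infty)$, and therefore $r^*=\max_{\mu\ge0}\mu/\Psi(\mu)\ge\max_{\mu\ge0}\mu/\widetilde\Psi(\mu)=\widetilde r^*$.

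Given the comparison the rest is bookkeeping. Part~(i) drops out: Corollary~\ref{cor:rr1}(i) applied to $\phi$ gives $\mathcal{R}_{\rm cluster}\ge r^*\ge\widetilde r^*$, which is \eqref{eq:rn9.1.bis}. For part~(ii), fix $z$ with $\card z\le\widetilde r^*$ and set $\mu:=\argmax_{x\ge0}x/\widetilde\Psi(x)$, which exists by Remark~\ref{rem:rr10} applied to $\widetilde\phi$ (legitimate since $\widetilde\phi\not\equiv0$). Then $\card z\le\widetilde r^*=\mu/\widetilde\Psi(\mu)$, so $z$ satisfies \eqref{eq:rr9.0} relative to $\widetilde\Psi$; and since $\Psi\le\widetilde\Psi$ we also get $\card z\le\mu/\widetilde\Psi(\mu)\le\mu/\Psi(\mu)$, so $z$ satisfies \eqref{eq:rr9.0} relative to $\Psi$ with the \emph{same} $\mu$. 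Now Theorem~\ref{th:cluster.1} for $\phi$ gives $\card{\beta P_\Lambda(z)},\card{\beta p(z)}\le\card z\,\Beta(\card z)$; the comparison gives $\card z\,\Beta(\card z)\le\card z\,\widetilde\Beta(\card z)$; and Theorem~\ref{th:cluster.2}(ii) for $\widetilde\phi$, with $\widetilde\Pi_z(x)=\card z\,\widetilde\Psi(x)$ and this $\mu$, gives $\card z\,\widetilde\Beta(\card z)\le\widetilde\Pi_z^\infty(\mu)\le\cdots\le\widetilde\Pi_z^n(\mu)\le\cdots\le\widetilde\Pi_z(\mu)\le\mu$. Concatenating these inequalities is exactly \eqref{eq:rr300}.

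The one genuine point, and the step I expect to be delicate, is the monotonicity claim: one must verify that the ``wrong-sign'' edge factors do not spoil the comparison once one sums over all trees in $\mathcal T^0[n]$. For $n=2$ this is already transparent --- writing $a=1-\eee^{-\beta\phi(x_0,x_1)}$, $b=1-\eee^{-\beta\phi(x_0,x_2)}$, $c=1-\eee^{-\beta\phi(x_1,x_2)}$, one has $B_2(x_0;x_1,x_2)=ac+bc+ab(1-c)$, which is monotone in each of $a,b,c$ even though the summand $ab(1-c)$ decreases in $c$, since the gain $c(a+b)$ dominates the loss $abc$ because $a+b\ge ab$ on $[0,1]$. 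For general $n$ this monotonicity is (a special case of) what Theorem~\ref{th:cluster.gen} encodes, which is precisely why the corollary is advertised as following from it immediately; everything downstream of the comparison uses only Remark~\ref{rem:rr10} and the already-stated Theorems~\ref{th:cluster.1}--\ref{th:cluster.2}.
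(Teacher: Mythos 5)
Your main argument is correct and is essentially the paper's own route: the corollary is stated without a separate proof precisely because, once the tilde objects are understood as obtained by weakening only the non-tree (sibling) factors $\eee^{-\beta\phi(x_i,x_j)}\to\eee^{-\beta\widetilde\phi(x_i,x_j)}$ while retaining $\phi$ in the factors $1-\eee^{-\beta\phi(x,x_i)}$, the term-by-term inequalities $\Psi\le\widetilde\Psi$ and $\Beta\le\widetilde\Beta$ are immediate, and everything else follows by running Corollary \ref{cor:rr1}(i) for $\phi$ and Theorems \ref{th:cluster.1}--\ref{th:cluster.2} with the enlarged weights, exactly as you do. (The general machinery behind those theorems, Proposition \ref{prop:proof.1}, applies verbatim to these hybrid vertex weights, since it only requires nonnegative weights symmetric in the non-root arguments.) Your identification of the correct reading of ``objects corresponding to $\widetilde\phi$'' is a clarification the paper leaves implicit, and it is forced: substituting $\widetilde\phi$ literally everywhere gives $\widetilde g(1)=\widetilde C(\beta)\le C(\beta)$, hence $\widetilde\Psi\not\ge\Psi$, and \eqref{eq:rn9.1.bis} would fail for small $\widetilde\phi$.

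Your closing paragraph, however, is off the mark and should be deleted or rewritten. Under the reading you adopted there is no ``delicate monotonicity'' left to check: each summand of $B_n$ increases individually because only the sibling factors are modified and they increase pointwise, so no cancellation across the sum over $\mathcal T^0[n]$ is involved. Worse, the monotonicity you verify for $n=2$ --- that $B_2=ac+bc+ab(1-c)$ is increasing in each of $a,b,c$ --- pertains to the literal substitution and runs in the \emph{opposite} direction to what the corollary needs: with $\widetilde\phi\le\phi$ one has $\tilde a\le a$, $\tilde b\le b$, $\tilde c\le c$, so that monotonicity would yield $\widetilde B_2\le B_2$, i.e.\ $\widetilde\Beta\le\Beta$, contradicting the chain \eqref{eq:rr300}. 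Since that computation is not actually used in your derivation, the proof stands; but had the comparison genuinely required it, you would have been establishing the wrong inequality.
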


\begin{remark}
A perhaps less known fact is the existence of rigorous \emph{upper} bounds of the radius of convergence of the cluster expansion.  Penrose \cite{pen63} obtained the sequences of bonds (valid more generally for stable two-body potentials)
\begin{equation}\label{eq:rrp1.pen}
\mathcal{R}_{\mathrm{cluster}}\;\le\; \Bigl[\frac{n}{(n-1)\,(b_n/n!)}\Bigr]^{1/(n-1)}
\end{equation}
(with analogous expressions for finite $\Lambda$).  This sequence of bound is sharp in that it has a subsequence converging to the actual radius  convergence.  The bound for $n=2$ can be strengthened \cite{pen63,gro62}
\begin{equation}\label{eq:rrp2}
\mathcal{R}_{\mathrm{cluster}}\;\le\; \frac{1}{\card{b_2}}\;=\; \frac{1}{C(\beta)}\;.
\end{equation}

\end{remark}

\subsection{Applications of cluster-expansion results}\label{sec:cluster.app}

Let us exploit Corollary \ref{cor:cluster.1}.

\begin{application}[{\bf The classical bound}]\label{app:rr.1}
The simplest application is obtained through the bound
\begin{equation}\label{eq:rr10.1} 
\tg(n) \;\le\; \bigl[C(\beta)\bigr]^n\;,
\end{equation}
obtained by bounding above $\eee^{-\beta\phi(x_i,x_j)}\le 1$ in \eqref{eq:rn5.1}.  In this case \eqref{eq:rn9.1.bis} implies 
 \begin{equation}\label{eq:rn.12}
\mathcal{R}_{\rm cluster}\;\ge\; r_1\;:=\;\max_{\mu\ge 0} \mu\,\eee^{-\mu C(\beta)}\;=\; \frac{1}{\eee\, C(\beta)}\;=\;\frac{0.367879}{C(\beta)}\;.
\end{equation}
This is the classic bound, given for instance in \cite[Chapter 4]{Rue69}, which has remained the only one available for positive interaction, except for the case of hard spheres \cite{fps} (in particular, it coincides with the bound given in \cite{proyuh17} for interactions with no negative part).  Part (ii) of the Corollary \ref{cor:cluster.1} yields the bounds
\begin{equation}\label{eq:rr21} 
\card{\beta P_\Lambda(z)},\card{ \beta p(z)} \;\le\;  -\frac{1}{C(\beta)}\,W\bigl(-C(\beta)\card z\bigr)\;\le\;
\frac{1}{C(\beta)}\qquad,\quad \mbox{for } \card z\le \frac{1}{\eee\, C(\beta)}\;.
\end{equation}
Here $W$ is Lambert's function (inverse of $x\eee^x$). 
\end{application}

\begin{application}[{\bf First correction to the classical bound}]\label{app:rr.2}
A further level of approximation is obtained by bounding by 1 some, rather than all, factors $\eee^{-\beta\phi(x_i,x_j)}$ in \eqref{eq:rn5.1}.  For instance, if this is done for all factors with $1\le i\le n$ and $n+1\le j\le n+m$ we obtain the inequality
\begin{equation}\label{eq:rr13}
\tg(n+m)\;\le\; \tg(n)\,\tg(m)\;.
\end{equation}
In particular,
\begin{eqnarray}\label{eq:rr14}
\tg(n)\;\le\; \left\{\begin{array}{ll}
[\tg(2)]^{n/2} & \mbox{if $n$ is even}\\[5pt] 
C(\beta)\,[\tg(2)]^{(n-1)/2} & \mbox{if $n$ is odd}
\end{array}\right.
\end{eqnarray}
These bounds lead, by \eqref{eq:rn9.1.bis}, to the condition
\begin{equation}\label{eq:rr15}
\mathcal{R}_{\rm cluster}\;\ge\; \max_{\mu\ge 0} \frac{\mu}{\cosh\bigl(\mu \sqrt {\tg(2)}\bigr)
+ \frac{C(\beta)}{\sqrt {\tg(2)}}\,\sinh\bigl(\mu \sqrt {\tg(2)}\bigr)}\;.
\end{equation}
To get an idea of how this compares with the classical bound \eqref{eq:rn.12}, we simply choose $\mu=1/\sqrt {\tg(2)}$.  This yields
\begin{equation}\label{eq:rr16}
\mathcal{R}_{\rm cluster}\;\ge\; r_2\;:=\;
\frac{1}{ \sqrt {\tg(2)}\,\eee +\bigl[C(\beta) - \sqrt {\tg(2)}\bigr]\sinh(1)}\;.
\end{equation}
The improvement with respect to the previous application is made explicit by the relation
\begin{equation}\label{eq:rr17}
\frac{r_2}{r_1}\;=\;\frac{1}{\delta+\frac{(1-\delta)}{2}(1-\eee^{-2})}\quad,\quad \delta=\frac{\sqrt {\tg(2)}}{C(\beta)}\;.
\end{equation}
For the extreme case of hard discs in $\mathbb{R}^2$ we have $\delta=3^{3/4}/(2\sqrt\pi)$ \cite{fps} and
$r_2/r_1\sim 1.25$, while the full optimization of the ratio $\mu/\Psi(\mu)$ leads to a convergence radius $1.39 \,r_1$ \cite{fps}.
By (ii) of Corollary \ref{cor:cluster.1} we also have the bounds $\beta P_\Lambda(z), \beta p(z) \le1/\sqrt { g(2)}=1/[\delta C(\beta)]$.  These are bounds larger than those in \eqref{eq:rr21}, but they apply uniformly in the larger region $\card z\le r_2$.

\end{application}

\begin{application}[{\bf Hard spheres}]\label{app.hard}
Models with pure hard-core interactions are the examples with maximal repulsion and, thus, those that should exhibit more clearly the improvements of our formula.  In the case of hard spheres, the interactions are of the form
\begin{equation}
\phi(x,y):=\left\{\begin{matrix}
+\infty &\, \text{if}\card{x-y}\le a \\ 
 0& \, \text{if}\card{x-y}>a
\end{matrix}\right.,
\end{equation}
for some hard-core radius $a>0$.  For these models $C(\beta)=V_d(a)$, the volume of the $d$-dimensional sphere of radius $a$, and
the coefficients $\tg(n)$ of the function $\Psi$ are of the form 
\begin{equation}\label{eq:rr240}
\tg(n)\;:=\; [V_d(a)]^n\,\wg_d(n)
\end{equation}
where $\wg_d(n)$ is a purely geometrical factor that depends on the dimension $d$ but not on the radius $a$:
\begin{equation}\label{eq:rr241}
\wg_d(n)\;:=\; \frac{1}{[V_d(1)]^{n}}\,\int_{\card{y_i}\le 1 \atop \card{y_i-y_j}>1} dy_1\cdots dy_n\;.
\end{equation}
As a consequence, the bound \eqref{eq:rn9.1.bis} becomes
\begin{equation}\label{eq:rr221.1.1hard}
\mathcal{R}_{\mathrm{cluster}}\;\ge\; R_d^{\mathrm{hard}}\;:=\;
 \frac{1}{V_d(a)}\,\max_{\alpha\ge 0}\,
\frac{\alpha }{\widehat\Psi_d(\alpha)}
\;.
\end{equation}
where the function
\begin{equation}\label{eq:rr221.2.1hard}
 \widehat\Psi_d(\mu) \;=\; 1+\sum_{n\ge 1} \frac{\mu^n}{n!}\,\wg_d(n)
 \end{equation}
 is in fact a polynomial with degree equal to the maximum number of non-overlapping unit spheres that can be made to overlap a fixed sphere.

The case $d=1$ corresponds to hard intervals in a line and one easily finds 
\begin{equation}\label{eq:rr221.3.1hard}
\widehat\Psi_1(\alpha)=1+ \alpha+\frac18 \alpha^2\;,
 \end{equation}
 hence $\alpha/\widehat\Psi_d(\alpha)$ achieve its maximum at $\alpha=\sqrt 8$ and 
 \begin{equation}\label{eq:rr221.4.1hard}
 R_1^{\mathrm{hard}}\;=\; \frac{1}{V_1(a)}\,\frac{\sqrt 2}{1+\sqrt 2}\;=\;\frac{0.5858}{V_1(a)}\;,
 \end{equation}
 60\% bigger than the classical bound \eqref{eq:rn.12}.  In fact, the model is solvable and the exact radius of convergence of the cluster expansion is $0.736/V_1(a)$ \cite{pen63}.
 
 The case $d=2$ was worked out in \cite{fps}; formula \eqref{eq:rr221.1.hard} leads to $R_2^{\mathrm{hard}} \gtrapprox 0.5107/V_2(a)$; almost 40\% above the classical bound.
 \end{application}
 
\begin{application}[{\bf Power-law potentials}]\label{app.power}
There has been recent interest \cite{richard,Barlow12, tai2011} in the phase diagram of repulsive potentials of the form
\begin{equation}\label{eq:rr1.power}
\phi_n(x,y)=\varepsilon\left(\frac{\sigma}{\card{x-y}}\right)^n
\end{equation}
To test condition \eqref{eq:rn9.1.bis} we considered the case $\beta=\epsilon=\sigma=1$, $n=6$ and $d=3$. To determine the function $\Psi$ we computed up to four decimal places
\begin{equation}\label{eq:rr2.power}
 g(2)= 0.6917C(1)\ ,\  g(3)= 0.3685C(1)\ ,\  g(4)= 0.145C(1)\ ,\   g(5)= 0.0627C(1)\;,
\end{equation}
with 
\begin{equation}\label{eq:rr3.power}
g(1)\;=\;C(1)\;=\;4\pi\int\limits_{r>0}\left[1-\eee^{-\left(\frac{1}{r}\right)^n}\right]r^2dr,
\end{equation}
and bounded $g(6)$ to $g(14)$ combining the values in \eqref{eq:rr2.power} with inequality \eqref{eq:rr13}.  We obtain, numerically, $R^*\gtrapprox 0.428/C(1)$.

 \end{application}

\subsection{Convergence of more general cluster expansions}\label{ssec:cluster.general}

Cluster expansions can be generalized in two directions: (1) Considering measure spaces other than $(\mathbb{R}^d,dx)$, and (2) considering complex-valued interactions.  Our convergence proof, in fact, is done in this doubly generalized framework.   More precisely, we consider a measure space $(S,\nu)$ ---with $\nu$ possibly complex valued---and complex-valued measurable functions $w(\cdot,\cdot): S^2\to \{u\in\mathbb{C}: \card u\le 1\}$ invariant under the exchange of arguments.  The corresponding generalized grand-canonical partition functions are of the form
\begin{equation}\label{eq:rr30}
\Xi_\Lambda(\nu)\;=\; 1+ \sum_{n\ge 1} \frac{1}{n!} \int_{\Lambda^n} \omega_n(x_1,\ldots,x_n)\,\nu(dx_1)\cdots \nu(dx_n)
\end{equation}
for each $\Lambda\subset S$ of finite $\card\nu$ measure, with
\begin{equation}\label{eq:rr30.1}
\omega_n(x_1,\ldots,x_n)\;=\;\prod_{1\le i <j\le n} w(x_i,x_j)\;.
\end{equation}
The corresponding cluster expansion takes the form
\begin{equation}\label{eq:rr31}
\ln \Xi_\Lambda(\nu)\;=\; \sum_{n\ge 1} \frac{1}{n!} \int_{\Lambda^n} \omega^T_n(x_1,\ldots, x_n)\,\nu(dx_1)\cdots \nu(dx_n)
\end{equation}
with truncated functions $\omega^T_n$ defined in \eqref{eq:rr32}. 

Our proof relies on the \emph{pinned cluster expansion}.  This is the function $\Gamma(\nu):S\to \mathbb C$ formally defined by
\begin{equation}\label{eq:rr41}
\boldGamma(\nu)(x)\;:=\; 1+\sum_{n\ge 0} \frac{1}{n!} \int_{S^n} \omega^T_{n+1}(x,x_1,\ldots, x_n)\,\nu(dx_1)\cdots \nu(dx_n)\;.
\end{equation}
This series is term-by-term bounded by the \emph{absolute pinned cluster expansion}
\begin{equation}\label{eq:rr61.ex}
\card{\boldGamma}\!(\nu)(x) \;:=\; 
1+\sum_{n\ge 0} \frac{1}{n!} \int_{S^n} \card{\omega^T_{n+1}(x,x_1,\ldots, x_n)}\,\absnu(dx_1)\cdots \absnu(dx_n)
\end{equation}
This absolute expansion will be  controlled through  the generalized tree sums, which
in our generalized setup take the form
\begin{equation}\label{eq:rr41.1}
\boldBeta(\nu)(x)\;:=\; 1+\sum_{n\ge 1} \frac{1}{n!} \int_{S^n} B_{n}(x;x_1,\ldots, x_n)\,\absnu(dx_1)\cdots \absnu(dx_n)\;.
\end{equation}
with the integrands $B_n$ defined as in \eqref{eq:rn5.gen.1} but with 
\begin{equation}\label{eq:rn5.gen}
\varphi_n(x;x_1,\ldots,x_n)\;=\; 
\prod_{1\le i\le n} \bigl|w(x,x_i)-1\bigr| \prod_{1\le i<j\le n} \card{w(x_i,x_j)}\;.
\end{equation}

The main cluster expansion result of our paper is the following theorem, proven in Section \ref{sec:cluster-proofs} below.

 \begin{theorem}\label{th:cluster.gen}
Consider a triple $(S,d\nu,w)$ as above such that
\begin{equation}\label{eq:rr.17}
\int \card{1-w(x,y)}\, \absnu(dy)\;<\; \infty\;.
\end{equation}
Define a map $\boldPsi:[0,\infty]^S\to [0,\infty]^S$ such that to each non-negative function $\boldmu: S\to [0,\infty]$ associates the function $\boldPsi(\boldmu): S\to [0,\infty]$ defined by
\begin{equation}\label{eq:rr50}
\boldPsi(\boldmu)(x)\;=\; 1 + \sum_{n\ge 1} \frac{1}{n!}  \int\limits_{ S^n} \varphi_n(x;x_1,\ldots, x_n)\,\mu(x_1)\cdots\mu(x_n)\,\absnu(dx_1)\cdots \absnu(dx_n) \;.
\end{equation}
If there exists a function $\boldmu$ such that, pointwisely,
\begin{equation}\label{eq:rr51}
\boldPsi(\boldmu)\;\le\; \boldmu\;,
\end{equation}
the following holds.
\begin{itemize}
\item[(i)] The functions $\boldPsi^n(\boldmu)$ belong to $[0,\infty]^S$ for each $n\ge 1$ and form a  pointwisely decreasing sequence that converges to a function $\boldPsi^\infty(\boldmu)\in  [0,\infty]^S$ which is a fix point of $\boldPsi$.  

\item[(ii)]
The tree sum function defined by \eqref{eq:rr41.1} is pointwise finite and satisfies the following family of pointwise bounds: 
\begin{equation}\label{eq:rr61.0}
\boldBeta(\nu)\;\le\; \boldPsi^\infty(\boldmu)\;\le\;\cdots\;\le\; \boldPsi^n(\boldmu)
\;\le\;\cdots\;\le\; \boldPsi(\boldmu)\;\le\;\boldmu\;.
\end{equation}

\end{itemize}

\end{theorem}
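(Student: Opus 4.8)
The plan is to treat part (i) by a straightforward monotonicity argument and part (ii) by a combinatorial identity for the tree sums that exhibits $\boldBeta$ as a fixed point of the same functional $\boldPsi$, together with a monotone-iteration comparison. For (i), the key observation is that $\boldPsi$ is monotone in the pointwise order: if $\boldmu_1\le\boldmu_2$ pointwise then $\boldPsi(\boldmu_1)\le\boldPsi(\boldmu_2)$, because the integrand $\varphi_n(x;x_1,\ldots,x_n)$ is nonnegative and the series \eqref{eq:rr50} has nonnegative terms. Starting from the hypothesis $\boldPsi(\boldmu)\le\boldmu$ and applying $\boldPsi$ repeatedly, monotonicity yields $\boldPsi^{n+1}(\boldmu)\le\boldPsi^n(\boldmu)$ for every $n$; the sequence is nonincreasing and bounded below by $0$, so it converges pointwise to some $\boldPsi^\infty(\boldmu)\in[0,\infty]^S$. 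To see that the limit is a fixed point I would pass to the limit inside \eqref{eq:rr50} using monotone convergence (the terms decrease to the corresponding terms for $\boldPsi^\infty(\boldmu)$), which gives $\boldPsi\bigl(\boldPsi^\infty(\boldmu)\bigr)=\boldPsi^\infty(\boldmu)$. Finiteness of $\boldPsi(\boldmu)$ at each point — needed so that the sequence lives in $[0,\infty]^S$ and the iteration makes sense — follows from $\boldPsi(\boldmu)\le\boldmu$ together with the standing integrability hypothesis \eqref{eq:rr.17}, which controls the $n=1$ term; the higher terms are then controlled recursively.

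For (ii) the central step is a \emph{recursive identity for the tree sum}. Fixing a root $x=x_0$, each tree $\tau\in\mathcal T^0[n]$ is determined by the set of children of the root, say $s_0=k$ of them, together with, for each child, the subtree hanging from it. Splitting \eqref{eq:rn5.gen.1} accordingly, the root contributes the factor $\varphi_k(x_0;x_{(0,1)},\ldots,x_{(0,k)})$ and each subtree rooted at a child $x_{(0,\ell)}$ contributes exactly a copy of $B_{m_\ell}$ on its own vertex set. Re-summing over the sizes $m_\ell$ and over the labelled partition of the remaining $n-k$ vertices among the $k$ subtrees, and dividing by $n!$, one obtains the functional equation
\begin{equation}\label{eq:Beta-fixed-point-proposal}
\boldBeta(\nu)(x)\;=\; 1+\sum_{k\ge 1}\frac{1}{k!}\int_{S^k}\varphi_k(x;x_1,\ldots,x_k)\,\prod_{\ell=1}^{k}\boldBeta(\nu)(x_\ell)\,\absnu(dx_1)\cdots\absnu(dx_k)\;,
\end{equation}
that is, $\boldBeta(\nu)=\boldPsi\bigl(\boldBeta(\nu)\bigr)$, \emph{provided} the series defining $\boldBeta$ converges so that the rearrangement is legitimate. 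To establish convergence and the bound simultaneously I would argue by a truncation/induction on the number of vertices: let $\boldBeta^{(N)}$ be the partial tree sum over trees with at most $N$ vertices; using \eqref{eq:Beta-fixed-point-proposal} at the level of partial sums one shows $\boldBeta^{(N)}(\nu)\le\boldPsi\bigl(\boldBeta^{(N-1)}(\nu)\bigr)\le\cdots$, and since $\boldPsi(\boldmu)\le\boldmu$, monotonicity of $\boldPsi$ gives $\boldBeta^{(N)}(\nu)\le\boldmu$ uniformly in $N$. Letting $N\to\infty$ by monotone convergence yields both the pointwise finiteness of $\boldBeta(\nu)$ and $\boldBeta(\nu)\le\boldmu$; the full identity \eqref{eq:Beta-fixed-point-proposal} then holds in the limit. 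Finally, since $\boldBeta(\nu)$ is a fixed point of $\boldPsi$ lying below $\boldmu$, monotonicity applied $n$ times gives $\boldBeta(\nu)=\boldPsi^n\bigl(\boldBeta(\nu)\bigr)\le\boldPsi^n(\boldmu)$ for every $n$, and passing to the limit $\boldBeta(\nu)\le\boldPsi^\infty(\boldmu)$; this is precisely the chain \eqref{eq:rr61.0}.

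I expect the main obstacle to be the rigorous justification of the combinatorial rearrangement leading to \eqref{eq:Beta-fixed-point-proposal}, i.e.\ the bookkeeping of rooted labelled trees as a root vertex with a forest of subtrees attached, and the attendant multinomial factors $n!/(k!\,m_1!\cdots m_k!)$ that must combine correctly with the $1/n!$ in \eqref{eq:rr41.1}. One must be careful that the children of the root are \emph{unordered} (hence the $1/k!$) while the vertex labels distributed among the subtrees are ordered, and that every tree is counted exactly once under this decomposition. Everything else — monotonicity of $\boldPsi$, the decreasing iteration, the monotone-convergence limit exchanges — is routine because all integrands and all series coefficients are manifestly nonnegative, so no delicate absolute-convergence estimates beyond \eqref{eq:rr.17} and the hypothesis \eqref{eq:rr51} are needed.
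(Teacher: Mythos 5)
Your proposal is correct and follows essentially the same route as the paper: part (i) is the identical monotonicity-plus-monotone-convergence argument, and the recursive identity \eqref{eq:Beta-fixed-point-proposal} expressing a rooted tree as a root with a forest of subtrees attached is exactly the content of the paper's Proposition \ref{prop:proof.1} (iterative generation of trees), with the same multinomial bookkeeping. The only cosmetic difference is that the paper truncates the tree sum by \emph{depth} and obtains the exact identity $\boldBeta^N=\boldPsi^{N+1}(\boldzero)$, squeezing $\boldBeta$ between the increasing sequence $\boldPsi^N(\boldzero)$ and the decreasing sequence $\boldPsi^N(\boldmu)$, whereas you truncate by vertex count and get an inequality plus the fixed-point property of $\boldBeta$; both yield \eqref{eq:rr61.0}.
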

This theorem must be combined with the bounds summarized in the following proposition.
\begin{proposition}\label{prop:rr.1}
The following bounds hold:
\begin{equation}\label{eq:rr81.0}
\card{\boldGamma}\!(\nu)(x)\;\le\;  \boldBeta(\nu)(x)
\end{equation}
for each $x\in S$, and
\begin{equation}\label{eq:rr81.01}
\card{\ln \Xi_\Lambda(\nu)}\;\le\;  \int_\Lambda \card{\boldGamma}\!(\nu)(x)\,\absnu(dx)\;.
\end{equation}
These bounds are, in principle, understood as term-by-term in the sense of power series.  Upon convergence ---for instance under condition \eqref{eq:rr51}--- they become actual bounds between functions. 
\end{proposition}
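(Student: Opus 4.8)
The plan is to establish the two bounds separately, treating both first as formal identities/inequalities between power series and then, under the convergence hypothesis, as genuine bounds between functions.

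For the second bound \eqref{eq:rr81.01}, I would start from the standard relation between the cluster expansion and the pinned series. Differentiating \eqref{eq:rr31} formally with respect to the ``volume'' (or, equivalently, exploiting the combinatorial identity that removing a distinguished vertex from a connected diagram on $\{0,1,\dots,n\}$ and summing yields the pinned truncated functions), one obtains the formal identity
\begin{equation}\label{eq:pinned-integral}
\ln\Xi_\Lambda(\nu)\;=\;\int_\Lambda \boldGamma(\nu)(x)\,\nu(dx)\;,
\end{equation}
where $\boldGamma$ is the pinned cluster expansion of \eqref{eq:rr41}. This is the place where the factor $1/n!$ in \eqref{eq:rr31} becomes $1/(n-1)!$ after choosing the distinguished vertex in $n$ ways, and then is absorbed into the definition \eqref{eq:rr41}. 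Taking absolute values term by term in \eqref{eq:pinned-integral}, bounding $|\int|\le\int|\cdot|$ against $\absnu$, and using the definition \eqref{eq:rr61.ex} of the absolute pinned expansion gives \eqref{eq:rr81.01} directly as a term-by-term inequality of power series; under \eqref{eq:rr51} Theorem \ref{th:cluster.gen}(ii) together with \eqref{eq:rr81.0} guarantees the right-hand side converges, so the inequality lifts to an inequality between finite quantities.

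For the first bound \eqref{eq:rr81.0}, the key is a tree-graph inequality for the truncated weight $\omega^T_{n+1}$. I would expand $\omega^T_{n+1}(x,x_1,\dots,x_n)$ using \eqref{eq:rr32}: it is a sum over connected spanning subgraphs $G$ of $\mathcal G(n+1)$ (vertices $0,1,\dots,n$ with $0$ playing the role of $x$) of $\prod_{\{i,j\}\in E(G)}(w(x_i,x_j)-1)$. The standard device is to split each edge factor $w-1$ into the ``tree part'' and a multiplicative correction. Concretely, pick for each connected $G$ a spanning tree $\tau(G)$ (for instance via a partition-scheme / Penrose-type map $G\mapsto\tau(G)$ whose fibers are controlled), write the edges of $G$ as the edges of $\tau(G)$ plus the remaining edges, and bound
\[
\Bigl|\prod_{\{i,j\}\in E(G)}(w(x_i,x_j)-1)\Bigr|
\;\le\;\prod_{\{i,j\}\in E(\tau(G))}\bigl|w(x_i,x_j)-1\bigr|\;\prod_{\{i,j\}\in E(G)\setminus E(\tau)}\bigl(1+|w(x_i,x_j)|\bigr),
\]
using $|w|\le 1$ to simplify further. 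Since in our repulsive-type setting $|w(x_i,x_j)|\le 1$ and we may bound the extra edges by the factors $|w(x_i,x_j)|$ appearing in \eqref{eq:rn5.gen} rather than by $1+|w|$ (this is exactly the dilution effect the partition-scheme bookkeeping is designed to capture), summing over all $G$ with fixed tree $\tau$ and reorganizing the tree in rooted form (root $0$, siblings $s_i(\tau)$) reproduces precisely the integrand $B_{n+1}(x;x_1,\dots,x_n)$ of \eqref{eq:rn5.gen.1}, with the $\varphi$'s as in \eqref{eq:rn5.gen}. This yields $|\omega^T_{n+1}(x,x_1,\dots,x_n)|\le B_{n+1}(x;x_1,\dots,x_n)$ pointwise, and integrating against $\absnu^{\otimes n}$ and summing the series gives \eqref{eq:rr81.0}.

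The main obstacle is the combinatorial heart of the tree-graph bound: showing that the sum over connected spanning graphs $G$ collapses, after the edge-splitting, into exactly the rooted-tree sum $B_{n+1}$ without overcounting or undercounting, i.e.\ constructing the right map $G\mapsto(\tau,\text{extra edges})$ and checking that each tree together with all compatible extra-edge configurations reconstitutes the product $\prod_i\varphi_{s_i}$ with the correct $\prod_{i<j}|w(x_i,x_j)|$ factors. This is the partition-scheme argument of \cite{Fern}, adapted here so that the second product in \eqref{eq:rn5.gen} carries genuine $|w|$ factors (dilution) rather than the crude bound $1$. Everything else — the formal-power-series manipulation leading to \eqref{eq:pinned-integral}, the triangle-inequality passages, and the promotion from term-by-term to genuine inequalities under \eqref{eq:rr51} — is routine once Theorem \ref{th:cluster.gen} is in hand.
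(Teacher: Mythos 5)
Your overall strategy (partition schemes for \eqref{eq:rr81.0}, pinning a vertex for \eqref{eq:rr81.01}) is the right one, but each half contains a step that does not work as written.

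For \eqref{eq:rr81.01}: the identity $\ln\Xi_\Lambda(\nu)=\int_\Lambda\boldGamma(\nu)(x)\,\nu(dx)$ that you take as your starting point is false. First, $\boldGamma(\nu)(x)$ as defined in \eqref{eq:rr41} integrates the unpinned variables over $S^n$, not $\Lambda^n$. Second, and more importantly, even after restricting those integrals to $\Lambda$, the $n$-th term of $\ln\Xi_\Lambda$ carries $1/n!$ while the $n$-th term of the pinned integral carries $1/(n-1)!$: choosing a distinguished vertex in $n$ ways computes $z\,\partial_z\ln\Xi_\Lambda$ (the integrated density), not $\ln\Xi_\Lambda$ itself. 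Your final inequality happens to survive because both discrepancies point in the favourable direction ($1/n\le 1$ and $\Lambda^n\subset\Lambda\times S^{n-1}$), but one cannot derive an inequality by taking absolute values of an identity that does not hold. The correct route --- the one the paper takes --- is to bound the indicator of ``all $x_i\in\Lambda$'' by $\sum_{i=1}^n\one{x_i\in\Lambda}$, use the permutation invariance of $\card{\omega^T_n}$, and let the resulting factor $n$ convert $1/n!$ into the $1/(n-1)!$ appearing in the absolute pinned expansion \eqref{eq:rr61.ex}.

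For \eqref{eq:rr81.0}: the per-graph bound $\bigl|\prod_{E(G)}(w-1)\bigr|\le\prod_{E(\tau)}\card{w-1}\prod_{E(G)\setminus E(\tau)}(1+\card{w})$ cannot be upgraded to the $\card{w}$ factors you need by ``using $\card w\le 1$''. The $\card w$ (rather than $1+\card w$) on the surviving edges comes from the sign cancellation \emph{inside} each fiber $[\tau,R(\tau)]$ of the partition scheme: one must first apply the exact resummation of Proposition \ref{prop:rr100.part}, namely $\sum_{\mathrm{g}\in[\tau,R(\tau)]}\prod_{E(\mathrm{g})}(w-1)=\prod_{E(\tau)}(w-1)\prod_{E(R(\tau))\setminus E(\tau)}w$, and only afterwards take absolute values. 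Taking absolute values graph by graph destroys precisely the cancellation (effect (E1)) that the scheme is designed to exploit, and summing the bound $(1+\card w)$ over the fiber overcounts. The ``main obstacle'' you flag is therefore not an obstacle in the paper: Proposition \ref{prop:rr100.part} is stated and proved there for the Penrose scheme, and the only remaining step is to bound by $1$ every edge of $E(R_{\mathrm{Pen}}(\tau))\setminus E(\tau)$ that does not join siblings, which gives $\card{\omega^T_{n+1}(x,x_1,\ldots,x_n)}\le B_n(x;x_1,\ldots,x_n)$ (note the index is $B_n$, not $B_{n+1}$) and hence \eqref{eq:rr81.0} term by term.
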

\noindent 
The convergence of the generalized cluster expansions is an immediate corollary of Proposition \ref{prop:rr.1} and Theorem \ref{th:cluster.gen}. 

\begin{corollary}\label{cor:cluster.gen}
If condition \eqref{eq:rr51} is satisfied, then
\begin{itemize}
\item[(i)]
The pinned expansion \eqref{eq:rr41} converges term-by-term absolutely and is bounded above by $\boldmu$:  
\begin{equation}\label{eq:rr61}
\card{\boldGamma(\nu)} \;\le\; \card{\boldGamma}\!(\nu)\;\le\; \boldmu
\end{equation}
for each $x\in S$.
\item[(ii)] The cluster expansions \eqref{eq:rr31} converge absolutely and are bounded in the form
\begin{eqnarray}\label{eq:rr81}
\card{\ln \Xi_\Lambda(\nu)}&\le& \sum_{n\ge 1} \frac{1}{n!} \int_{\Lambda^n} \card{\omega^T_n(x_1,\ldots, x_n)}\,\absnu(dx_1)\cdots \absnu(dx_n)\nonumber\\
&\le& \int_\Lambda \boldmu\,d\!\absnu\;.
\end{eqnarray}
\end{itemize}
\end{corollary}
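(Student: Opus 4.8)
The plan is to obtain Corollary \ref{cor:cluster.gen} by concatenating the two ingredients it rests on: the term-by-term majorizations of Proposition \ref{prop:rr.1} and the pointwise bounds of Theorem \ref{th:cluster.gen}, the hypothesis \eqref{eq:rr51} being exactly what is needed to invoke the latter. I would begin by recording that, under \eqref{eq:rr51}, part (ii) of Theorem \ref{th:cluster.gen} furnishes a \emph{pointwise finite} tree sum $\boldBeta(\nu)$ satisfying $\boldBeta(\nu)\le\boldmu$; this finiteness is what converts all the formal-series comparisons below into genuine inequalities.

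For part (i), I would chain \eqref{eq:rr81.0} with \eqref{eq:rr61.0}: for every $x\in S$,
\[
\card{\boldGamma}\!(\nu)(x)\;\le\;\boldBeta(\nu)(x)\;\le\;\boldmu(x)\;.
\]
Since the series \eqref{eq:rr61.ex} defining $\card{\boldGamma}\!(\nu)(x)$ has non-negative terms, its domination by the finite quantity $\boldBeta(\nu)(x)$ already yields its convergence, i.e.\ the absolute term-by-term convergence of the pinned expansion \eqref{eq:rr41}; the bound $\card{\boldGamma(\nu)}\le\card{\boldGamma}\!(\nu)$ is then the triangle inequality applied coefficient by coefficient. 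This establishes \eqref{eq:rr61}. The one point I would state explicitly is the logical order: these comparisons start as coefficient-wise relations between formal power series, and they promote to numerical inequalities precisely once the majorant $\boldBeta(\nu)$ is known to be finite.

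For part (ii), the leftmost inequality in \eqref{eq:rr81} is the term-by-term triangle inequality applied to the cluster series \eqref{eq:rr31}. For the rightmost bound I would invoke \eqref{eq:rr81.01} and then part (i):
\[
\card{\ln \Xi_\Lambda(\nu)}\;\le\;\int_\Lambda\card{\boldGamma}\!(\nu)(x)\,\absnu(dx)\;\le\;\int_\Lambda\boldmu\,d\!\absnu\;.
\]
It remains only to check that the middle sum $\sum_{n\ge1}\frac1{n!}\int_{\Lambda^n}\card{\omega^T_n}\,\absnu(dx_1)\cdots\absnu(dx_n)$ is finite and dominated by $\int_\Lambda\card{\boldGamma}\!(\nu)\,d\!\absnu$. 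This is a routine use of the permutation symmetry of $\omega^T_n$: one pins the first argument, enlarges the remaining $n-1$ integrations from $\Lambda^{n-1}$ to $S^{n-1}$ (licit since the integrand is non-negative), resums with Tonelli's theorem, and uses $\frac1{n!}\le\frac1{(n-1)!}$ to get the claimed domination. All three series in \eqref{eq:rr81} are thereby absolutely convergent, with a bound that does not involve the cluster coefficients themselves.

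I do not expect a genuine obstacle: the corollary is essentially a bookkeeping statement assembling results established elsewhere in the paper. The single place a careless argument could slip is treating the majorizations of Proposition \ref{prop:rr.1} as numerical before the tree sum of Theorem \ref{th:cluster.gen}(ii) has been shown finite, so I would foreground that finiteness and let the rest follow mechanically.
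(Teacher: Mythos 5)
Your proposal is correct and follows exactly the paper's route: the paper declares Corollary \ref{cor:cluster.gen} an immediate consequence of Proposition \ref{prop:rr.1} and Theorem \ref{th:cluster.gen}, and your argument is precisely that assembly, chaining \eqref{eq:rr81.0} with \eqref{eq:rr61.0} for part (i) and \eqref{eq:rr81.01} with part (i) for part (ii). Your explicit remark that the term-by-term majorizations only become numerical inequalities once the tree sum is known to be finite is the same caveat the paper places in the statement of Proposition \ref{prop:rr.1}.
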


\begin{remark}
Condition \eqref{eq:rr51} strengthens preexisting criteria \cite{uel04,poguel09} which follow from ours by bounding $\card{w(x_i,x_j)}\le 1$ in \eqref{eq:rn5.gen}.  These criteria, however, apply also if the  weights correspond to stable but not necessarily positive interactions.  A strengthening for these general case is provided by the criterion in \cite{proyuh17}, conveniently adapted to the general framework \eqref{eq:rr30.1}-\eqref{eq:rr31}.  This strengthening, however, coincides with that in \cite{uel04,poguel09} for positive interactions.  
\end{remark}

\begin{remark}
It is natural to consider $\mu$ that is absolutely continuous with respect to some ``natural" no-negative measure $dx$ (e.g. counting or Lebesgue).  This means 
\begin{equation}\label{eq:rr100}
\nu(dx)\;=\;\xi(x)\,dx
\end{equation}
where the function $\xi(x)$ can be interpreted as a site-dependent (possibly complex) fugacity.  In this case 
\begin{eqnarray}\label{eq:rr70}
\boldPsi(\boldmu)(x)&=&1 + \sum_{n\ge 1} \frac{1}{n!}  \int\limits_{ S^n} \varphi_n(x;x_1,\ldots, x_n)\,\mu(x_1)\card{\xi(x_1)}\cdots\mu(x_n)\card{\xi(x_n)}\,d\,x_1\cdots d\,x_n \nonumber\\
&=& 1 + \sum_{n\ge 1} \frac{1}{n!}  \int\limits_{S^n} \varphi_n(x;x_1,\ldots, x_n)\,\widehat \mu(x_1)\cdots \widehat\mu(x_n)\,d\,x_1\cdots d\,x_n \\
&=:& \widehat \boldPsi(\widehat{\boldmu})(x)\;.\nonumber
\end{eqnarray}
Condition \eqref{eq:rr51} then translates into the existence of a function $\widehat\mu:S\to[0,\infty)$ such that
\begin{equation}\label{eq:rr71}
\card{\xi(x)}\,\widehat \boldPsi(\widehat \boldmu)(x)\;\le\; \widehat\mu(x) 
\end{equation}
for all $x\in S$.  This is a a generalized version of our initial condition \eqref{eq:rr9.0} when the interaction is not assumed to be translation invariant.  It determines a polydisc-like domain of fugacities for which the cluster expansions converge.  
\end{remark}

\begin{remark}\label{rem:r.exam}
The bound \eqref{eq:rr61.0} yields that condition \eqref{eq:rr71} implies the inequality
\begin{equation}\label{eq:rr61.g}
\card{\xi(x)}\,\widehat\boldBeta(\card{\boldxi})(x)\;\le\; \widehat\mu(x) 
\end{equation}
for $x\in S$, with
\begin{equation}\label{eq:rr41.1.g}
\widehat\boldBeta(\card{\boldxi})(x)\;:=\; 1+\sum_{n\ge 1} \frac{1}{n!} \int_{S^n} B_{n}(x;x_1,\ldots, x_n)\, \card{\xi(x_1)}\cdots \card{\xi(x_n)}\,d\,x_1\cdots d\,x_n\;.
\end{equation}
In particular \eqref{eq:rr71} guarantees the pointwise convergence of $\widehat\boldBeta(\card{\boldxi})$.
\end{remark}

\begin{remarks}\ 

\begin{itemize}
\item[(i)] With the choice \eqref{eq:rr100}, the function $\xi(x)\boldGamma(x)$ [see \eqref{eq:rr41}] corresponds to the \emph{one-point correlation density} of a gas with fugacity $\xi$.

\item[(ii)] 
In the translation invariant case it is natural to look for constant functions $\widehat\mu(x)=\mu$, in which case the results of the present section readily imply those of Section \ref{ssec:rr.10}.
\end{itemize}
\end{remarks}

\section{Results on virial expansions}
\subsection{General convergence result}\label{s:rr3}

The following theorem summarizes our main results on the convergence of the virial expansion. 

\begin{theorem}\label{th:rr1}
Denote $r^{**}$ the radius of convergence of $\Beta(z)$.  Then, the general term of the virial expansion for a repulsive gas is bounded above in the form
\begin{equation}\label{eq:rr200}
\frac{\card{\beta_{n+1}}}{(n+1)!} \;\le\; \frac{1}{n+1} \biggl[ \inf_{0\le r\le r^{**}}\Bigl(\frac 2r-\frac{1}{r\,\Beta(r)}\Bigr)\biggr]^n\;.
\end{equation}
Hence, the radius of convergence of the virial expansion is bounded below by
\begin{equation}\label{eq:rr201}
\mathcal{R}_{\mathrm{Vir}}\;\ge\; R^*\;:=\;
 \sup_{0\le r\le r^{**}}\,
\frac{r \Beta(r) }{2 \Beta(r)-1}\;.
\end{equation}
\end{theorem}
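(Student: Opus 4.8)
The plan is to work entirely with formal power series, extract a tree expression for the virial coefficients by Lagrange inversion, and then control it through the merging procedure of Ramawadth and Tate, with the cluster tree sum $\Beta$ serving as a majorant. By translation invariance the pinned cluster expansion \eqref{eq:rr41} at the constant fugacity $\nu=z\,dx$ does not depend on the base point; write $\boldGamma(z)$ for the corresponding formal power series in $z$, with $\boldGamma(0)=1$. Since $\rho_\Lambda=z\,\partial_z(\beta P_\Lambda)$ and $\partial_z(\beta P_\Lambda)$ is the volume average of the finite–volume pinned expansion, one has, as an identity of formal power series, $\rho(z)=z\,\boldGamma(z)$ and $\partial_z(\beta p)(z)=\boldGamma(z)$. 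Inverting $\rho=z\boldGamma(z)$ by the Lagrange--Bürmann formula (a purely algebraic identity, requiring no convergence) and composing with $\beta p(z)=\int_0^z\boldGamma(t)\,dt$, one gets
\begin{equation}
\frac{\beta_{n+1}}{(n+1)!}\;=\;\frac{1}{n+1}\,[z^n]\,\boldGamma(z)^{-n}\;,
\end{equation}
together with its finite–volume analogue (with $\boldGamma$ replaced by $\partial_z(\beta P_\Lambda)$); expanded combinatorially, this is the tree formula for the virial coefficients announced in the introduction. Everything therefore reduces to estimating the $n$-th Taylor coefficient of the $n$-th reciprocal power of $\boldGamma$.

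\emph{The tree majorant.} The tree–graph bound of Proposition \ref{prop:rr.1} gives $\bigl|[z^k]\boldGamma(z)\bigr|\le[z^k]\Beta(z)$ for every $k$, uniformly in $\Lambda$: thus $\boldGamma$ is dominated coefficientwise in modulus by the tree sum $\Beta$, which has radius of convergence $r^{**}$ and non-negative coefficients. Moreover, the recursive structure of the rooted trees in \eqref{eq:rn5.gen.1} yields the fixed-point identity $\Beta(z)=\Psi\bigl(z\,\Beta(z)\bigr)$. These are the only inputs needed from the cluster part of the paper.

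\emph{Merging of trees (the crux).} One now rewrites $[z^n]\boldGamma(z)^{-n}$ as a sum over collections of trees that are merged at their roots into a single tree, following Ramawadth and Tate and using positivity of the interaction to keep all weights termwise dominated by the $B_n$ of \eqref{eq:rn5.gen.1}. The outcome is the estimate
\begin{equation}
\bigl|[z^n]\boldGamma(z)^{-n}\bigr|\;\le\;\Bigl(\frac2r-\frac1{r\,\Beta(r)}\Bigr)^{n}\qquad\text{for all }0<r\le r^{**}\;,
\end{equation}
uniformly in $\Lambda$. This is strictly stronger than a bare Cauchy estimate: writing $\boldGamma=1-\eta$ with $|\eta(z)|\le\Beta(r)-1$ on $|z|=r$ gives only $|[z^n]\boldGamma^{-n}|\le\bigl(r(2-\Beta(r))\bigr)^{-n}$, which in addition requires $\Beta(r)<2$; the merging recovers, heuristically, an entire factor $\Beta(r)$ in the denominator by identifying the cluster hanging from the distinguished root, so that the base $(2-\Beta(r))^{-1}$ of the naive bound gets replaced by the smaller $(2\Beta(r)-1)/\Beta(r)=2-1/\Beta(r)\ge1$ (note $(2\Beta(r)-1)(2-\Beta(r))\le\Beta(r)$) and the constraint $\Beta(r)<2$ disappears. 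Carrying out this reorganization precisely, and verifying that the repulsive weights genuinely permit the termwise domination by the $B_n$, is the technical heart of the proof; I expect it to be the main obstacle.

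\emph{Conclusion.} Taking the infimum over $r\in(0,r^{**}]$ turns the previous estimate into \eqref{eq:rr200}: the infimum is positive, finite, and equal to $1/R^{*}$, since $\tfrac2r-\tfrac1{r\Beta(r)}=\tfrac1r\bigl(2-\tfrac1{\Beta(r)}\bigr)$ lies between $\tfrac1r$ and $\tfrac2r$, hence is bounded below by $1/r^{**}>0$ and tends to $+\infty$ as $r\to0^{+}$. Finally, a comparison of series (equivalently the root test) applied to $\sum_{m\ge1}\tfrac{\beta_m}{m!}\rho^m$ converts \eqref{eq:rr200} into the lower bound $\mathcal R_{\mathrm{Vir}}\ge R^{*}$ of \eqref{eq:rr201}; all estimates being uniform in $\Lambda$, the conclusion holds uniformly in $\Lambda$ as well.
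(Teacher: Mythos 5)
Your skeleton coincides with the paper's: Lagrange inversion in the ring of formal power series gives $\frac{\beta_{n+1}}{(n+1)!}=\frac{1}{n+1}[z^n]\,\boldGamma(z)^{-n}$ (the paper's Proposition \ref{prop:rr.10}, via $\uhat c=X\uhat b'$ and $\uhat b'=\Beta_W$), and the conclusion from the claimed coefficient bound to \eqref{eq:rr200}--\eqref{eq:rr201} is routine. But the step you label ``the crux'' and explicitly defer --- the estimate $\card{[z^n]\boldGamma^{-n}}\le\bigl(\tfrac2r-\tfrac1{r\Beta(r)}\bigr)^n$ --- is the entire content of the theorem, and it is left unproven. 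This is a genuine gap, not a technicality: the only input you propose to take from the cluster part, namely the coefficientwise domination $\card{[z^k]\boldGamma}\le[z^k]\Beta$, cannot produce the bound, because coefficientwise domination is not preserved under taking reciprocal powers (if $\card{a_k}\le A_k$ nothing useful follows for $[z^n]\underline a^{-n}$ versus $[z^n]\underline A^{-n}$; this is exactly why the naive route forces you back to the Cauchy estimate with base $2-\Beta(r)$ that you correctly reject). Also, the fixed-point identity $\Beta(z)=\Psi(z\Beta(z))$ you list as an input plays no role here.

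What actually closes the gap in the paper is a chain of exact combinatorial identities, not a majorization of $\boldGamma$ itself. First, the Penrose partition scheme gives the \emph{identity} $\boldGamma=\Beta_W$, a signed sum over rooted trees (Proposition \ref{prop:rr100.part}), so that the object being inverted is itself a tree series. Second, the Penrose concatenation is shown to be $R_{\mathrm{Pen}}$-faithful, every rooted tree factors uniquely into Penrose-unsplittable components, and hence $T^{\mathrm{Pen},\ell}_W=\bigl[T^{\mathrm{Pen},1}_W\bigr]^\ell$ (Lemma \ref{lemma6}); summing the resulting geometric series yields the key identity $\Beta_W^{-1}=1-T^{\mathrm{Pen},1}_W$ (Lemma \ref{lemma3}), which converts the reciprocal power into an ordinary $n$-th power, $\Beta_W^{-n}=\bigl[1-T^{\mathrm{Pen},1}_W\bigr]^n$. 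Only now does termwise domination apply: the unsplittable-tree weights satisfy $\card{[z^k]T^{\mathrm{Pen},1}_W}\le[z^k]T^{\mathrm{Pen},1}=[z^k]\bigl(1-\Beta^{-1}\bigr)$, whence
\begin{equation*}
\bigl|[z^n]\Beta_W^{-n}\bigr|\;\le\;[z^n]\bigl[1+T^{\mathrm{Pen},1}(z)\bigr]^n\;=\;[z^n]\bigl[2-\Beta^{-1}(z)\bigr]^n\;\le\;\Bigl(\frac2r-\frac1{r\,\Beta(r)}\Bigr)^n
\end{equation*}
for $0<r\le r^{**}$, the last step being the Cauchy bound for a series with non-negative coefficients. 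Your heuristic for why the base improves from $(2-\Beta)^{-1}$ to $2-\Beta^{-1}$ is correct, but without the faithful-merging lemma, the unsplittable decomposition, and the identity $T^{\mathrm{Pen},1}_W=1-\Beta_W^{-1}$, the proof does not exist; these need to be stated and proved, as in Sections \ref{sec:pf.virial-coeff}--\ref{sec:pf.virial} of the paper.
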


\begin{remark}\label{rem:radius}
Remark \ref{rem:r.exam} implies that
\begin{equation}\label{eq:rr202}
r^{**}\;\ge\; r^*\;=\;\max_{\mu\ge0} \frac{\mu}{\Psi(\mu)}
\end{equation}
with $\Psi(\mu)$ defined in \eqref{eq:rn8.0}.
\end{remark}

\begin{remark}
As $\beta$ increases when the repulsion decreases, weaker bounds are obtained by considering interactions $0\le\widetilde\phi\le \phi$.  That is, $R^*\ge \widetilde R^*$ where the latter is computed using the function $\widetilde\Beta(r)\ge \Beta(r)$ and the corresponding radii of convergence $\widetilde r^*$ and $\widetilde r^{**}$. 
\end{remark}

\begin{remark}
The classical convergence bound, due to Lebowitz and Penrose, in the case of positive interactions is
\begin{equation}\label{eq:rr203}
R_{\mathrm{LP}}\;=\;
\frac{0.144766998}{C(\beta)}\;.
\end{equation}
To compare it with \eqref{eq:rr201} we resort to the following identity proven in Section \ref{sec:pf.virial} below:
\begin{equation}\label{eq:rr204}
\frac 2r-\frac{1}{r\,\Beta(r)}\;=\;\frac{1+  T^{\mathrm{Pen,1}}(r)}{r}
\end{equation}
where $T^{\mathrm{Pen,1}}(r)$ is a power series defined as $\Beta(r)$ but involving less trees, namely the so-called \emph{Penrose-unsplitable trees} defined in Section \ref{sec:unspit} below.    
\begin{equation}\label{eq:rr205}
T^{\mathrm{Pen,1}}(r)\;:=\; 1+\sum_{n\ge 1} \frac{r^n}{n!} \int_{\mathbb R^{dn}} B^{\mathrm{Pen},1}_n(0;x_1,\ldots, x_n)\,dx_1\cdots dx_n\;.
\end{equation}
with
\begin{equation}\label{eq:rr206}
B^{\mathrm{Pen},1}_n(0;x_1,\ldots,x_n)\;=\; \sum_{\tau\in\mathcal T^0_{\mathrm{Pen},1}[n]} \,\prod_{i=0}^{n}
\varphi_{s_i(\tau)}(x_i;x_{(i,1)},\ldots,x_{(i,s_i(\tau))})\;.
\end{equation}
where $\mathcal T^0_{\mathrm{Pen},1}[n]$ denotes the family of Penrose-unsplittable trees rooted in 0 with $n$ non-root vertices.  These trees are properly defined in Section \ref{sec:unspit} below, but for the present purposes we need only two of their properties:
\begin{itemize}
\item[(i)] The $n$-th term of the series \eqref{eq:rr205} satisfies the bound
\begin{equation}\label{eq:rr207}
n![r^n]\,T^{\mathrm{Pen},1}(r)\;\le\; C(\beta)^n\card{\mathcal{T}^0_{\mathrm{Pen},1}[n]}\;.
\end{equation}
[We use the standard formal series notation reviewed in Section \ref{ss:fseroper} below.]
\item[(ii)] The following precise counting result~\cite{ramtat15}:
\begin{equation}\label{eq:rr208}
\card{\mathcal{T}^0_{\mathrm{Pen},1}[n]} \;=\; (n-1)^{(n-1)}\;.
\end{equation}
\end{itemize}
From \eqref{eq:rr204}--\eqref{eq:rr208} we recover the following criterion proven in~\cite{ramtat15} which coincides with the announcement given in \cite[Chapter IV, Section 3.7]{gro} .
\begin{theorem}{\bf (Groenveld-Ramawadth-Tate)}\label{theorem1}
Let 
\begin{equation}\label{eq:rr209}
T_1(r)\;=\; r+ \sum_{n\ge 1} n^n\,\frac{r^{n+1}}{(n+1)!}\;.
\end{equation}
Then 
\begin{equation}\label{eq:rr210}
\frac{\card{\beta_{n+1}}}{(n+1)!} \;\le\; \frac{C(\beta)^n}{n+1} \Bigl(\frac{1+T_1(\widehat r)}{\widehat r}\Bigr)^n
\end{equation}
where $\widehat r$ is the smallest solution of
\begin{equation}\label{eq:rr211}
r\,T_1'(r)-T_1(r)\;=\;1\;.
\end{equation}
As a consequence,
\begin{equation}\label{eq:rr212}
\mathcal{R}_{\mathrm{Vir}}\;\ge\; R_{\mathrm{GRT}}\;:=\;
\frac{1}{C(\beta)}\,\Bigl(\frac{\widehat r}{1+T_1(\widehat r)}\Bigr)\;.
\end{equation}
\end{theorem}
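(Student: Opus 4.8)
The plan is to read Theorem~\ref{theorem1} off from Theorem~\ref{th:rr1}, feeding into the latter the identity \eqref{eq:rr204} together with the tree count \eqref{eq:rr207}--\eqref{eq:rr208}. First I would re-index \eqref{eq:rr209} by $m\mapsto m+1$ (with the convention $0^0=1$) to write $T_1(r)=\sum_{n\ge 1}\frac{(n-1)^{n-1}}{n!}r^n$, so that $T_1\bigl(C(\beta)r\bigr)=\sum_{n\ge 1}\frac{C(\beta)^n(n-1)^{n-1}}{n!}r^n$. By \eqref{eq:rr207}--\eqref{eq:rr208} the coefficient of $r^n$ ($n\ge 1$) in $T^{\mathrm{Pen},1}$ is at most $C(\beta)^n(n-1)^{n-1}/n!$, while by \eqref{eq:rr204} the combination $1+T^{\mathrm{Pen},1}(r)=2-1/\Beta(r)$ has constant term $1$ (since $\Beta(0)=1$); hence, for real $r\ge 0$ with $C(\beta)r<1/\eee$ (where the last member below converges),
\[
\frac 2r-\frac{1}{r\,\Beta(r)}\;=\;\frac{1+T^{\mathrm{Pen},1}(r)}{r}\;\le\;\frac{1+T_1\bigl(C(\beta)r\bigr)}{r}\;.
\]

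Next I would optimise this bound in $r$. Setting $s=C(\beta)r$, the right side equals $C(\beta)\,g(s)$ with $g(s)=\bigl(1+T_1(s)\bigr)/s$ and $g'(s)=\bigl(sT_1'(s)-T_1(s)-1\bigr)/s^2$, so the stationary points of $g$ are exactly the roots of \eqref{eq:rr211}. The auxiliary function $h(s):=sT_1'(s)-T_1(s)$ satisfies $h(0)=0$, $h'(s)=sT_1''(s)>0$ on $(0,1/\eee)$, and $h(s)\to+\infty$ as $s\uparrow 1/\eee$, because $T_1$ has radius of convergence $1/\eee$ with $T_1(1/\eee)$ finite but $T_1'(1/\eee)=+\infty$ (Stirling applied to the coefficients $(n-1)^{n-1}/n!$). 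Therefore $h=1$ has a unique root $\widehat r\in(0,1/\eee)$ ---in particular the smallest one--- and $g$ is minimised there.

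It then remains to check the one delicate point: that $r=\widehat r/C(\beta)$ genuinely lies in the interval $[0,r^{**}]$ over which the infimum in \eqref{eq:rr200} is taken. Since $\widehat r<1/\eee$ we have $r<1/(\eee\,C(\beta))$; the classical bound (Application~\ref{app:rr.1}) gives $r^*\ge 1/(\eee\,C(\beta))$, and Remark~\ref{rem:radius} gives $r^{**}\ge r^*$, so $r<r^{**}$ and $\Beta(r)$ is finite and $\ge 1$ there. Inserting $r=\widehat r/C(\beta)$ into \eqref{eq:rr200} yields
\[
\frac{\card{\beta_{n+1}}}{(n+1)!}\;\le\;\frac{1}{n+1}\Bigl(C(\beta)\,\frac{1+T_1(\widehat r)}{\widehat r}\Bigr)^{n}\;,
\]
which is \eqref{eq:rr210}; and since $(n+1)^{-1/n}\to 1$, the root test turns this into $\limsup_n\bigl(\card{\beta_{n+1}}/(n+1)!\bigr)^{1/n}\le C(\beta)(1+T_1(\widehat r))/\widehat r$, whence the virial series converges for $\card{\rho}<\widehat r/\bigl[C(\beta)(1+T_1(\widehat r))\bigr]=R_{\mathrm{GRT}}$, i.e.\ \eqref{eq:rr212}. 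The substantive inputs ---the tree identity \eqref{eq:rr204} writing $2-1/\Beta$ through Penrose-unsplittable trees and the count $(n-1)^{n-1}$ of such trees--- are established earlier (Sections~\ref{sec:unspit} and \ref{sec:pf.virial}); the only real obstacle left in the argument above is this last bookkeeping, namely certifying that the minimiser $\widehat r$ of the $T_1$-bound sits strictly inside the radius of convergence $r^{**}$ of $\Beta$, so that the substitution into Theorem~\ref{th:rr1} is legitimate.
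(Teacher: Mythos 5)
Your proposal is correct and follows essentially the same route as the paper, which derives Theorem~\ref{theorem1} directly from Theorem~\ref{th:rr1} via the identity \eqref{eq:rr204} and the coefficient bounds \eqref{eq:rr207}--\eqref{eq:rr208}. You usefully spell out two points the paper leaves implicit ---the analysis showing $h(s)=sT_1'(s)-T_1(s)$ increases from $0$ to $+\infty$ on $(0,1/\eee)$ so that \eqref{eq:rr211} has a unique root minimising the bound, and the check that $\widehat r/C(\beta)<1/(\eee\,C(\beta))\le r^*\le r^{**}$ so the substitution into \eqref{eq:rr200} is legitimate--- but the substance is the same.
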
 
Numerical estimations yield \cite{ramtat15}
\begin{equation}\label{eq:rr213}
R_{\mathrm{GRT}}\;=\;\frac{0.237961}{C(\beta)}
\end{equation}
which is more than 60\% larger than the classical estimate \eqref{eq:rr203}.
\end{remark}

\subsection{Efficient convergence criterion}\label{ssec:virial.eff}
The bound \eqref{eq:rr200} is not easy to estimate precisely.  In principle, it should bring improved estimations by exploiting the repulsion between siblings.  In doing so, however, one looses the precise counting \eqref{eq:rr208} and the net effect could be deceptive.  Here we derive a weaker criterion that in many cases succeeds in revealing improvements over the GRT bound.  The approach amounts to a change of variable combined with a not very drastic bound.  Let us relate each $0\le\mu\le\mu^*=\argmax_{\mu\ge 0} [\mu/\Psi(\mu)]$ with $r$ such that
\begin{equation}\label{eq:rr214}
r\,\Psi(\mu)\;=\;\mu\;.
\end{equation}
Such an $r$ always exists by observation (b) of Remark \ref{rem:rr10}.  As, by the
rightmost inequality in \eqref{eq:rr300}, $r \Beta(r)\le \mu$, we conclude that
\begin{equation}\label{eq:rr215}
\Beta(r)\;\le\; \Psi(\mu)\;.
\end{equation}
Expressions \eqref{eq:rr214} and \eqref{eq:rr215} lead to the following corollary of Theorem \ref{th:rr1}.

\begin{corollary}\label{cor:rr100}
The radius of convergence of the virial expansion is bounded below by
\begin{equation}\label{eq:rr221.1}
\mathcal{R}_{\mathrm{Vir}}\;\ge\; R^*\;\ge\;M^*\;:=\;
 \sup_{0\le \mu\le \mu^{*}}\,
\frac{\mu }{2 \Psi(\mu)-1}\;.
\end{equation}
\end{corollary}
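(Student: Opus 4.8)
The plan is to deduce the Corollary from Theorem \ref{th:rr1} by the change of variable \eqref{eq:rr214}, the only extra ingredient being the monotonicity of the auxiliary map $x\mapsto x/(2x-1)$. Recall that Theorem \ref{th:rr1} already gives $\mathcal{R}_{\mathrm{Vir}}\ge R^*=\sup_{0\le r\le r^{**}}\frac{r\,\Beta(r)}{2\Beta(r)-1}$, so it suffices to show that for every $\mu\in[0,\mu^*]$ there is an admissible radius $r=r(\mu)\in[0,r^{**}]$ with
\begin{equation}\label{eq:rr.plan1}
\frac{r(\mu)\,\Beta(r(\mu))}{2\Beta(r(\mu))-1}\;\ge\;\frac{\mu}{2\Psi(\mu)-1}\;;
\end{equation}
taking the supremum over $\mu\in[0,\mu^*]$ then yields $R^*\ge M^*$, hence the claim.

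For $\mu\in(0,\mu^*]$ I would take $r(\mu):=\mu/\Psi(\mu)$, i.e.\ the relation \eqref{eq:rr214}. By observation (b) of Remark \ref{rem:rr10} this $r(\mu)$ is well defined, and by construction $r(\mu)\le r^*\le r^{**}$ — the last inequality being Remark \ref{rem:radius} — so $r(\mu)$ lies in the range of the supremum defining $R^*$ and, by Remark \ref{rem:r.exam}, $\Beta(r(\mu))$ is finite. Applying the rightmost inequality in \eqref{eq:rr300} with $z=r(\mu)$ gives $r(\mu)\,\Beta(r(\mu))\le\mu$; dividing by $r(\mu)>0$ and using $r(\mu)\,\Psi(\mu)=\mu$ gives $\Beta(r(\mu))\le\Psi(\mu)$, which is exactly \eqref{eq:rr215}. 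The case $\mu=0$, $r(\mu)=0$ is trivial since both sides of \eqref{eq:rr.plan1} vanish.

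It then remains to combine these two facts. Using $r(\mu)=\mu/\Psi(\mu)$ I rewrite
\begin{equation}\label{eq:rr.plan2}
\frac{r(\mu)\,\Beta(r(\mu))}{2\Beta(r(\mu))-1}\;=\;\frac{\mu}{\Psi(\mu)}\cdot\frac{\Beta(r(\mu))}{2\Beta(r(\mu))-1}\;.
\end{equation}
Since $\Beta$ and $\Psi$ are power series with unit constant term and non-negative coefficients, one has $1\le\Beta(r(\mu))\le\Psi(\mu)$, and the function $x\mapsto x/(2x-1)$ is strictly decreasing on $(1/2,\infty)$ (its derivative is $-1/(2x-1)^2<0$). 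Hence $\frac{\Beta(r(\mu))}{2\Beta(r(\mu))-1}\ge\frac{\Psi(\mu)}{2\Psi(\mu)-1}$, which substituted in \eqref{eq:rr.plan2} yields \eqref{eq:rr.plan1}.

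I expect no genuine obstacle in this argument; it is essentially a bookkeeping step. The two points requiring a moment of care are (a) verifying that the chosen $r(\mu)$ is an admissible argument in Theorem \ref{th:rr1}, i.e.\ $r(\mu)\le r^{**}$, which is handled by $r^{**}\ge r^*$ (Remark \ref{rem:radius}), and (b) that the denominators $2\Beta-1$ and $2\Psi-1$ stay positive, which follows from $\Beta,\Psi\ge 1$. Conceptually, the only observation with any content is that the loss incurred by replacing $\Beta(r)$ with the larger quantity $\Psi(\mu)$ is partly offset because $x\mapsto x/(2x-1)$ is decreasing; this is precisely what makes the efficient criterion $M^*$ a genuine lower bound for $R^*$ rather than a potentially vacuous one.
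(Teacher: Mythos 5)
Your argument is correct and is essentially the paper's own proof: the paper makes the same change of variable $r\Psi(\mu)=\mu$ from \eqref{eq:rr214}, derives $\Beta(r)\le\Psi(\mu)$ from the rightmost inequality in \eqref{eq:rr300} exactly as you do, and leaves implicit the final monotonicity step $x\mapsto x/(2x-1)$ that you spell out. Your additional checks (admissibility $r(\mu)\le r^*\le r^{**}$ and positivity of the denominators via $\Beta,\Psi\ge 1$) are correct and merely make explicit what the paper glosses over.
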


\begin{remark}\label{rem:rr221}
As before, weaker bounds can be obtained considering a less repulsive potential $\widetilde\phi$.  The corresponding bound involves the larger function $\widetilde \Psi$:
\begin{equation}\label{eq:rr221}
M^*\;\ge\;\widetilde M^*\;:=\;
 \sup_{0\le \mu\le \widetilde\mu^{*}}\,
\frac{\mu }{2 \widetilde\Psi(\mu)-1}
\qquad,\quad \widetilde\mu^*:=\argmax_{\mu\ge 0} \frac{\mu}{\widetilde\Psi(\mu)}\;.
\end{equation}
\end{remark}
We now present a few applications of this criterion.

\begin{application}[{\bf Groenveld-Ramawadth-Tate almost recovered}]
An initial test is to neglect the repulsion between siblings and consider the coarser bound $g(n)\le C(\beta)^n$.  This corresponds to the bound \eqref{eq:rr207} and yields $\widetilde\Psi(\mu)=\exp[C(\beta)\mu]$.  Formula \eqref{eq:rr221}  yields
\begin{equation}\label{eq:rr230}
M_1\;=\; \frac{1}{C(\beta)}\,\max_{0\le a\le 1} \frac{\alpha}{2\eee^\alpha -1}\;
\end{equation}
A short calculation shows that the maximum is attained at $\alpha^*$ solution of the equation 
$\eee^{\alpha^*}(1-\alpha^*)=1/2$.  To six decimal places we have $\alpha^*=0.768039$.  This yields
\begin{equation}\label{eq:rr230.1}
M_1\;=\; \frac{1}{C(\beta)}\,(1-\alpha^*)\;=\;\frac{0.231961}{C(\beta)}
\end{equation}
less than 3\% smaller than $R_{\mathrm{GRT}}$.  This proves that, despite not using the precise counting \eqref{eq:rr208}, the estimation in Corollary \ref{cor:rr100} is indeed very efficient.
\end{application}

\begin{application}[{\bf Hard spheres}]
For the hard-core models introduced in Application \ref{app.hard}, condition \eqref{eq:rr221.1} becomes
\begin{equation}\label{eq:rr221.1.hard}
\mathcal{R}_{\mathrm{Vir}}\;\ge\; M_d^{\mathrm{hard}}\;:=\;
 \frac{1}{V_d(a)}\,\sup_{0\le \alpha\le \alpha_d^{*}}\,
\frac{\alpha }{2 \widehat\Psi_d(\alpha)-1}
\qquad,\quad \alpha_d^*=\argmax_{\mu\ge 0}\frac{\mu}{\widehat\Psi_d(\mu)}
\end{equation}

For $d=1$ (hard intervals in a line) 
\begin{equation}\label{eq:rr221.2.hard}
M_1^{\mathrm{hard}}\;=\; \frac{1}{V_1(1)}\,\max_{0\le\alpha\le \sqrt 8}
\frac{\alpha }{1+2\alpha+\frac14 \alpha^2}\;=\;\frac{1/3}{V_1(a)}\;.
\end{equation}
which is almost 40\% larger than $R_{\mathrm{GRT}}$ (and 130\% larger than the classical estimate $R_{\mathrm{LP}}$).

For $d=2$ (hard disks), the normalized vertex function is \cite{fps}
\begin{equation}\label{eq:rr221.4.hard}
\widehat\Psi_2(\alpha)\;=\;1+\alpha+\frac{3\sqrt{3}}{8\pi}\alpha^2+\frac{0.0589}{6}\alpha^3+\frac{0.00013}{24}a^4+\frac{0.0001}{120}\alpha^5\;.
 \end{equation}
A numeric computation yields 
\begin{equation}\label{eq:rr221.4.hard.1}
M_2^{\mathrm{hard}}\;=\;\frac{0.300224}{V_2(1)}
\end{equation}
25\% larger than $R_{\mathrm{GRT}}$.
\end{application}

\begin{application}[{\bf Power-law potentials}]
The virial coefficients of the models of Application \ref{app.power} were numerically determined in \cite{Barlow12} and  radii of convergence estimated through appropriate extrapolation procedures.  The authors recognize that the convergence issue is complicated by the different signs of the virial coefficients.  We hope that rigorous results, such as ours, help to dispel doubts associated to numerical procedures.  As a contribution in this direction, we contrast here convergence radii found with the formulas above with the numerical estimations for a few cases.  For simplicity we consider $\beta=1$ in all our computations.  

For starters we compare $R_{GRT}$ with the numerical convergence radii given in \cite[Table VII]{Barlow12} for three dimensional models with $\epsilon=\sigma=1$ for different vaules of $n$.  This comparison is summarized in Table \ref{tab:rr1}. In the present case, $R_{GRT}=0.237961[C(1)]^{-1}$ with
\begin{equation}\label{eq:rr1.1power}
C(1)\;=\;4\pi\int\limits_{r>0}\left[1-\eee^{-(\frac{1}{r})^n}\right]r^2dr\;.
\end{equation}
\begin{table}[h]
\begin{tabular}{|c|c|c|}
\hline
$n$ & $R_{GRT}$ & $R_{\mathrm{Num}}$\\
\hline
4 & 0.0153 & 0.1092\\
5 & 0.025 & 0.2418\\
6 & 0.0312 & 0.3280\\
7 & 0.0355 & 0.4022\\
8 & 0.0386 & 0.4634\\
\hline
\end{tabular}
\caption{GRT bounds on radii of convergence vs numerical estimations \cite{Barlow12} for potentials \eqref{eq:rr1.power} with $\beta=\epsilon=\sigma=1$ }\label{tab:rr1}
\end{table}
The table shows differences increasing with $n$, with numerical estimations being larger by a factor ranging from 7 to 12.  This trend is to be expected, given that the repulsion effect neglected in the GRT-approach increases with $n$.  

To measure the effect of this extra repulsion we applied our approach to the case $n=6$ using the bound on the function $\widehat\Psi$ explained in Application \ref{app.power}.  
%
%To determine the function $\Psi$ we computed up to four decimal places
%\begin{equation}\label{eq:rr2.power}
%\wg(2)= 0.6917\ ,\  \wg(3)= 0.3685\ ,\  \wg(4)= 0.145\ ,\   \wg(5)= 0.0627\;,
%\end{equation}
%and bounded $\wg(6)$ to $\wg(14)$ combining the values in \eqref{eq:rr2.power} with inequality \eqref{eq:rr13}. 
%
%\[\begin{split}&\tilde{g}(15)< [\tilde{g}(5)]^3\approx 0.000242,\quad \tilde{g}(14)<\tilde{g}(4)[\tilde{g}(5)]^2,\approx 0.00057\quad\tilde{g}(13)< \tilde{g}(3)[\tilde{g}(5)]^2\approx 0.00145\\&\tilde{g}(12)< \tilde{g}(2)[\tilde{g}(5)]^2\approx 0.002752,\quad \tilde{g}(11)< \tilde{g}(10)<[\tilde{g}(5)]^2<0.0039, \tilde{g}(9)< \tilde{g}(4)\tilde{g}(5)\approx 0.0091\\&\tilde{g}(8)<\tilde{g}(4)\tilde{g}(5)\approx0.0231,\quad\tilde{g}(7)<\tilde{g}(5)\tilde{g}(2)\approx 0.0489,\quad \tilde{g}(6)<\tilde{g}(5)\approx 0.0627.\end{split}\]
%
We obtain $\mathcal{R}_{\rm{Vir}}\ge 0.2612494[C(1)]^{-1}\\\;=\;0.035$. The difference with the numerical value given in Table \ref{tab:rr1} is a factor 9.5 which improves the factor 10.5 resulting from the GRT approach.

\end{application}

\section{Proof of the cluster expansion results}\label{sec:cluster-proofs}

In this section we turn to the proof of the results of Sections \ref{ssec:cluster.general}, which in turns imply those of the preceding sections. We need only to prove Theorem \ref{th:cluster.gen} and Proposition \ref{prop:rr.1}.

\subsection{Preliminary bounds: Partition schemes}\label{ssec:cluster-partitions}

Expression \eqref{eq:rr32} is well known.  It can be obtained by expanding $w(x_i,x_j)=[w(x_i,x_j)-1]+1$ and using that the weights $\omega_n^T$ are the only solution to the equations
\begin{equation}\label{eq:rrp1}
\omega_n\;=\;\sum_{k=1}^n \sum_{\{I_1,\cdots, I_k\} \;\mathrm{ n.t.part. }\; [n]} \omega^T_{\card{I_1}}\cdots \omega^T_{\card{I_k}}
\end{equation}
where ``$\mathrm{ n.t.part. }\; [n]$" means \emph{non-trivial partition of} $\{1,\ldots,n\}$, that is a partition with no empty component.  The more conomical form of checking the validity of expansion \eqref{eq:rr31} ---in the sense of power series and then making $z=1$--- is, perhaps, through the algebraic approach of \cite[Chapter 4]{Rue69}.  Convergence studies are based on rewriting \eqref{eq:rr32} in terms of trees.  The optimal way to organize these rewriting is in terms of \emph{partition schemes} \cite{scosok05,Fern}.

Let us consider the set $\mathcal{C}[n]$ of graphs with vertices $\{1,\ldots, n\}$ partial ordered by bond inclusion:
\begin{equation}\label{eq:rr.1.part}
\mathrm{g}\le \mathrm{g}'\quad\Longleftrightarrow\quad E(\mathrm{g})\subset E(\mathrm{g}')\;.
\end{equation}
If $\mathrm{g}\le \mathrm{g}'$, let us denote $[\mathrm{g},\mathrm{g}']$ the set of $\tilde{\mathrm{g}}\in\mathcal{C}[n]$ such that $\mathrm{g}\le \tilde{\mathrm{g}}\le \mathrm{g}'$.  Let us denote $\mathcal{T}[n]\subset\mathcal{C}[n]$ the subset of those graphs that are trees.

\begin{definition}[{\bf Partition scheme}] Let us call a \emph{partition scheme} for the family $\mathcal{C}[n]$ any map $R:\mathcal{T}[n]\to\mathcal{C}[n]:\tau\mapsto R(\tau)$ such that
\begin{itemize}
\item[(i)] $E(R(\tau))\supset E(\tau)$, and 
\item[(ii)] $\mathcal{C}[n]$ is the disjoint union of the sets $[\tau,R(\tau)]$, $\tau\in\trees[n]$.
\end{itemize}
\end{definition}

\begin{proposition}\label{prop:rr100.part}
For any partition scheme $R$,
\begin{equation}\label{eq:rr.2.part}
\omega^T_n(x_1,\ldots, x_n)\;=\; \sum_{\tau\in\trees[n]}\prod_{\{i,j\}\in E(\tau)}\bigl[w(x_i,x_j)-1\bigr]\prod_{\{i,j\}\in E(R(\tau))\setminus E(\tau)}w(x_i,x_j)\;.
\end{equation}
\end{proposition}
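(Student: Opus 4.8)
The plan is to start from the known graph-sum expression \eqref{eq:rr32} for $\omega^T_n$ and reorganize it using the partition scheme $R$. Recall that, by expanding each factor $w(x_i,x_j) = [w(x_i,x_j)-1] + 1$ in $\omega_n = \prod_{i<j} w(x_i,x_j)$ and solving the Möbius-type system \eqref{eq:rrp1}, one has for $n\ge 2$
\[
\omega^T_n(x_1,\ldots,x_n)\;=\;\sum_{G\subset_{cc}\mathcal G(n)}\;\prod_{\{i,j\}\in E(G)}\bigl[w(x_i,x_j)-1\bigr]\;,
\]
the sum running over connected spanning subgraphs $G$ of the complete graph. The key observation is that the defining property (ii) of a partition scheme says precisely that the connected spanning subgraphs of $\mathcal G(n)$ are partitioned into the intervals $[\tau, R(\tau)]$, $\tau\in\trees[n]$ — every connected spanning graph $G$ contains a spanning tree, and (ii) selects for each such $G$ a unique $\tau$ with $\tau\le G\le R(\tau)$. (One should note that each $G\in[\tau,R(\tau)]$ is automatically connected and spanning since it contains $\tau$, and conversely $\mathcal C[n]$ in the definition is meant as the lattice of all graphs, but the nontrivial content is that the connected spanning ones are covered; I would make this compatibility explicit, perhaps citing \cite{scosok05,Fern}.)

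Given this, I would regroup the sum over $G$ according to which interval $[\tau,R(\tau)]$ it lies in:
\[
\omega^T_n\;=\;\sum_{\tau\in\trees[n]}\;\sum_{G\in[\tau,R(\tau)]}\;\prod_{\{i,j\}\in E(G)}\bigl[w(x_i,x_j)-1\bigr]\;.
\]
For fixed $\tau$, a graph $G\in[\tau,R(\tau)]$ is determined by choosing, for each edge $e\in E(R(\tau))\setminus E(\tau)$, whether or not to include it, while all edges of $\tau$ are forced in. Therefore the inner sum factorizes:
\[
\sum_{G\in[\tau,R(\tau)]}\prod_{e\in E(G)}\bigl[w(e)-1\bigr]
\;=\;\prod_{e\in E(\tau)}\bigl[w(e)-1\bigr]\;\prod_{e\in E(R(\tau))\setminus E(\tau)}\Bigl(1+\bigl[w(e)-1\bigr]\Bigr)\;,
\]
where I abbreviate $w(e)=w(x_i,x_j)$ for $e=\{i,j\}$. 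Since $1+[w(e)-1]=w(e)$, the second product is exactly $\prod_{e\in E(R(\tau))\setminus E(\tau)}w(x_i,x_j)$, and substituting back yields \eqref{eq:rr.2.part}. The case $n=1$ is trivial since $\trees[1]$ is a single point and $\omega^T_1=1$.

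The only real obstacle is the bookkeeping in the first paragraph: one must be careful that the partition-scheme axiom (ii), as stated for the full graph lattice $\mathcal C[n]$, does restrict correctly to a partition of the connected spanning subgraphs, and that no connected spanning $G$ is missed or double-counted. This is where the combinatorial content lives; once it is in place, the factorization of the inner sum and the algebraic identity $1+(w-1)=w$ are routine. I would also remark that the identity holds term-by-term as an identity of polynomials in the variables $w(x_i,x_j)$, hence holds for all complex values, which is all that is needed for the subsequent analytic estimates.
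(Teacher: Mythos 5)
Your proof is correct and follows essentially the same route as the paper's: both start from the connected-graph expansion \eqref{eq:rr32}, use property (ii) of the partition scheme to regroup the sum over connected spanning subgraphs into the intervals $[\tau,R(\tau)]$, and factorize the inner sum via $1+[w(e)-1]=w(e)$. Your added remarks (that $\mathcal C[n]$ must be read as the connected spanning subgraphs, and that the identity is polynomial in the $w(x_i,x_j)$) are sensible clarifications but do not change the argument.
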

\begin{proof}
The proof of this proposition is well known \cite{scosok05}, but let us reproduce it for the sake of completeness.  Let us denote $f(i,j) :=w(x_i,x_j)-1$.  Then, 
\begin{eqnarray}
\sum_{\mathrm{g}\in\mathcal{C}[n+1]}\prod_{\{i,j\}\in E(g)}f(i,j)
&=&\sum_{\tau\in\trees[n+1]}\sum_{\mathrm{g}\in [\tau,R(\tau)]}\prod_{\{i,j\}\in E(g)}f(i,j)\nonumber\\
&=&\sum_{\tau\in\trees[n+1]}\prod_{\{i,j\}\in E(\tau)}f(i,j)
\sum_{\mathrm{g}\in [\tau,R(\tau)]}\prod_{\{i,j\}\in E(\mathrm{g})\setminus E(\tau)}f(i,j)\\
&=&\sum_{\tau\in\trees[n+1]}\prod_{\{i,j\}\in E(\tau)}f(i,j)
\prod_{\{i,j\}\in E(R(\tau))\setminus E(\tau)}\bigl[f(i,j)+1\bigr]\nonumber\;.
\end{eqnarray}
\end{proof}

Despite the generality of \eqref{eq:rr.2.part}, here we will be concerned with a very particular partition scheme, called the \emph{Penrose partition} \cite{pen67,Fern}.  The scheme has two advantages: (i) it leads to useful bounds purely in terms of vertices and their siblings, and (ii) it allows the \emph{faithful mergings} described below which are crucial for the improvements in the determination of virial convergence.  
For our purposes it is better to describe this partition considering the set $\mathcal{C}[n+1]$ of graphs with vertices $\{0,1,\ldots, n\}$. We think the corresponding trees $\mathcal{T}[n+1]$ as \emph{rooted} in the vertex 0.  To emphasize this fact we write $\mathcal{T}[n+1]=\mathcal{T}^0[n]$.  Trees have an associated notion of \emph{tree distance} between vertices given by the number of links needed to go from one the other through the tree.  Rooted trees, in addition, admit the notion of \emph{generation}.  For $\tau \in \mathcal{T}^0[n]$ the generation number of a vertex $i$ is its distance $d_{\tau}(i)$ to the root 0.  This number brings in associated notions of kinship: If $\{i,j\}\in E(\tau)$ and $d_{\tau}(i)=d_{\tau}(j)-1$, we say that $i$ is the \emph{predecessor} or \emph{parent} of $j$, and $j$ is the \emph{child} of $i$.  Children of the same parent are called \emph{siblings}. Other relations ---grandparents, uncles, etc--- can be analogously defined.

\begin{definition}[{\bf Penrose partition scheme}]
The partition is defined associating to each tree $\tau$ the graph $R_{\mathrm{Pen}}(\tau)$ obtained by adding to $\tau$ all edges $\{i,j\}\not\in E(\tau)$ such that either:
\begin{itemize}
\item[(i)] $d_{\tau}(i)=d_\tau(j)$ (edges between vertices at same generation,)
\item[(ii)] $d_\tau(j)=d_\tau(i)-1$ and $i'<j$, where $i'$ is the predecessor of $i$ (edges between a vertex and uncles with lower index than the parent).
\end{itemize}
\end{definition}
The proof that this prescription indeed constitutes a partition scheme is basically given in Penrose's original publication~\cite{pen67}. More formal proofs can be found in the references given in \cite[Section 2.2]{scosok05} and in \cite{temmel14}.

\subsection{Proof of Theorem \ref{th:cluster.gen}}\label{ssec:rrproof.gen}

\paragraph{\textbf{Proof of part (i)}} As $\boldPhi$ is defined through positive integrands and measures, condition \eqref{eq:rr51} implies inductively that $0\le\boldPsi^{n+1}\le \boldPsi^n$ for each $n\ge 1$.  This monotonicity  leads to the existence of the pointwise limit $\boldPsi^\infty$.  By monotone convergence ---Beppo Levi theorem for each integral coefficient and convergence of monotone series for the whole sum:
\begin{equation}\label{eq:rr500}
\boldPsi\bigl(\boldPsi^\infty\bigr)\;=\; \boldPsi\bigl(\lim_n \boldPsi^n\bigr)\;=\; \lim_n \boldPsi\bigl( \boldPsi^n\bigr)\;=\; \boldPsi^\infty\;. \qed
\end{equation}

\paragraph{\textbf{Proof of part (ii)}} 

The proof relies on a combinatorial identity expressing the fact that rooted trees can be generated by joining subtrees rooted in the children of the root, and this property is preserved if the trees have weights that depend only on siblings. One way to prove this property was presented in \cite{fps}.  Here we adopt a different approach discussed by one of us with Alan Sokal, Roman Koteck\'y and Daniel Ueltschi (unpublished, 2011) .  We need some definitions.  Let us call the \emph{depth} of a tree its maximal generation number and denote $\trees_N[n+1]$ the set of trees with vertices $\{0,1,\ldots,n\}$, rooted at 0 and with depth not exceeding $N$.  The key combinatorial result is the following.

\begin{proposition}[{\bf Iterative generation of trees}]\label{prop:proof.1}
Consider a space $S$ endowed with a positive measure $dx$, and measurable functions $c_n:S\times S^n\longrightarrow [0,\infty]$, $n\ge 1$, that are invariant under permutations of their last $n$ arguments (\emph{vertex weights}) and define $c_0\equiv 1$.  For each integer $N\ge 0$ consider the tree sums
\begin{equation}\label{eq:rr.keycomb.1}
C_n^N(x_0;x_1,\ldots,x_n)\;=\; \sum_{\tau\in\mathcal T_N[n+1]} \prod_{i=0}^{n}
c_{s_i}(x_i;x_{(i,1)},\ldots,x_{(i,s_i)})
\end{equation} 
and the functions $\boldCeta^N: S\to [0,\infty]$ defined by
\begin{equation}\label{eq:rr.keycomb.2}
\boldCeta^N( x)\;:=\; 1+\sum_{n\ge 1} \frac{1}{n!} \int_{S^n} C^N_{n}(x;x_1,\ldots, x_n)\,d\,x_1\cdots d\,x_n
\end{equation}
(\emph{integrated tree sums}).
Finally, define the \emph{tree-generating operator} $\boldFeta:[0,\infty]^S\to [0,\infty]^S$ such that to each non-negative function $\boldmu: S\to [0,\infty]$ associates the function $\boldFeta(\boldmu): S\to [0,\infty]$ defined by
\begin{equation}\label{eq:rr.keycomb.3}
\boldFeta(\boldmu)(x)\;=\; 1 + \sum_{n\ge 1} \frac{1}{n!}  \int_{ S^n} c_n(x;x_1,\ldots, x_n)\,\mu(x_1)\cdots\mu(x_n)\,dx_1\cdots dx_n \;.
\end{equation}
Then, 
\begin{eqnarray}\label{eq:rr.keycomb.4}
\boldFeta(\boldzero)&=&\boldone\nonumber\\
\boldFeta\bigl(\boldCeta^N\bigr)&=& \boldCeta^{N+1} \qquad, \quad N\ge 0\;.
\end{eqnarray}
Here  $\boldzero$ and $\boldone$ are the functions identically zero and 1 respectively, and $\boldCeta^0=\boldone$.  
\end{proposition}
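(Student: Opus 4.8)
The plan is to establish the two identities in \eqref{eq:rr.keycomb.4} directly; the first is trivial and the second is the combinatorial heart of the matter. Since every term with $n\ge1$ in \eqref{eq:rr.keycomb.3} carries a factor $\mu(x_i)$, evaluating $\boldFeta$ at $\boldmu=\boldzero$ leaves only the constant term, so $\boldFeta(\boldzero)=\boldone$. For the second identity $\boldFeta(\boldCeta^N)=\boldCeta^{N+1}$ I would argue directly for each $N\ge0$ (no induction on $N$ is needed), relying on the canonical decomposition of a rooted tree of depth at most $N{+}1$ according to the subtrees hanging from the root. Throughout, all quantities take values in $[0,\infty]$, so every interchange of summation and integration below is automatic by Tonelli's theorem and monotone convergence; I will not belabour this. \textbf{Decomposition of depth-$(N{+}1)$ trees.} Fix $N\ge0$ and $n\ge1$. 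A tree $\tau\in\trees_{N+1}[n+1]$, on vertices $\{0,1,\dots,n\}$ rooted at $0$, is exactly the datum of: the nonempty set $W\subseteq\{1,\dots,n\}$ of children of $0$, with $s:=\card W$; a partition $\{V_v:v\in W\}$ of $\{1,\dots,n\}$ with $v\in V_v$, where $V_v$ is the vertex set of the subtree hanging from $v$; and, for each $v\in W$, a rooted tree on $V_v$ with root $v$ and depth at most $N$ (the depth of $\tau$ being one more than the maximal depth among those subtrees). Because each $c_{s_i}$ depends only on a vertex together with its children, the weight $\prod_{i=0}^n c_{s_i(\tau)}(\cdots)$ factorizes as $c_s\bigl(x_0;(x_v)_{v\in W}\bigr)$ times the product of the weights of the $s$ subtrees.

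\textbf{A symmetrization lemma.} For any $h\colon S\to[0,\infty]$ written as $h(y)=\sum_{m\ge0}\frac1{m!}\int_{S^m}H_m(y;z_1,\dots,z_m)\,dz_1\cdots dz_m$ with each $H_m$ symmetric in its last $m$ arguments, and for every $s\ge1$,
\[
\prod_{j=1}^s h(y_j)\;=\;\sum_{m\ge0}\frac{1}{m!}\int_{S^m}\;\sum_{f\colon\{1,\dots,m\}\to\{1,\dots,s\}}\;\prod_{j=1}^s H_{\card{f^{-1}(j)}}\!\bigl(y_j;(z_i)_{i\in f^{-1}(j)}\bigr)\,dz_1\cdots dz_m\,.
\]
This is obtained by expanding the product over $j$, relabelling the integration variables, and observing that for each prescribed tuple of part-sizes there are exactly as many maps $f$ as the associated multinomial coefficient, all contributing the same integral by symmetry of the $H_m$; the multinomial coefficient converts the product of reciprocal factorials into $1/m!$ per map $f$.

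\textbf{Assembly.} Apply the lemma inside \eqref{eq:rr.keycomb.3} with $\boldmu=h=\boldCeta^N$ and $H_m=C^N_m$ (symmetric in its last $m$ arguments by the sibling-invariance of $c_m$, hence of the tree sums \eqref{eq:rr.keycomb.1}). For each fixed $s$, relabel the $s+m$ integration variables as $x_1,\dots,x_n$ with $n=s+m$, the first $s$ playing the role of the children of the root; by the decomposition above, the resulting integrand is precisely $\sum_{\tau}\prod_i c_{s_i(\tau)}(\cdots)$ over those $\tau\in\trees_{N+1}[n+1]$ whose root has children exactly $\{1,\dots,s\}$, carrying the prefactor $\tfrac1{s!}\tfrac1{m!}$. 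A permutation of the integration variables shows that this integral is the same for each of the $\binom ns$ possible $s$-element child-sets, so it equals $\binom ns^{-1}$ times the integral over all $\tau\in\trees_{N+1}[n+1]$ with root-degree $s$; combining $\tfrac1{s!m!}$ with $\binom ns^{-1}=\tfrac{s!m!}{n!}$ produces exactly $\tfrac1{n!}$. Summing over $s$ at fixed $n$ reconstitutes $C^{N+1}_n$ through \eqref{eq:rr.keycomb.1}, and summing over $n$ yields $\boldCeta^{N+1}(x)$ as in \eqref{eq:rr.keycomb.2}; since $\trees_0[n+1]=\varnothing$ for $n\ge1$ we also have $\boldCeta^0=\boldone$, which completes \eqref{eq:rr.keycomb.4}. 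The main obstacle is purely the bookkeeping: checking that the factorials $1/s!$ and $1/m!$, the multinomial coefficients of the symmetrization lemma, and the binomial $\binom ns$ recombine to exactly $1/n!$, together with the routine verification of the symmetry of $C^N_m$. No analytic subtlety arises, all integrands and partial sums being non-negative.
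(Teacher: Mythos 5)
Your proof is correct and follows essentially the same route as the paper's: both rest on the canonical decomposition of a depth-$(N{+}1)$ rooted tree into the root's children and the depth-$\le N$ subtrees hanging from them, with permutation invariance converting the sum over partitions into multinomial factors that recombine into $1/n!$. The only difference is cosmetic --- the paper writes the recursion at the level of the unintegrated tree sums $C_n^{N+1}$ and then integrates, whereas you expand the product $\prod_j\boldCeta^N(y_j)$ inside $\boldFeta$ via your symmetrization lemma and reassemble; the bookkeeping is the same identity read in the opposite direction.
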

\begin{proof}
Every tree in $C_n^{N+1}$ is uniquely defined by a partition $\{I_0,I_1,\ldots, I_{\card{I_0}}\}$ of $\{1,\ldots,n\}$ and a family of trees $\tau_i\in \trees_N[\card{I_i}]$.  The indices in $I_0$ are the children of the root 0 and each $\tau_i$ is the tree formed by all the descendants of the child $(0,i)$ of the root. Therefore, if $n, N\ge 1$,
\begin{equation}\label{eq:rr.keycomb.5}
C_n^{N+1}(x_0;x_1,\ldots,x_n)\;=\; \sum_{\{I_0,I_1,\ldots, I_{\card{I_0}}\}\atop \mathrm{ part. }\; [n]}
c_{\card{I_0}}\bigl(x_0; x_{I_0}\bigr) \prod_{i=1}^{\card{I_0}} C_n^{N}\bigl(x_{(0,i)};x_{I_i}\bigr)
\end{equation}
where we denote $x_I=(x_j)_{j\in I}$. Some of the sets $I_i$, $i\ge 1$, can be empty, in which case $C_n^{N}\bigl(x_{(0,i)};x_{\emptyset}\bigr)= 1$.
The permutation invariance of the vertex weights implies that in the integrated version of this last equation the terms on the right-hand side depend only on the cardinality of the sets in the partition.  Denoting
\begin{equation}\label{eq:rr.keycomb.6}
\overline C^M_n(x)\;:=\; \int_{S^n} C_n^{M}(x;x_1,\ldots,x_n)\,dx_1\cdots dx_n
\end{equation}
and integrating both sides of \eqref{eq:rr.keycomb.5} we have
\begin{equation}\label{eq:rr.keycomb.7}
\overline C_n^{N+1}(x_0)\;=\;
\sum_{k,\ell_1,\cdots,\ell_{k} \atop k+\ell_1+\cdots+\ell_{k}=n} 
\binom{n}{k\,\ell_1\cdots\,\ell_k}
\int_{S^k} c_k\bigl(x_0; x_1,\ldots, x_k\bigr) \prod_{i=1}^k \Bigl[\overline C_{\ell_i}^{N}(x_i)\, dx_i\Bigr]\;.
\end{equation}
The combinatorial factor equals the number of partitions $\{I_0,I_1,\ldots, I_{\card{I_0}}\}$ of $\{1,\ldots, n\}$ with $\card{I_0}=k $ and $\card{I_i}=\ell_i$ for $1\le i\le k$.  Dividing the right-hand side by $n!$ and summing over $n$ we finally get
\begin{eqnarray}\label{eq:rr.keycomb.8}
\lefteqn{\boldCeta^{N+1}(x_0)}\nonumber\\
 &=& 1+ \sum_{k\ge 1} \frac{1}{k!}\int_{S^k} \!dx_1\cdots dx_n \,c_k\bigl(x_0; x_1,\ldots, x_k\bigr) \prod_{i=1}^k \Bigl[\sum_{\ell_i\ge 0}\frac{1}{\ell_i!}\overline C_{\ell_i}^{N}(x_i)\Bigr] \nonumber\\
 &=& \boldFeta\bigl(\boldCeta^N\bigr)\;.
\end{eqnarray}
\end{proof}

Let us now conclude the proof of Part (ii) of Theorem \ref{th:cluster.gen}.  By proposition \ref{prop:proof.1} we have that
\begin{equation}\label{eq:rr.keycomb.9}
\boldBeta^N\;=\boldPsi^{N+1}(\boldzero)
\end{equation}
for each $N\ge 0$.  Here $\boldBeta^N$ is defined as in \eqref{eq:rr41.1} but summing only over trees of depth at most $N$.  As all the relevant coefficients are positive, $\boldBeta^N$ increases pointwisely with $N$ and, by monotonic convergence, converges to $\boldBeta$.  Furthermore, 
\begin{equation}\label{eq:rr.keycomb.10}
\boldPsi^{N}(\boldzero) \;\le\; \boldPsi^{N}(\boldmu)\;.
\end{equation}
The following line summarizes the consequences of inequalities \eqref{eq:rr.keycomb.9}, \eqref{eq:rr.keycomb.10} and Part (i).  In particular it proves \eqref{eq:rr61.0}.
\begin{equation}\label{eq:rr.keycomb.11}
\boldzero\;\le\;\boldPsi^{N}(\boldzero) \;\nearrow_{N\to\infty}\; \boldBeta\;\le\;
\boldPsi^{\infty}(\boldmu)\;\swarrow_{N\to\infty}\boldPsi^{N}(\boldmu)\;.\le\;\boldmu\;. \qed
\end{equation}

\subsection{Proof of Proposition \ref{prop:rr.1}}
To prove inequality \eqref{eq:rr81.0} we start from the bound
\begin{equation}\label{eq:rr.100.proof}
\card{\omega^T_n(x_1,\ldots, x_n)}\;\le \; \sum_{\tau\in\trees[n]}\prod_{\{i,j\}\in E(\tau)}\card{w(x_i,x_j)-1}\prod_{\{i,j\}\in E(R_{\mathrm{Pen}}(\tau))\setminus E(\tau)} \card{w(x_i,x_j)}\;.
\end{equation}
which follows from Proposition \ref{prop:rr100.part}.  Next, for each tree $\tau$ we bound by 1 all edges in $E(R_{\mathrm{Pen}}(\tau))$ except those between siblings.  This yields the inequality
\begin{equation}\label{eq:rr.101.proof}
\card{\omega^T_{n+1}(x,x_1,\ldots, x_n)} \;\le\; B_{n}(x;x_1,\ldots, x_n)
\end{equation}
proving the term-by-term majorization of $\card{\boldGamma}\!(\nu)(x)$ by $\boldBeta(\nu)(x)$ for every $x\in S$.
 
Inequality \eqref{eq:rr81.01} is 
a simple consequence of the inequalities
\begin{eqnarray*}
\lefteqn{\sum_{n\ge 1} \frac{1}{n!} \int_{\Lambda^n} \card{\omega^T_n(x_1,\ldots, x_n)}\,\absnu(dx_1)\cdots \absnu(dx_n)}\\
&\le& \sum_{n\ge 1} \frac{1}{n!} \int_{S^n} \card{\omega^T_n(x_1,\ldots, x_n)}\,\mathbb{\Large 1}[x_i\in\Lambda \mbox{ for some } 1\le i\le n]\, \absnu(dx_1)\cdots \absnu(dx_n)\nonumber\\
&\le& \sum_{n\ge 1} \frac{1}{n!} \sum_{i=1}^n\int_{S^n} \card{\omega^T_n(x_1,\ldots, x_n)}\,\mathbb{\Large 1}[x_i\in\Lambda]\, \absnu(dx_1)\cdots \absnu(dx_n)\nonumber\;.
\end{eqnarray*}
Due to the  invariance of $\omega^T_n$ under permutation of arguments we conclude that the last expression is no smaller than
\[
\sum_{n\ge 1} \frac{1}{n!} \,n\int_\Lambda \absnu(dx_1)\int_{S^{n-1}} \card{\omega^T_n(x_1,\ldots, x_n)}\,\, \absnu(dx_2)\cdots \absnu(dx_n)
\;=\; \int_\Lambda \absnu(dx_1) \card{\boldGamma}\!(\nu)(x)\;. \qed
\]

\subsection{Simpler proof of (almost) criterion \eqref{eq:rr9.0}} \label{ssec:rrproof.eas}
The convergence criterion \eqref{eq:rr9.0} ---albeit in a non-sharp version--- admits an alternative proof relying only on Cayley formula for the number of trees with given degrees.  This proof is basically contained in the approach used in \cite{fps} for the cluster expansion for hard spheres.  The following proposition is the main step of this proof.  It applies for translation-invariant interactions and involves the tree sum $\Beta(z)$, defined in \eqref{eq:rr41.1.gg}, and the vertex sum $\Psi(\mu)$ defined in \eqref{eq:rn8.0}.
The relation with the actual cluster expansions follows from the inequalities 
\begin{equation}\label{eq:rr9.-1.easy}
\card{\beta P_\Lambda(z)},\,  \card{\beta p(z)} \;\le\; \card z\,\Beta(\card z)\;,
\end{equation}
which are a consequence of Proposition \ref{prop:rr100.part} [see \eqref{eq:rr.101.proof}]. 

\begin{proposition}\label{prop:rr.easy}
If $z\in\mathbb C$ is such that 
\begin{equation}\label{eq:rr9.0.easy}
\card z\;<\; \frac{\mu}{\Psi({\mu})}
\end{equation}
for some $\mu>0$, then $\Beta(z)$ converges absolutely and, furthermore,  
\begin{equation}\label{eq:rr9.0.01.easy}
 \Beta(\card z)\;\le\; \frac{1}{1- \card z\frac{\Psi(\mu)}{\mu}}\;.
\end{equation}
\end{proposition}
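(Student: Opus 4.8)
The plan is to reduce the inequality to a one–dimensional fixed–point estimate for the tree sum $\Beta$ and then iterate it. The first step is to evaluate the integrals in the definition \eqref{eq:rr41.1.gg} of $\Beta$. Since the interaction is translation invariant, one integrates the vertices of a rooted tree $\tau\in\mathcal T^0[n]$ one generation at a time, from the leaves towards the root: once the descendants of a vertex $i$ have been integrated out, the factor $\varphi_{s_i(\tau)}$ attached to $i$ integrates, by translation invariance, to the constant $\tg(s_i(\tau))$ of \eqref{eq:rr10}, independently of the position of $i$ itself. Hence
\[
\int_{\mathbb{R}^{dn}} B_n(0;x_1,\ldots,x_n)\,dx_1\cdots dx_n\;=\;\sum_{\tau\in\mathcal T^0[n]}\;\prod_{i=0}^{n}\tg\bigl(s_i(\tau)\bigr)\qquad(\tg(0):=1),
\]
so that $\Beta(z)=\sum_{n\ge 0}\frac{z^n}{n!}\sum_{\tau\in\mathcal T^0[n]}\prod_{i=0}^{n}\tg(s_i(\tau))$, the $n=0$ term being $1$ and all coefficients being non‑negative.

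Second, I would pass to truncated tree sums $\Beta^N(z)$, defined exactly as $\Beta(z)$ but with the sum restricted to trees of depth at most $N$; then $\Beta^0\equiv 1$ and $\Beta^N\nearrow\Beta$ pointwise by monotone convergence. Here Cayley's formula enters: the number of trees in $\mathcal T^0[n]$ whose vertices have children‑numbers $(s_0,\ldots,s_n)$, with $s_0\ge1$ and $\sum_i s_i=n$, is the multinomial $\binom{n-1}{s_0-1,\,s_1,\ldots,s_n}$. Combining this with the decomposition of a rooted tree into its root and the subtrees hanging from the root's children (an exponential–formula bookkeeping) yields the recursion
\[
\Beta^{N}(\card z)\;=\;\Psi\!\bigl(\card z\,\Beta^{N-1}(\card z)\bigr)\;,\qquad N\ge 1\;,
\]
with $\Psi$ the vertex sum \eqref{eq:rn8.0}.

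Third, I would run the induction under hypothesis \eqref{eq:rr9.0.easy}. Set $c:=\card z\,(\Psi(\mu)-1)/\mu$, so that $0\le c\le \card z\,\Psi(\mu)/\mu<1$. From $\Psi(0)=1$, from $(x/\mu)^n\le x/\mu$ for $n\ge1$ and $0\le x\le\mu$, and from $\tg(n)\ge 0$, one gets the elementary bound $\Psi(x)\le 1+\tfrac{x}{\mu}\bigl(\Psi(\mu)-1\bigr)$ on $[0,\mu]$. Feeding this into the recursion, an induction on $N$ proves simultaneously that $\card z\,\Beta^{N}(\card z)\le\mu$ and that $\Beta^{N}(\card z)\le 1+c\,\Beta^{N-1}(\card z)\le\sum_{j=0}^{N}c^{\,j}\le\tfrac{1}{1-c}$. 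Letting $N\to\infty$ gives the absolute convergence of $\Beta(z)$ in the disc \eqref{eq:rr9.0.easy} together with
\[
\Beta(\card z)\;\le\;\frac{1}{1-c}\;=\;\frac{1}{1-\card z\,(\Psi(\mu)-1)/\mu}\;\le\;\frac{1}{1-\card z\,\Psi(\mu)/\mu}\;,
\]
which is the claimed bound \eqref{eq:rr9.0.01.easy} (in fact slightly sharper).

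The one place that needs genuine care is the combinatorial recursion for $\Beta^N$: one must check that, after integration, the weight carried by a tree depends only on the multiset of children‑numbers of its vertices, since this is precisely what makes Cayley's degree‑count applicable and lets the subtree resummation close. Everything else — the leaf‑to‑root integration in the first step and the exchange of limits and sums in the third — is routine, being rearrangements of series with non‑negative terms, justified by Tonelli and monotone convergence.
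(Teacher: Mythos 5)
Your proposal is correct, but it follows a genuinely different route from the one the paper uses for this proposition. The paper's ``simple proof'' in Section \ref{ssec:rrproof.eas} is a direct term-by-term computation: it converts children-numbers to degrees, invokes Cayley's formula \eqref{eq:cayley} for the number of trees with a prescribed degree sequence, multiplies and divides by $\mu^n$ (using $d_0+\sum_{i\ge1}(d_i-1)=n$), bounds $1/[n(d_0-1)!]\le 1/d_0!$, relaxes the constraint $\sum d_i=2n$ so the sum factorizes into powers of $\Psi(\mu)$, and sums the resulting geometric series. You instead set up the depth-truncated recursion $\Beta^{N}(\card z)=\Psi\bigl(\card z\,\Beta^{N-1}(\card z)\bigr)$ --- which is exactly the subtree-resummation identity of Proposition \ref{prop:proof.1} specialized to constant $\boldmu$ and translation-invariant weights, i.e.\ the engine of the \emph{general} proof in Section \ref{ssec:rrproof.gen} --- and then extract the geometric bound by linearizing $\Psi$ on $[0,\mu]$ via $\Psi(x)\le 1+\tfrac{x}{\mu}(\Psi(\mu)-1)$ and running an induction in $N$. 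Each approach has its merits: yours delivers $\card z\,\Beta(\card z)\le\mu$ as a by-product (essentially Theorem \ref{th:cluster.1}) and a marginally sharper denominator $1-\card z(\Psi(\mu)-1)/\mu$, at the cost of re-establishing the recursion that the paper proves separately; the paper's version is self-contained and needs only Cayley's count, which is precisely the point of offering a ``simpler'' proof. One small remark: in your second step Cayley's formula is not actually needed --- the decomposition into the root and the subtrees hanging from its children, together with permutation invariance of the vertex weights, already yields the recursion (this is how Proposition \ref{prop:proof.1} is proved); your invocation of the degree count there is harmless but redundant, and it is in the paper's own argument that Cayley's formula is the essential ingredient.
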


\begin{remark}
The bound \eqref{eq:rr9.0.01.easy} implies that in the smaller region
\begin{equation}\label{eq:rr9.10.easy}
\card z\;<\; \frac{\mu}{1+\Psi({\mu})}
\end{equation}
we have
\begin{equation}\label{eq:rr9.11.easy}
\card z \Beta(\card z)\;\le\;\mu\;,
\end{equation}
which, together with \eqref{eq:rr9.-1.easy} (almost) proves Part (ii) of Theorem \ref{th:cluster}
\end{remark}

\begin{proof}
Let us denote
\begin{equation}\label{eq:rr1.proof.easy}
\overline B_n \;=\;\int_{\mathbb R^{dn}} B_{n}(0;x_1,\ldots, x_n)\,d\,x_1\cdots d\,x_n\;.
\end{equation}
The factorization structure of \eqref{eq:rn5.gen.1} together with the translation invariance imply than
\begin{equation}\label{eq:rr2.proof.easy}
\overline B_n \;=\;\sum_{\tau\in\mathcal T^0[n]} \prod_{i=0}^{n} g(s_i)
\end{equation}
with $\tg(n)$ defined in \eqref{eq:rr10}.  Denoting $d_i$ the degree of each vertex, we observe that $d_0=s_0$ but $d_i=s_i+1$ for $i=1,\ldots, n$.  Thus,
\begin{equation}\label{eq:rr3.proof.easy}
\overline B_n \;=\;\sum_{d_0\ge 0, d_1,\ldots,d_n\ge 1 \atop d_0+\cdots+d_n=2n} N(d_0,\ldots,d_n)\, g(d_0)\prod_{i=1}^{n} g(d_i-1)
\end{equation}
The numeric factor is the number of trees with $n+1$ vertices and the specified degrees.  By Cayley's theorem it equals
\begin{equation}\label{eq:cayley}
N(d_0,\ldots,d_n)\;=\; \binom{n-1}{(d_0-1)\,\cdots\, (d_n-1)}\;.
\end{equation}
Hence,
\begin{eqnarray}\label{eq:rr4.proof.easy}
\frac{{\card z}^n}{n!}\,\overline B_n &=& \frac{{\card z}^n}{n}\,\sum_{d_0\ge 0, d_1,\ldots,d_n\ge 1 \atop d_0+\cdots+d_n=2n} \frac{g(d_0)}{(d_0-1)!}\prod_{i=1}^{n} \frac{g(d_i-1)}{(d_i-1)!}\nonumber\\
&=&\left(\frac{{\card z}}{\mu}\right)^{n}\,\sum_{d_0\ge 0, d_1,\ldots,d_n\ge 1 \atop d_0+\cdots+d_n=2n} \frac{g(d_0)\,\mu^{d_0}}{n(d_0-1)!}\prod_{i=1}^{n} \frac{g(d_i-1)\,\mu^{d_i-1}}{(d_i-1)!}\;.
\end{eqnarray}
The last line simply amounts to multiplying and dividing the right-hand side by $\mu^n$.  We now bound $1/[n(d_0-1)!]$ by $1/d_0!$ and sum over $n$ to conclude
\begin{equation}\label{eq:rr5.proof.easy}
\Beta(\card z)\;\le\; \sum_{n\ge 0} \left[\frac{\card z\,\Pi(\mu)}{\mu}\right]^n\;=\;\frac{1}{1- \card z\frac{\Psi(\mu)}{\mu}}\;.
\end{equation}
\end{proof}

\section{Expression of the virial coefficients}\label{sec:pf.virial-coeff}
In this section we present a detailed derivation of the alternative expression for the virial coefficients to be used in the proof of Theorem \ref{th:rr1}.

\subsection{Formal power series and their operations}\label{ss:fseroper}
The rigorous study of expansion techniques in Physics is a two-step process: (1) Find the general expression ---and some properties--- of the coefficients.  (2) Determine the convergence properties of the resulting series.  The combinatorial aspects inherent to the first step are part of the theory of formal power series.  As this theory may not be well known by the intended readership of the present article, we present in the sequel a brief review, following the excellent summary in \cite{wiki_formal}.

Informally, formal power series are generalizations of polynomials involving countably many powers.  Rigorously, a \emph{formal power series on a commutative ring} $R$ (here $R=\mathbb{R}$) is a countable family of coefficients $\underline a=(a_n)_{n\in\mathbb{N}}$ endowed with the ring structure provided by the coordinate-wise sum and the convolution product:
\begin{eqnarray}\label{eq:bell.1}
\underline a + \underline b&=&(a_n+b_n)\\
\label{eq:bell.2}
(\underline a * \underline b)_n &=& \sum_{k=0}^n a_k\,b_{n-k}\;.
\end{eqnarray}
An alternative notation showing more clearly their character of generalized polynomials is obtained by denoting $X=(\delta_{n,1})$ and, more generally, $X^m=\underline\delta^{*m}=(\delta_{n,m})$.  Using the previous operations we obtain
\begin{equation}\label{eq:bell.3}
\underline a\;=\; \sum_{n\ge 0} a_n\,X^n\;=:\; \underline a(X)
\end{equation}
and the operations \eqref{eq:bell.1}-\eqref{eq:bell.2} map into the natural generalizations of the sum and product of polynomials:
\begin{eqnarray}\label{eq:bell.1.1}
\underline a(X) + \underline b(X)&=&\sum_{n\ge 0}(a_n+b_n)X^n\\
\label{eq:bell.2.1}
\underline a(X) \, \underline b(X)&=&\sum_{n\ge 0}(\underline a * \underline b)_n X^n\;.
\end{eqnarray}
The resulting ring of series is denoted $\rser$.  It is endowed with the product discrete topology topology of $R^{\mathbb N}$, in which convergence means subsequent stabilization of terms.  In this way the sum in \eqref{eq:bell.3} becomes an honest infinite sum. If the object $X$ is allowed to take values in some normed space, convergence of the resulting series results in a well-defined function on that space.

Relations between formal power series are more compactly described in terms of the term-extracting operation $\ext{\ }$ defined by
\begin{equation}\label{eq:bell.4}
\ext{X^m}\sum_{n\ge 0} a_n\,X^n\;=\; a_m\;.
\end{equation}
The following are operations between formal power series that correspond to well known operations between polynomials or between analytical series.  
\smallskip

\paragraph{\bf Powers:} The $n$-th power of a series is defined by
\begin{equation}\label{eq:bell.4.1}
\ext{X^m}\underline a(X)^n\;=\; \sum_{\begin{array}{c}\scriptstyle(k_1,\ldots,k_n)\\ \scriptstyle k_i\ge 0\\ \scriptstyle k_1+\cdots+k_n=m\end{array}} a_{k_1}\cdots a_{k_n}\;.
\end{equation}
\smallskip

\paragraph{\bf Multiplicative inverse:} If the ring $R$ has unit 1, the series $\underline\delta = \delta_{n,0}$ is the unit of the ring $\rser$.  
A formal series $\underline a(X)$ has multiplicative inverse $\underline a^{-1}=:1/\underline a$ if and only if $a_0$ is invertible. The inverse satisfies the recursive equations
\begin{eqnarray}\label{eq:bell.5}
(a^{-1})_0 &=& \frac{1}{a_0}\nonumber\\
(a^{-1})_n &=& -\frac{1}{a_n}\sum_{k=1}^n a_k (a^{-1})_{n-k}\quad,\quad n\ge 1\;.
\end{eqnarray}
This multiplicative inverse, in turns, leads to the natural definition of negative powers of $\underline a(X)$.
\smallskip

\paragraph{\bf Composition:} Operations \eqref{eq:bell.1}-\eqref{eq:bell.2} can be used to develop the series obtained by replacing $X$ by another series. For this operation to make sense, however,  the coefficients of the composed series must involve only a finite number of coefficients of the series being composed.  This happens if the internal series has no constant term.  The  composition of two series $\underline a$ and $\underline b$, with $b_0=0$ is, thus, formally defined by 
\begin{equation}\label{eq:bell.6}
\ext{X^m}\underline a\bigl(\underline b(X)\bigr)\;=\; \sum_{k= 1}^m a_k \sum_{\begin{array}{c}\scriptstyle j_1\ge 1,\ldots,\,j_k\ge 1\\ \scriptstyle j_1+\cdots+j_k=m\end{array}} b_{j_1}\cdots b_{j_k}\;.
\end{equation}
for $m\ge 1$, and constant term $a_0$.  Notice that the restriction $j_1+\cdots+j_k=m$ implies $j_k\le m-k+1$ By grouping terms differing only on the order of factors we arrive to the alternative expression
\begin{equation}\label{eq:bell.7}
\ext{X^m}\bigl(\underline a\circ \underline b\bigr)(X)\;=\; \sum_{k= 1}^m a_k \hspace{-.5cm} \sum_{\begin{array}{c}\scriptstyle \alpha_1\ge 1,\ldots,\,\alpha_{m-k+1}\ge 1\\ \scriptstyle \alpha_1+\cdots+\alpha_{m-k+1}=k \\ \scriptstyle 1\alpha_1+2\alpha_2+\cdots + (m-k+1) \alpha_{m-k+1}=m\end{array}} \hspace{-.5cm}\binom{k}{\alpha_1\ \cdots\ \alpha_{m-k+1}}\,b_1^{\alpha_1}\cdots b_{m-k+1}^{\alpha_{m-k+1}}\;.
\end{equation}
which is one of the variants of the Fa\`a di Bruno's formula. The second sum on the left defines, in fact, the \emph{partial ordinary Bell polynomials} $\hat B_{m,k}$ \cite{Com70}, so
\begin{equation}\label{eq:bell.7.1}
\ext{X^m}\bigl(\underline a\circ \underline b\bigr)(X)\;=\; \sum_{k= 1}^m a_k \,\hat B_{m,k}(b_1,\ldots, b_{m-k+1})\;.
\end{equation}
\smallskip

\paragraph{\bf Formal differentiation:}  The operation is term-wise defined in the obvious way: 
\begin{equation}\label{eq:bell.7.2}
D\underline a(X)\;=\;\underline a'(X)\;:=\; \sum_{n\ge 0} n\, a_n\, X^{n-1}\;.
\end{equation}
As the normal derivative, the operation is linear, satisfies Leibnitz product rule and the chain rule: $(\underline a\circ\underline b)'=[\underline a'\circ b]\underline b'$.

\subsection{Formal Laurent series. Lagrange inversion formula}
The previous construction can be extended to series including a finite number of negative powers of $X$, namely to sequences $\underline a=(a_n)_{n\in\mathbb{Z}}$ such that exists a finite integer $\mathrm{ord}(\underline a)=\min\{n: a_n\neq 0\}$.  These series, together with the operations \eqref{eq:bell.1.1}--\eqref{eq:bell.2.1} forms the ring of \emph{formal Laurent series}, denoted $\rlaur$.  All the previous operations extend naturally to this larger ring.  Notice that every Laurent series is of the form
\begin{equation}\label{eq:bell.7.3}
\underline a(X)\;=\; X^m\,\utilde a(X) \quad,\quad \utilde a(X)\in \rser \quad,\quad \utilde a_0\neq 0
\end{equation}
with $m=\mathrm{ord}(\underline a)$.  Among the terms of Laurent series, the one corresponding to $X^{-1}$ plays a special role.  In analogy to the theory of  analytic functions, the coefficient of this term is called the \emph{residue}, so that $\res(\underline a)=a_{-1}$.  Here are some useful remarks:
\begin{itemize}
\item[(P1)] Every Laurent series takes the form $\underline a= a_{-1} X^{-1} + \underline A$ with $\res(\underline A)=0$.
\item[(P2)]  $\res(\underline a')=0$ for any Laurent series $\underline a$.
\item[(P3)]  $\res(\underline a\,\underline b') = - \res(\underline a'\,\underline b)$.  This is a consequence of (P2) plus the fact that $\underline a\,\underline b'+\underline a'\,\underline b= (\underline a\,\underline b)'$.
\item[(P4)]  Residues allow to turn extraction into product:
\begin{equation}\label{eq:bell.8}
[X^k]\,\underline a(X)\;=\; \res\bigl(X^{k-1} \utilde a\bigr)\;.
\end{equation}
\end{itemize} 
For our purposes, however, the main uses of Laurent series ---and the handling of residues--- is the determination of \emph{compositional inverses} of formal power series.  The issue is to find, for a given series $\underline b\in\rser$ with $b_0=0$, another series $\sdag{b}$ such that $(\sdag{b}\circ\underline b)(X)= X$.  A quick check of the composition algorithm \eqref{eq:bell.6} shows that $\sdag b$ exists if and only if $\underline b$ is of the form
\begin{equation}\label{eq:bell.9}
\underline b(X)\;=\; X\,\utilde b(X)\quad , \quad \utilde b(X)\in \rser\quad, \quad b_0\neq 0\;,
\end{equation}
\emph{provided that} all the coefficients $b_i$ be invertible.  For this reason the theory of compositional inverses is done assuming that $R$ is in fact a field $K$ without .  In this case, every formal Laurent series $\underline a\in \klaur$ has a multiplicative inverse $\underline a^{-1}\in \klaur$, so $\klaur$ becomes a field with the operations \eqref{eq:bell.1.1}--\eqref{eq:bell.2.1}. In this framework we have the following useful property:
\begin{itemize}
\item[(P5)] For any $\underline a\in \klaur$
\begin{equation}\label{eq:bell.10}
\res\bigl(\underline a'/\underline a\bigr)\;=\; \mathrm{ord}(\underline a)\;.
\end{equation}
\end{itemize}
Indeed, if $m=\mathrm{ord}(\underline a)$, then by \eqref{eq:bell.7} and Leibnitz rule
\begin{equation}\label{eq:bell.11.-1}
\underline a'/\underline a\;=\; \frac{mX^{m-1}\,\utilde a + X^m\,\utilde a'}{X^m\,\utilde a}
\;=\; \frac{m}{X} + \bigl(\utilde a'/\utilde a \bigr)\;.
\end{equation}
The last term has no residue because both $\utilde a'$ and $\utilde a^{-1}$ belong to $\kser$.
 \smallskip
 
The last properties needed to determine the compositional inverse requires that, in addition, $K$ be $\mathbb{Q}$-linear.  Then all terms in a Laurent series, except the residual one, can be written as derivatives: $X^k=\bigl[( X^{k+1}/(k+1)\bigr]'$. Hence, property (P1) becomes
\begin{itemize}
\item[(P$1'$)] Every Laurent series takes the form 
\begin{equation}\label{eq:bell.11}
\underline a= a_{-1} X^{-1} + \underline A'
\end{equation}
for some $\underline A\in \klaur$.
\end{itemize}
Note that, as a consequence, property (P2) becomes
\begin{itemize}
\item[(P$2'$)]  $\res(\underline a)=0$ if and only if $\underline a=\underline A'$ for some $\underline A\in \klaur$.
\end{itemize}
The last preliminary property is
\begin{itemize}
\item[(P6)] For any $\underline a\in \klaur$
\begin{equation}\label{eq:bell.12}
\res\bigl[\bigl(\underline a\circ \underline b\bigr)\underline b'\bigr]\;=\; \mathrm{ord}(\underline b)\,\res(\underline a)\;.
\end{equation}
\end{itemize}
Indeed, by property (P$1'$),
\begin{equation}\label{eq:bell.13}
\bigl[\bigl(\underline a\circ \underline b\bigr)\underline b'\bigr]\;=\; a_{-1} \bigl(\underline b'/\underline b\bigr) + \bigl(\underline A\circ \underline b\bigr)'
\end{equation}
and \eqref{eq:bell.12} follows from (P5) and (P1).
\smallskip

The preceding properties are put to good use in the proof of the main result of this section:
\begin{theorem}[\bf Lagrange inversion formula]
Assume $K$ is a $\mathbb Q$-linear field (e.g.\ $\mathbb Q$, $\mathbb R$ or $\mathbb C$).
Let $\underline b, \underline c \in \kser$, with $\underline b$ of the form \eqref{eq:bell.9}.  Then, for all $k\in \mathbb{N}_+$, 
\begin{equation}\label{eq:bell.14}
k[X^k]\,\underline c\circ \sdag b \;=\; [X^{k-1}]\,\underline c'\,\utilde b^{-k}\;.
\end{equation}
In particular, the coefficients of the compositional inverse satisfy
\begin{equation}\label{eq:bell.15}
[X^k]\,\sdag b \;=\; \frac 1k\,[X^{k-1}]\,\utilde b^{-k}\;.
\end{equation}
\end{theorem}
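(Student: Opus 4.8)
The plan is to run the classical residue argument, built entirely on the formal calculus and the properties (P1)--(P6) assembled above. Write $\underline h:=\sdag b$ for the compositional inverse of $\underline b$; it exists and is unique since $K$ is a field and $\underline b$ has the form \eqref{eq:bell.9}, and it satisfies $\underline b\circ\underline h=\underline h\circ\underline b=X$ with $\mathrm{ord}(\underline h)=1$. Because $\utilde b_0=b_1\neq 0$, the series $\utilde b$ is invertible in $\kser$, so $\utilde b^{-k}\in\kser$ is well defined for every $k\ge 1$. Also, since $\underline h$ has positive order, composing any formal Laurent series with $\underline h$ is legitimate, and composition with $\underline h$ is a ring homomorphism that commutes with multiplicative inversion.

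First I would turn coefficient extraction into a residue of a derivative: for any Laurent series $\underline f$ and any $k\ge 1$,
\[
\res\!\left(X^{-k}\,\underline f'\right)\;=\;k\,[X^k]\,\underline f\;,
\]
because the only monomial of $X^{-k}\underline f'$ contributing to $X^{-1}$ is the one coming from the $X^k$ term of $\underline f$. Applying this with $\underline f=\underline c\circ\underline h$ and using the chain rule $(\underline c\circ\underline h)'=(\underline c'\circ\underline h)\,\underline h'$ gives
\[
k\,[X^k]\,(\underline c\circ\underline h)\;=\;\res\!\left(X^{-k}\,(\underline c'\circ\underline h)\,\underline h'\right)\;.
\]

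The heart of the argument is the algebraic change of variables back through $\underline b$. From $\underline b\circ\underline h=X$ and the homomorphism property of composition one gets $X^{-k}=(\underline b\circ\underline h)^{-k}=\bigl(\underline b^{-k}\bigr)\circ\underline h=\bigl(X^{-k}\utilde b^{-k}\bigr)\circ\underline h$, hence
\[
X^{-k}\,(\underline c'\circ\underline h)\,\underline h'\;=\;\Bigl[\bigl(X^{-k}\,\utilde b^{-k}\,\underline c'\bigr)\circ\underline h\Bigr]\,\underline h'\;.
\]
Now apply property (P6) with $\underline h$ in the inner slot and $\underline a:=X^{-k}\utilde b^{-k}\underline c'$: since $\mathrm{ord}(\underline h)=1$, the left side above has residue $\mathrm{ord}(\underline h)\,\res(\underline a)=\res(\underline a)$. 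Finally $\res(\underline a)=[X^{-1}]\bigl(X^{-k}\utilde b^{-k}\underline c'\bigr)=[X^{k-1}]\bigl(\underline c'\,\utilde b^{-k}\bigr)$, which is exactly \eqref{eq:bell.14}. Specializing to $\underline c(X)=X$, so that $\underline c'=1$ and $\underline c\circ\underline h=\underline h=\sdag b$, yields \eqref{eq:bell.15}.

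I expect the delicate points to be bookkeeping rather than conceptual. One must confirm that every composition appearing is well defined --- this is ensured because the inner series $\underline h=\sdag b$ has positive order --- and that composition with $\underline h$ respects products and multiplicative inverses on the subring where it is defined; these are the standard facts already implicitly used in the proofs of (P5)--(P6). The one place to be careful is the invocation of (P6): it is applied with $\sdag b$, not $\underline b$, in the inner position, so the factor produced is $\mathrm{ord}(\sdag b)=1$ rather than $\mathrm{ord}(\underline b)$. Once these are checked, the computation is the chain of identities displayed above.
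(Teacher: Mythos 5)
Your proof is correct and takes essentially the same route as the paper's: both are the classical residue-calculus argument, reducing the coefficient extraction to a residue and invoking property (P6) to transport that residue between the $\underline b$ and $\sdag b$ variables. The only (immaterial) difference is the order of operations --- you differentiate $\underline c\circ\sdag b$ first via the chain rule and apply (P6) with $\sdag b$ in the inner slot, whereas the paper applies (P4) and (P6) with $\underline b$ in the inner slot and then moves the derivative onto $\underline c$ by the integration-by-parts identity (P3).
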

\begin{proof}
We apply (P4) and (P6) to obtain:
\begin{eqnarray}\label{eq:bell.16}
k[X^k]\,\underline c\circ \sdag b &=& k\,\res\bigl[X^{-k-1}\,\underline c\circ \sdag b\bigr]
\;=\; k\,\res\Bigl[\bigl[(X^{-k-1}\,\underline c\circ \sdag b)\circ \underline b\bigr] \,\underline b'\Bigr]\nonumber\\
&=& \res\bigl[k\,\underline b^{-k-1}\,\underline b'\,\underline c\bigr]
\;=\; -\res\bigl[ (\underline b^{-k-1})'\underline c\bigr]\;.
\end{eqnarray}
To conclude we apply (P3) and (P4):
\begin{equation}\label{eq:bell.17}
 -\res\bigl[ (\underline b^{-k-1})'\underline c\bigr]\;=\; \res\bigl[ \underline b^{-k-1}\underline c'\bigr]
\;=\; [X^{k-1}]\,\underline c'\,\utilde b^{-k}\;.
\end{equation}
\end{proof}

\subsection{Virial coefficients and Bell polynomials} Both the series \eqref{eq:rr.fce1} and \eqref{eq:virial3} correspond to formal series $\uhat a$ of the form 
\begin{equation}\label{eq:bell.18}
 \widehat a_n\;=\; \frac{a_n}{n!}\;.
\end{equation}
Such series are often referred to as \emph{exponential power series}.
Notice that, as formal differentiation acts on formal series analogously to regular derivatives on analytic functions,
\begin{equation}\label{eq:bell.19}
  a_n\;=\; D^n\, \uhat a(X)\bigr|_{X=0}
\end{equation}
and, furthermore,
\begin{equation}\label{eq:bell.20}
  (D\underline a)_n\;=\; a_{n+1}\;.
\end{equation}
Formula \eqref{eq:bell.7} implies that the composition $\uhat c=\uhat a \circ \uhat b$ of two such series takes the form
\begin{eqnarray}\label{eq:bell.20.ex}
c_n&=& \sum_{k= 1}^n a_k \hspace{-.5cm} \sum_{\begin{array}{c}\scriptstyle \alpha_1\ge 1,\ldots,\,\alpha_{n-k+1}\ge 1\\ \scriptstyle \alpha_1+\cdots+\alpha_{n-k+1}=k \\ \scriptstyle 1\alpha_1+2\alpha_2+\cdots + (n-k+1) \alpha_{n-k+1}=n\end{array}}  \hspace{-.5cm}\frac{n!}{\prod_{i=1}^{n-k+1} \alpha_i!}\,\prod_{i=1}^{n-k+1} \left[\frac{b_i}{i!}\right]^{\alpha_i}\nonumber\\[10 pt]
&=:& \sum_{k= 1}^n a_k \, B_{n,k}(b_1,\ldots, b_{n-k+1})
\end{eqnarray}
where the factors $ B_{n,k}$ are the \emph{partial exponential Bell polynomials}.

The use of Lagrange inversion formula requires the coefficients of negative powers of a series $\underline b$ with $ b_0=1$. This corresponds to the composition with the series 
\begin{equation}\label{eq:bell.21}
(1+X)^r\;=\; \sum_{n\ge 0}\binom{r}{n}\,X^n
\end{equation}
valid for any real $r$, in which $\binom{r}{n}$ is the generalized binomial coefficient
\begin{equation}\label{eq:bell.22}
\binom{r}{n}\;=\; \frac{r(r-1)\cdots (r-n+1)}{n!}  
\end{equation}
for $n\ge 1$ and $\binom{r}{0}=1$ for all $r\in\mathbb{R}$.  
The series \eqref{eq:bell.22} is readily obtained using the algorithm \eqref{eq:bell.19}.  Composing \eqref{eq:bell.21} with the series $\underline b -1$ and applying \eqref{eq:bell.20.ex} we obtain that if $b_0=1$ and $r\in\mathbb R$,
\begin{equation}\label{eq:bell.23}
\Bigl(\sum_{n\ge 0} \frac{b_n}{n!} \,X^n\Bigr)^r\;=\; \sum_{n\ge 0} \frac{c_n}{n!} \,X^n
\end{equation}
with $c_0=1$ and
\begin{equation}\label{eq:bell.24}
c_n\;=\; \sum_{k= 1}^n \binom{r}{k} \,k!\, B_{n,k}(b_1,\ldots, b_{n-k+1})\quad,\ n\ge 1\;.
\end{equation}
Finally we can prove the needed expression for the virial coeeficients.  

\begin{proposition}The coefficients $\beta_n$ of the virial expansion \eqref{eq:rr.ve} is obtained from the coefficients $b_n$ of the cluster expansion \eqref{eq:rr.ce} through the identity
\begin{equation}\label{eq:rel4}
\beta_{n+1}\;=\;\sum_{k=1}^n\binom{-n}{k}\,k!\,  B_{n,k}(b_2,\ldots,b_{n-k+2})\;.
\end{equation}
\end{proposition}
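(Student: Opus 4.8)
The plan is to deduce \eqref{eq:rel4} from the Lagrange inversion formula \eqref{eq:bell.14} together with the power-of-a-series expansion \eqref{eq:bell.23}--\eqref{eq:bell.24}; no analytic input is needed, everything taking place in $\mathbb{R}[\![X]\!]$, and the only genuine work is the bookkeeping of index shifts. First I would regard the density \eqref{eq:rr.fce2} as the formal power series $\underline\rho(X)=\sum_{m\ge0}\frac{b_{m+1}}{m!}X^{m+1}=X\,\utilde\rho(X)$, with $\utilde\rho(X)=\sum_{m\ge0}\frac{b_{m+1}}{m!}X^m$, and the pressure \eqref{eq:rr.ce} as $\underline p(X)=\sum_{n\ge1}\frac{b_n}{n!}X^n$. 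One has $b_1=1$ (the $n=1$ term of \eqref{eq:rr31.0} being $z\,\card\Lambda$ since $\omega^T_1\equiv1$), so $\utilde\rho$ has invertible constant term, $\underline\rho$ is of the form \eqref{eq:bell.9}, and its compositional inverse $\sdag\rho$ ---the formal series $z(\rho)$--- exists. By the definition of the virial expansion, $\underline p\circ\sdag\rho=\sum_{n\ge1}\frac{\beta_n}{n!}X^n$, so that $[X^{n+1}]\bigl(\underline p\circ\sdag\rho\bigr)=\beta_{n+1}/(n+1)!$.

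The observation that makes the rest short is that the formal derivative of the pressure series is exactly the reduced density series: $\underline p'(X)=\sum_{n\ge1}\frac{b_n}{(n-1)!}X^{n-1}=\sum_{m\ge0}\frac{b_{m+1}}{m!}X^m=\utilde\rho(X)$. Applying \eqref{eq:bell.14} with $\underline b=\underline\rho$, $\underline c=\underline p$ and $k=n+1$, and then substituting $\underline p'=\utilde\rho$, the right-hand side collapses to $[X^n]\,\utilde\rho\cdot\utilde\rho^{-(n+1)}=[X^n]\,\utilde\rho^{-n}$; hence $\beta_{n+1}/n!=[X^n]\,\utilde\rho^{-n}$.

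It then remains to read off $[X^n]\utilde\rho^{-n}$ from \eqref{eq:bell.23}--\eqref{eq:bell.24}. Since $\utilde\rho(X)=\sum_{m\ge0}\frac{b_{m+1}}{m!}X^m$ has constant term $1$, formula \eqref{eq:bell.24} applies with exponent $r=-n$, the coefficient denoted $b_m$ there being our $b_{m+1}$; extracting the coefficient of $X^n$ gives $[X^n]\utilde\rho^{-n}=\frac1{n!}\sum_{k=1}^n\binom{-n}{k}k!\,B_{n,k}(b_2,\ldots,b_{n-k+2})$, and multiplying through by $n!$ yields \eqref{eq:rel4}. The one point that demands care is exactly this final bookkeeping: it is the identity $\underline p'=\utilde\rho$ above that turns the naively expected power $-(n+1)$ into $-n$, and the arguments of the Bell polynomials must be shifted from $b_1,\ldots,b_{n-k+1}$ to $b_2,\ldots,b_{n-k+2}$ to match the shifted coefficients of $\utilde\rho$; beyond that the proof is a routine application of the formal-series machinery assembled above.
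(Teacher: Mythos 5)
Your proposal is correct and follows essentially the same route as the paper's own proof: apply the Lagrange inversion formula \eqref{eq:bell.14} with the density series as the inner series and the pressure series as the outer one, observe that $\utilde\rho=\uhat b'$ so the exponent drops from $-(n+1)$ to $-n$, and read off the coefficient via \eqref{eq:bell.24} with the index shift $(Db)_m=b_{m+1}$. Your added remarks (that $b_1=1$ guarantees invertibility of the constant term, and the explicit bookkeeping of the shift in the Bell-polynomial arguments) are details the paper leaves implicit, but the argument is the same.
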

\begin{proof} 
Denote $\uhat{\beta}$ and $\uhat b$ respectively the formal virial and cluster series, and $\uhat c$ the density series defined by \eqref{eq:rr.fce1}.  They are related by the identities
 \begin{equation}\label{eq:bell.25}
\uhat{\beta}\;=\; \uhat b \circ \uhat c\qquad,\quad \uhat c(X)=X\uhat b'(X)
\end{equation}
Hence, by the Lagrange inversion formula,
 \begin{equation}\label{eq:bell.26}
[X^{n+1}]\,\uhat{\beta}\;=\; \frac{1}{n+1} [X^n] \,\uhat b' (\uhat b')^{-(n+1)}\;=\; \frac{1}{n+1} [X^n] (\uhat b')^{-n}\;.
\end{equation}
Identity \eqref{eq:rel4} follows from \eqref{eq:bell.24} and \eqref{eq:bell.20}.
\end{proof}

\section{Proof of the virial expansion results}\label{sec:pf.virial}
In this section we turn to the proof of the results of Section  \ref{s:rr3}. We need only to prove the Theorem \ref{th:rr1}.  The treatment works for general measure spaces $(S,dx)$ and possibly complex two-body weights with $\card{w(x,y)}\le 1$.

\subsection{Virial coefficients and families of trees}\label{ssec:ct}
Let us start with a well known combinatoric interpretation of Bell polynomials.
\begin{proposition}\label{prop:bell}
 \begin{equation}\label{eq:bell.27}
B_{n,k}(c_1,\ldots,c_{n-k+1})\;= \sum_{\{I_1,\ldots,I_k\} \atop \mathrm{n.t.\,part.\;of}\,\{1,\ldots, n\}}
c_{\card{I_1}} \cdots c_{\card{I_k}}\;.
\end{equation}
%[``n.t." = non-trivial, means that $I_i\neq\emptyset$ for $i=1,\ldots, k$.]
\end{proposition}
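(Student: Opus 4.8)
The plan is to prove the combinatorial identity \eqref{eq:bell.27} by establishing that the partial exponential Bell polynomials, as defined by the multinomial sum appearing in \eqref{eq:bell.20.ex}, coincide with a sum over set partitions. First I would recall the defining expression: for $1\le k\le n$,
\[
B_{n,k}(c_1,\ldots,c_{n-k+1})\;=\; \sum_{\begin{array}{c}\scriptstyle \alpha_1\ge 0,\ldots,\,\alpha_{n-k+1}\ge 0\\ \scriptstyle \alpha_1+\cdots+\alpha_{n-k+1}=k \\ \scriptstyle 1\alpha_1+\cdots + (n-k+1) \alpha_{n-k+1}=n\end{array}}  \frac{n!}{\prod_{i} \alpha_i!\,(i!)^{\alpha_i}}\,\prod_{i} c_i^{\alpha_i}\;.
\]
The strategy is then a standard bijective/counting argument: group the terms on the right-hand side of \eqref{eq:bell.27} according to the ``type'' of the partition, i.e.\ the vector $(\alpha_1,\alpha_2,\ldots)$ where $\alpha_i$ records the number of blocks of cardinality $i$. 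All partitions of the same type contribute the same monomial $\prod_i c_i^{\alpha_i}$, so it remains to count how many set partitions of $\{1,\ldots,n\}$ into $k$ nonempty blocks have prescribed type $(\alpha_1,\ldots,\alpha_{n-k+1})$.

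The key step is the classical count of that number: it equals $\dfrac{n!}{\prod_i (i!)^{\alpha_i}\,\alpha_i!}$. I would justify this by the usual overcounting argument --- order the $n$ elements in one of $n!$ ways, chop the sequence into consecutive blocks of the prescribed sizes, then divide by $\prod_i (i!)^{\alpha_i}$ for the internal orderings within blocks of each size and by $\prod_i \alpha_i!$ for permutations among the blocks of equal size. The constraints $\sum_i \alpha_i = k$ (total number of blocks) and $\sum_i i\,\alpha_i = n$ (total number of elements) are exactly the index constraints in the sum defining $B_{n,k}$, so substituting the count back into the type-grouped form of \eqref{eq:bell.27} reproduces the defining multinomial sum verbatim. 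This matches the restriction $i\le n-k+1$ since a block can have size at most $n-k+1$ when there are $k$ blocks in total.

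This proposition has no real obstacle; it is a bookkeeping lemma. The only point requiring a moment of care is making sure the index ranges in the two expressions align --- in particular that allowing $\alpha_i=0$ in the multinomial sum is harmless (those factors contribute $1$) and that no partition type is omitted --- but this is routine. I would present the proof compactly: state the type decomposition of the partition sum, invoke the count of partitions of a given type, and observe that the resulting expression is by definition $B_{n,k}(c_1,\ldots,c_{n-k+1})$. The identity is then used immediately afterward to translate the Lagrange-inversion formula for $\beta_{n+1}$ into a sum over families of trees, which is where the substantive work of Section \ref{sec:pf.virial} begins.
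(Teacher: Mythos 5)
Your proposal is correct and is essentially the paper's own argument: the paper likewise groups the set-partition sum by type $(\alpha_1,\ldots,\alpha_{n-k+1})$ and identifies the multinomial factor $n!/\prod_i[\alpha_i!\,(i!)^{\alpha_i}]$ as the number of partitions of $\{1,\ldots,n\}$ into $k$ blocks with $\alpha_i$ blocks of size $i$. Your version merely spells out the standard overcounting justification of that count, which the paper leaves implicit.
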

\begin{proof} It amounts to the observation that each factor
\begin{equation}\label{eq:bell.28}
\frac{n!}{\prod_{i=1}^{n-k+1} [\alpha_i!\,(i!)^{\alpha_i}]}
\end{equation}
is the number of partitions of $\{1,\ldots, n\}$ into $k$ sets, $\alpha_i$ of which have size $i$.  Thus, the left-hand side in \eqref{eq:bell.28} corresponds to decomposing the sum of the right-hand side according of how many sets of each size are present in the partition.
\end{proof}

From \eqref{eq:rr31.0} (with $\Lambda=S$) and Proposition \ref{prop:rr100.part} we see that
\begin{equation}\label{eq:bell.29}
b_n\;=\; \sum_{\tau\in\trees^\star[n-1]}  W_R(\tau)\;.
\end{equation}
Here $\trees^\star[n-1]$ is the family of trees rooted in $\star$ with additional vertices $\{1,\ldots,n-1\}$, $R$ a partition scheme and
\begin{equation}\label{eq:bell.30}
 W_R(\tau)\;=\;\int_{S^{n-1}} 
\prod_{\{i,j\}\in E(\tau)}\bigl[w(x_i,x_j)-1\bigr]\prod_{\{i,j\}\in E(R(\tau))\setminus E(\tau)}w(x_i,x_j)
\, dx_1\cdots dx_{n-1}\;.
\end{equation}
Due to the assumed translation invariance the expression is independent of the choice of $\star$.  We shall henceforth choose $\star=0$ for expressions involving only weights $W_R(\tau)$ without any further detail on the tree.
Combining \eqref{eq:rel4}, \eqref{eq:bell.27} and \eqref{eq:bell.29} we obtain the following expression for the virial coefficients.

\begin{lemma}\label{lemma8} For any $n\ge 1$, 
\begin{equation}\label{eq:rel8}
\beta_{n+1}\;=\; \sum_{k=1}^n\binom{-n}{k}
\sum_{\begin{array}{c}\scriptstyle (I_1,\ldots,I_k)%: \\ \scriptstyle \{I_1,\ldots,I_k\}
\\ \scriptstyle  \mathrm{n.t.\,part.\;of}\,\{1,\ldots, n\}\end{array}}
\sum_{\begin{array}{c}\scriptstyle\tau_1\in\trees^{0}[I_1]\\ \scriptscriptstyle  \vdots\\ \scriptstyle\tau_k\in\trees^{0}[I_k]\end{array}} W_{R}(\tau_1)\cdots W_{R}(\tau_k)\;
\end{equation}
Here $\trees^{0}[I_j]$ is the set of trees rooted in $0$ with additional vertices in $I_j$. Also, ``$(I_1,\ldots,I_k)$  $\mathrm{n.t.\,part.\;of}$ $\{1,\ldots, n\}$" means that, in fact, $\{I_1,\ldots,I_k\}$ is a partition of $\{1,\ldots,n\}$ with $I_i\neq\emptyset$.
\end{lemma}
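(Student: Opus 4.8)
The plan is to chain together the three facts assembled immediately before the statement: the formal identity \eqref{eq:rel4} writing $\beta_{n+1}$ through partial exponential Bell polynomials of the shifted cluster coefficients, the combinatorial reading \eqref{eq:bell.27} of those Bell polynomials as sums over set partitions, and the tree representation \eqref{eq:bell.29} of each cluster coefficient. None of the three requires any analytic input beyond what has already been established, so the proof should be a bookkeeping exercise rather than anything substantial.

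First I would start from \eqref{eq:rel4} and apply Proposition \ref{prop:bell} to the polynomial $B_{n,k}(b_2,\ldots,b_{n-k+2})$, observing that its $i$-th argument is $b_{i+1}$. This turns $B_{n,k}(b_2,\ldots,b_{n-k+2})$ into the sum of the products $b_{\card{I_1}+1}\cdots b_{\card{I_k}+1}$ over unordered partitions $\{I_1,\ldots,I_k\}$ of $\{1,\ldots,n\}$ into exactly $k$ nonempty blocks. The factor $k!$ appearing in \eqref{eq:rel4} then promotes this to a sum over \emph{ordered} tuples $(I_1,\ldots,I_k)$: the summand is symmetric under permutation of the blocks, and the blocks are nonempty hence pairwise distinct, so each unordered partition is repeated exactly $k!$ times among the ordered tuples. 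This recovers the first two summation symbols of \eqref{eq:rel8}.

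Next I would substitute the tree expansion \eqref{eq:bell.29} into each factor $b_{\card{I_j}+1}$. That formula reads $b_m=\sum_{\tau\in\trees^\star[m-1]}W_R(\tau)$; since translation invariance makes $W_R(\tau)$ depend only on the combinatorial type of the edge-decorated tree and not on the labels of its non-root vertices, I may rename the $m-1=\card{I_j}$ non-root vertices by the elements of $I_j$ and write $b_{\card{I_j}+1}=\sum_{\tau_j\in\trees^{0}[I_j]}W_R(\tau_j)$. Each $W_R(\tau_j)$ is its own integral over $S^{\card{I_j}}$, finite because bounding every edge of $R(\tau_j)$ not already in $\tau_j$ by $1$ and integrating iteratively along the $\card{I_j}$ tree edges gives $\card{W_R(\tau_j)}\le C(\beta)^{\card{I_j}}$; thus all the series in sight are finite sums of absolutely convergent integrals and the manipulations are legitimate for any fixed partition scheme $R$ (the left-hand side being independent of that choice). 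Expanding the product $\prod_{j=1}^k\sum_{\tau_j\in\trees^{0}[I_j]}W_R(\tau_j)$ into the sum over tuples $(\tau_1,\ldots,\tau_k)$ then produces exactly the innermost sum of \eqref{eq:rel8}.

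I expect the only genuinely delicate point to be the $k!$ bookkeeping in passing from the unordered partitions of \eqref{eq:bell.27} to the ordered tuples used in the statement; everything else is plain substitution. Carrying the block labels $I_j$ explicitly rather than collapsing to a label-free form is intentional, since it is precisely these labels that the merging-of-trees construction of the following subsections will act upon.
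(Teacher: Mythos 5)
Your proposal is correct and follows exactly the route the paper takes: the paper obtains Lemma \ref{lemma8} by ``combining \eqref{eq:rel4}, \eqref{eq:bell.27} and \eqref{eq:bell.29}'', and its only explicit remark on the proof is precisely the point you flag as delicate, namely that the factor $k!$ in \eqref{eq:rel4} converts the sum over unordered partitions $\{I_1,\ldots,I_k\}$ into a sum over ordered tuples $(I_1,\ldots,I_k)$. Your additional observations (relabelling the non-root vertices by the elements of $I_j$ via translation invariance, and the finiteness of each $W_R(\tau_j)$) are correct and harmless refinements of the same argument.
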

[Note that the factor $k!$ in \eqref{eq:rel4} is used to pass from a sum over families $ \{I_1,\ldots,I_k\}$ to a sum over $k$-tuples $(I_1,\ldots,I_k)$.]

\subsection{Concatenations and splittings of trees}\label{sec:unspit}
The operation of concatenation of trees allows to write \eqref{eq:rel8} as a sum over a single rooted tree with $n$ non-root vertices.  The operation, first applied in \cite{ram15,ramtat15} where it was called \emph{merging}, exploits the fact that the weight $W_R(\tau)$ are invariant to the choice of the root.  Hence roots can be concatenated so to place the trees $\tau_1,\ldots,\tau_k$ one after the other in order to form a single tree.  This can be done in many different ways, but we are interested in obtaining a tree $\tau$ whose weight $W(\tau)$ is the product of the individual trees.  

\begin{definition} Let $\{I_1,\ldots,I_k\}$ be a partition $\{1,\ldots,n\}$, $\star_1,\ldots,\star_k$ points in $S$ and $R$ a partition scheme.  
\begin{itemize}
\item A \emph{concatenation} of $\star$-rooted trees with non-root vertices in the partition is a map
\begin{equation}\label{eq:bell.31}
\mathrm{M}:\prod_{i=1}^{k}\mathcal{T}^{\star_i}[I_i]\longrightarrow\mathcal{T}^0[n]\qquad,\quad M(\tau_1,\ldots,\tau_k)= \tau\;.
\end{equation}
Note that the order in which the trees are concatenated may matter.

\item The concatenation is $R$-\emph{faithful} if 
\begin{equation}\label{eq:bell.32}
W_R(\tau)\;=\; W_R(\tau_1)\cdots W_R(\tau_k)\;.
\end{equation}
\end{itemize}
\end{definition}

\begin{definition}[\bf Penrose concatenation] \cite{ram15,ramtat15} Let $\{I_1,\ldots,I_k\}$ be a partition $\{1,\ldots,n\}$ and, for $\tau$ a rooted tree, let us denote $j_{\max}(\tau)$ the largest label of the vertices at  maximal (tree) distance from the root.   The \emph{Penrose concatenation} is defined by the map 
\begin{equation}\label{eq:bell.33}
\mathrm{PFM}(\tau_1,\ldots,\tau_k)\;=\; \bigcup_{i=1}^k \tau'_i
\end{equation}
where $\tau'_1,\ldots,\tau'_k$ are the trees obtained from  $\tau_1,\ldots,\tau_k$ by choosing $\star_1=0$ (a fix point in $S$) and $\star_i=j_{\max}(\tau_{i-1})$ for $j=2,\ldots,k$.  [The union of graphs is the graph formed by taking the union of the vertices and the union of the links.]
\end{definition}
Figure \ref{fig:merging} shows a simple example of this operation.  It is clear that the operation indeed yields a tree, but it is far from one-to-one.  

\begin{figure}
  \includegraphics[width=\linewidth]{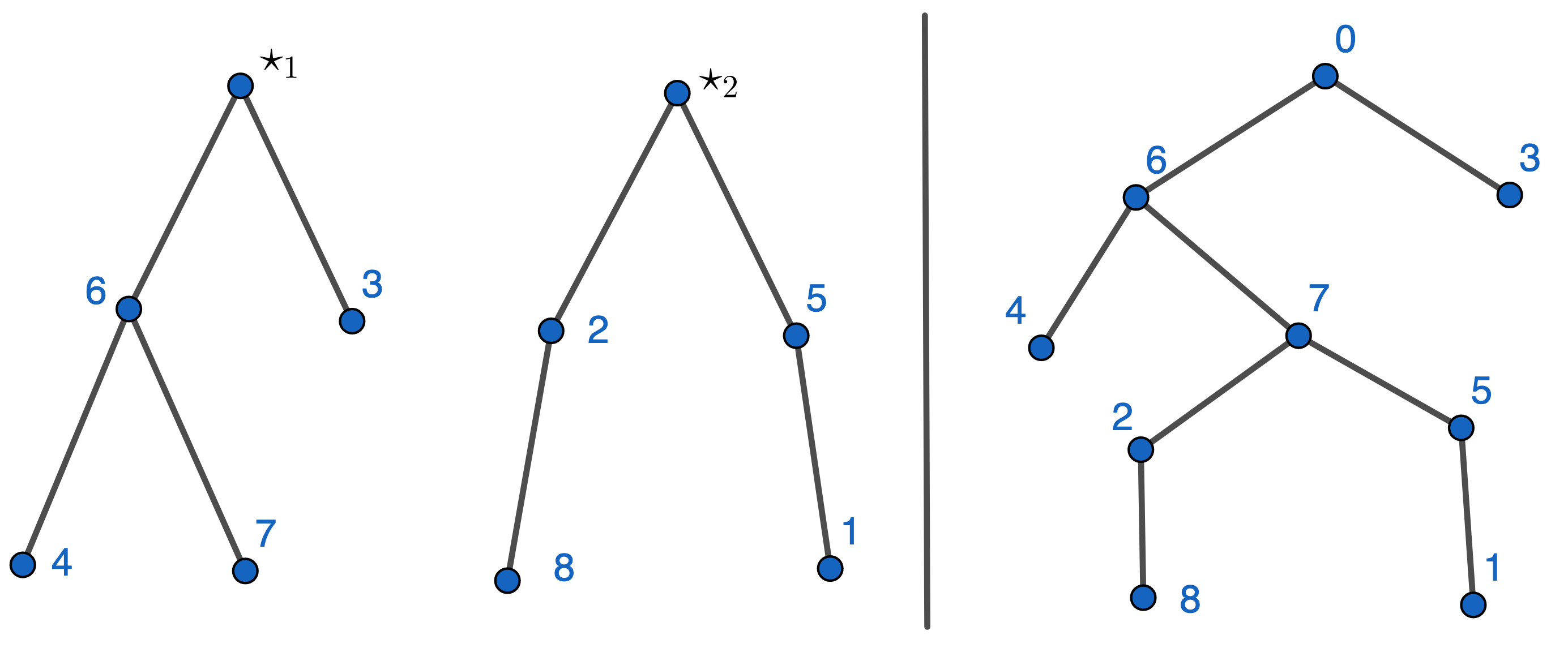}
  \caption{An example of Penrose concatenation. The labels
in the vertices on the left-hand side are the original labels. On the
right-hand side we have the concatenation of these trees.}
  \label{fig:merging}
\end{figure}

\begin{lemma} Consider the Penrose concatenation \eqref{eq:bell.33}. 
\begin{itemize}
\item[(I)] The following properties hold:
\begin{itemize}
\item[(i)] $E(\tau)=\cup_{i=1}^kE(\tau_i')$,
\item[(ii)] $[E\big(R_{\mathrm{Pen}}(\tau)\big)\big]\setminus E(\tau)=\cup_{i=1}^k[E\big(R_{\mathrm{Pen}}(\tau_i')\big)\big]\setminus E(\tau_i')$.
\end{itemize}
\item[(II)] PFM is $R_{\mathrm{Pen}}$-faithful
\end{itemize}
\end{lemma}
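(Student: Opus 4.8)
The plan is to deduce everything from a precise description of how the generations of the concatenated tree $\tau=\mathrm{PFM}(\tau_1,\dots,\tau_k)$ decompose. Write $d_i$ for the depth of $\tau'_i$ and $D_i=d_1+\cdots+d_i$, $D_0=0$. The first step is to check, by induction on $i$, that inside $\tau$ the subtree $\tau'_i$ occupies exactly the generation band $[D_{i-1},D_i]$, with its root $\star_i$ at generation $D_{i-1}$, and that the vertex $\star_{i+1}=j_{\max}(\tau_i)$ it shares with $\tau'_{i+1}$ is one of its maximal-depth vertices, hence sits at generation $D_i$. Two consequences: consecutive subtrees meet in the single vertex $\star_{i+1}$ and non-consecutive ones are vertex-disjoint; and every generation other than the interface generations $D_1<D_2<\cdots$ belongs to a unique subtree. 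Part (I)(i) falls out of this: $\tau$ is built by gluing the trees $\tau'_i$ along the cut vertices $\star_i$, so it is again a tree, and since distinct $\tau'_i$ share at most one vertex, hence no edge, $E(\tau)$ is the disjoint union of the $E(\tau'_i)$.

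For (I)(ii) I would prove two things: (a) every Penrose fill-in edge of $\tau$ has both endpoints in a single $\tau'_i$; and (b) on $\tau'_i$ the Penrose rule for $\tau$ coincides with the Penrose rule for $\tau'_i$. Item (b) is routine bookkeeping: on $V(\tau'_i)$ the $\tau$-generations are the $\tau'_i$-generations shifted by $D_{i-1}$, the parent map is inherited because the edges of $\tau$ with both endpoints in $V(\tau'_i)$ are exactly the edges of $\tau'_i$, and the labels are untouched, so both clauses of the Penrose scheme are read off identically; together with (a) and the absence of shared edges this gives the claimed disjoint decomposition of $E(R_{\mathrm{Pen}}(\tau))\setminus E(\tau)$. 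The crux, and what I expect to be the main obstacle, is (a), i.e.\ ruling out fill-in edges bridging two subtrees. A same-generation cross edge could only occur at an interface generation $D_i$, but that generation consists solely of the maximal-depth vertices of $\tau'_i$ (the lone vertex $\star_{i+1}$ contributed there by $\tau'_{i+1}$ being one of them), so such an edge is internal to $\tau'_i$. The delicate case is a type-(ii) (``uncle'') edge joining a maximal-depth vertex $a$ of $\tau'_i$ to a child $b$ of $\star_{i+1}$ in $\tau'_{i+1}$: by clause (ii) this edge is admitted only if the parent of $b$, namely $\star_{i+1}$, relates to $a$ in the way that clause requires — and the choice $\star_{i+1}=j_{\max}(\tau_i)$, being the largest label among the maximal-depth vertices of $\tau'_i$, makes this fail for every such $a\neq\star_{i+1}$. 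The mirror-image candidate, from a child of $\star_i$ in $\tau'_i$ up to a maximal-depth vertex of $\tau'_{i-1}$, is killed by the same inequality applied to $\star_i=j_{\max}(\tau_{i-1})$.

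Finally, (II). By (I), the integrand defining $W_{R_{\mathrm{Pen}}}(\tau)$ factors as the product over $i$ of the integrands defining $W_{R_{\mathrm{Pen}}}(\tau'_i)$, the $i$-th one depending only on $\{x_a:a\in V(\tau'_i)\}=\{x_{\star_i}\}\cup\{x_a:a\in I_i\}$. I would integrate the position variables out one subtree at a time, from $i=k$ down to $i=1$: the variables $\{x_a:a\in I_k\}$ appear only in the $k$-th factor, and integrating them with $x_{\star_k}$ kept free gives, by translation invariance, a constant equal to $W_{R_{\mathrm{Pen}}}(\tau'_k)$; peeling this scalar off leaves the corresponding integral over $\tau'_1,\dots,\tau'_{k-1}$, and one iterates, so $W_{R_{\mathrm{Pen}}}(\tau)=\prod_{i=1}^k W_{R_{\mathrm{Pen}}}(\tau'_i)$. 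It remains only to observe that $W_{R_{\mathrm{Pen}}}(\tau'_i)=W_{R_{\mathrm{Pen}}}(\tau_i)$: $\tau'_i$ differs from $\tau_i$ only by the label carried by the root, and the root, having no parent and lying at generation $0$, never plays the role of the uncle vertex in clause (ii), nor does its label enter clause (i), so the Penrose fill-in, and hence the integral, is unchanged. Combining, $W_{R_{\mathrm{Pen}}}(\tau)=W_{R_{\mathrm{Pen}}}(\tau_1)\cdots W_{R_{\mathrm{Pen}}}(\tau_k)$, which is \eqref{eq:bell.32}.
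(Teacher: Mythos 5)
Your proof is correct and follows essentially the same route as the paper's: the decisive point in both is that the only candidate cross fill-in edges are the uncle edges from a child of the new root $\star_{i+1}$ to another maximal-depth vertex of $\tau'_i$, and these are excluded because $\star_{i+1}=j_{\max}(\tau_i)$ carries the largest label among those vertices, while part (II) is the same peel-off integration using translation invariance. You merely supply more detail (general $k$ with explicit generation bookkeeping, and the check that the root relabelling does not affect the Penrose fill-in) where the paper reduces to $k=2$ and leaves these points implicit.
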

\begin{proof}
{(I)} Property (i) is immediate from the definition.  It is enough to prove property (ii) for $k=2$.  In this case, the only possible difference between  $[E\big(R_{\mathrm{Pen}}(\tau)\big)\big]\setminus E(\tau)$ and $\cup_{i=1}^2[E\big(R_{\mathrm{Pen}}(\tau_i')\big)\big]\setminus E(\tau_i')$ can come from links connecting the children of the root of $\tau'_2$ and the remaining leaves of $\tau'_1$.  These link, would belong to $E\big(R_{\mathrm{Pen}}(\tau)\big)$ only if there were a leaf of $\tau'_1$ with vertex larger than $j_{\max}$.  This does not happen precisely because $j_{\max}$ was chosen as the leaf of $\tau'_1$ with the \emph{largest} vertex.
\smallskip

{(II)} Again, it is enough to consider $k=2$.  By translation invariance $W_{R_{\mathrm{Pen}}}(\tau_i)=W_{R_{\mathrm{Pen}}}(\tau_i')$.  The identity  
\begin{equation}\label{eq:mer1}
W_{R_{\mathrm{Pen}}}(\tau)\;=\; W_{R_{\mathrm{Pen}}}(\tau_1')\,W_{R_{\mathrm{Pen}}}(\tau_2')
\end{equation}
is obtained by integrating first, in the left-hand side, over the variables $x_i:i\in I_2$.  By translation invariance this integral results in a factor $W_{R_{\mathrm{Pen}}}(\tau_2')$ and the remaining integration produces the factor $W_{R_{\mathrm{Pen}}}(\tau_1')$.
\end{proof}
\smallskip

From now on we return to the choice $\star_i=0$ for all $i$. This is no loss of generality due to  translation invariance.

\begin{definition} A \emph{$k$-splitting} of a tree 
$\tau\in \mathcal{T}^0[n]$ is an ordered family of $k$ trees whose concatenation yields   $\tau$, namely a family 
\begin{equation}\label{eq:bell.34}
(\tau_1,\ldots,\tau_k)\in\prod_{i=1}^k\mathcal{T}^{0}[I_i]:
 \{I_1,\ldots,I_k\} \, \mathrm{n.t.\,part.\,of}\,\{1,\ldots, n\} \,\mathrm{and}\,
\mathrm{PFM}(\tau_1,\ldots,\tau_k)\;=\;\tau\;.
\end{equation}
The set of $k$-splittings of $\tau$ will be denoted $\mathrm{Sp}_k(\tau)$.
\end{definition}

The faithfulness of the Penrose concatenation allows the rewriting of \eqref{eq:rel8} in the form
\begin{equation}\label{eq:bell.35}
\beta_{n+1}\;=\; \sum_{k=1}^n\binom{-n}{k}\sum_{\tau\in\mathcal{T}^0[n]} \card{\mathrm{Sp}_k(\tau)} \,W_{R_{\mathrm{Pen}}}(\tau)
\end{equation}
for $n\ge 1$.
To control the cardinality of $\mathrm{Sp}_k(\tau)$ we follow \cite{ram15,ramtat15} and each tree $\tau$ according to its maximal number of splittings.  
\begin{definition} A tree $\tau\in \mathcal{T}^0[n]$ is \emph{Penrose $\ell$-splittable} if 
$\mathrm{Sp}_\ell(\tau)\neq\emptyset$ but $\mathrm{Sp}_{\ell+1}(\tau)=\emptyset$.   The set of these trees will be denoted $\mathcal{T}^0_{\mathrm{Pen},\,\ell}[n]$.  The Penrose 1-splittable trees will also be called \emph{unspittable trees}.   
\end{definition}
A $\ell$-splittable tree can be written as the concatenation of a smaller number of trees only if they are obtained by grouping some consecutive components of the maximal concatenation.  This can be done in $\binom{\ell-1}{k-1}$ ways. Therefore \eqref{eq:bell.35} yields
\begin{eqnarray}\label{eq:bell.36}
\beta_{n+1} &=& \sum_{k=1}^n\binom{-n}{k}\sum_{\ell=k}^{n}\sum_{\tau\in\mathcal{T}^0_{\mathrm{Pen},\,\ell}[n]} \binom{\ell-1}{k-1}\,W_{R_{\mathrm{Pen}}}(\tau)\nonumber\\
&=& \sum_{\ell=1}^n\Bigl[\sum_{k=1}^{\ell}\binom{-n}{k}\binom{\ell-1}{k-1}\Bigr] \sum_{\tau\in\mathcal{T}^0_{\mathrm{Pen},\,\ell}[n]}  \,W_{R_{\mathrm{Pen}}}(\tau)\nonumber\\
&=& \sum_{\ell=1}^n\Bigl[\sum_{k=0}^{\ell-1}\binom{-n}{k+1}\binom{\ell-1}{k}\Bigr] \sum_{\tau\in\mathcal{T}^0_{\mathrm{Pen},\,\ell}[n]}  \,W_{R_{\mathrm{Pen}}}(\tau)
\end{eqnarray}

The following lemma leads to the final expression.
\begin{lemma}\label{lemma4} \cite{ram15,ramtat15}
For any $r\in\mathbb{R}$ and any $m \in\mathbb{N}$, 
\begin{equation}\label{eq:bell.37}
\sum_{k=0}^{m}\binom{r}{k+1}\binom{m}{k}\;=\;\binom{m+r}{m+1}\;.
\end{equation}
[Binomials are understood in the general sense \eqref{eq:bell.22}.]
\end{lemma}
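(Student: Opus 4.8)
The plan is to read both sides of \eqref{eq:bell.37} as coefficients of a single formal power series, thereby reducing the identity to the multiplicativity of the binomial series. Recall from \eqref{eq:bell.21}--\eqref{eq:bell.22} that for every real $a$ the series $(1+X)^a=\sum_{n\ge 0}\binom an X^n$ is a well-defined element of $\rser$, each coefficient $\binom an$ being a polynomial in $a$. In this language $\binom{m+r}{m+1}=[X^{m+1}]\,(1+X)^{m+r}$, whereas, using the convolution product \eqref{eq:bell.2} and the fact that $(1+X)^m$ is a genuine polynomial, I would compute $[X^{m+1}]\bigl((1+X)^m(1+X)^r\bigr)=\sum_{k=0}^{m}\binom mk\binom r{m+1-k}$, the upper limit being $m$ because $\binom m{m+1}=0$; the substitution $\ell=m-k$ together with $\binom m{m-\ell}=\binom m\ell$ rewrites this as $\sum_{\ell=0}^m\binom m\ell\binom r{\ell+1}$, which is precisely the left-hand side of \eqref{eq:bell.37}. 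Hence \eqref{eq:bell.37} is equivalent to the identity $(1+X)^{m+r}=(1+X)^m(1+X)^r$ in $\rser$.

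The core task is therefore to establish $(1+X)^{a+b}=(1+X)^a(1+X)^b$ for all real $a,b$, and I would do this through the elementary recursion built into the definition. Differentiating \eqref{eq:bell.21} term by term via \eqref{eq:bell.7.2} and simplifying \eqref{eq:bell.22} yields the polynomial identity $(n+1)\binom a{n+1}=a\binom{a-1}n$, i.e.\ $\frac{d}{dX}(1+X)^a=a\,(1+X)^{a-1}$, while Pascal's rule $\binom an=\binom{a-1}n+\binom{a-1}{n-1}$ gives $(1+X)\,(1+X)^{a-1}=(1+X)^a$. Combining these with Leibniz' rule, both $f:=(1+X)^a(1+X)^b$ and $g:=(1+X)^{a+b}$ satisfy $(1+X)\,h'=(a+b)\,h$ with $h(0)=1$; reading this relation off coefficient by coefficient produces the recursion $(n+1)\,h_{n+1}=(a+b-n)\,h_n$, which together with $h_0=1$ determines every coefficient uniquely, so $f=g$.

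The main point to watch ---and the expected obstacle--- is keeping the formal-series bookkeeping honest: the steps $\frac{d}{dX}(1+X)^a=a(1+X)^{a-1}$, Pascal's rule and Leibniz' rule must be used as genuine coefficient-wise identities in $\rser$ derived from \eqref{eq:bell.22}, not as analytic facts, and $(1+X)^m$ must be treated as a polynomial so that the convolution in the coefficient extraction above is a finite sum. A shorter route for the middle step is to invoke the Chu--Vandermonde identity for real parameters, $\binom{a+b}n=\sum_{k=0}^n\binom ak\binom b{n-k}$: it holds for $a,b\in\mathbb N$ by the usual set-splitting count, both sides are polynomials in $(a,b)$, hence they agree identically; specializing $a=m$, $b=r$, $n=m+1$ and re-indexing as above then gives \eqref{eq:bell.37} directly.
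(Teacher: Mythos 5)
Your argument is correct and follows essentially the same route as the paper: both read the sum as the coefficient of $X^{m+1}$ in the Cauchy product $(1+X)^r(1+X)^m$ and conclude via the multiplicativity $(1+X)^r(1+X)^m=(1+X)^{r+m}$ together with $[X^{m+1}](1+X)^m=0$. The only differences are cosmetic ones: you justify the multiplicativity explicitly (the paper's proof of Lemma~\ref{lemma4} uses it without comment), and you absorb the index shift by dropping the vanishing $\binom{m}{m+1}$ term rather than by writing $\binom{r}{k+1}$ as a coefficient of $\frac1X\bigl[(1+X)^r-1\bigr]$.
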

\begin{proof}
Writing the left-hand side of \eqref{eq:bell.37} in the form
\begin{equation}\label{eq:bell.38}
\sum_{k=0}^{m}\binom{r}{k+1}\binom{m}{m-k}
\end{equation}
we see that it corresponds to the $m$-th coefficient of the product of two series.  Resorting to 
\eqref{eq:bell.21} we deduce that
\begin{equation}\label{eq:bell.39}
\binom{r}{k+1}\;=\; [X^k]\,\frac1X \bigl[(1+X)^r-1\bigr]
\end{equation}
and
\begin{equation}\label{eq:bell.40}
\binom{m}{m-k}\;=\; [X^{m-k}]\,(1+X)^m\;.
\end{equation}
Hence,
\begin{eqnarray}\label{eq:bell.41}
\sum_{k=0}^{m}\binom{r}{k+1}\binom{m}{k}&=& [X^m]\frac1X \bigl[(1+X)^r-1\bigr](1+X)^m\nonumber\\
&=& [X^{m+1}]\bigl[(1+X)^{r+m}-(1+X)^m\bigr]\\
&=& \binom{m+r}{m+1} - 0\;.\nonumber
\end{eqnarray}
The last line is also a consequence of \eqref{eq:bell.21}.
\end{proof}
 
 For $r=-n$ and $m=\ell-1$ the lemma yields
 \begin{equation}\label{eq:bell.42}
\sum_{k=0}^{\ell-1}\binom{-n}{k+1}\binom{\ell-1}{k}\;=\;\binom{-n+l-1}{\ell}\;=\; (-1)^\ell\binom{n}{\ell}
\end{equation}
and \eqref{eq:bell.36} becomes
\begin{equation}\label{eq:bell.43}
\beta_{n+1}\;=\; \sum_{\ell=1}^n (-1)^\ell\binom{n}{\ell}
\sum_{\tau\in\mathcal{T}^0_{\mathrm{Pen},\,\ell}[n]}  \,W_{R_{\mathrm{Pen}}}(\tau)
\end{equation}
for any $n\ge 1$ [$\beta_1=1$].

\subsection{Proof of Theorem \ref{th:rr1}}

We introduce the formal series 
\begin{equation}\label{eq:thevir3}
T^{\mathrm{Pen},\ell}_{W}(X)\;=\;\sum_{n=1}^{\infty}\frac{X^n}{n!}\sum_{\tau\in\mathcal{T}^0_{\mathrm{Pen},\ell}[n]} W_{R_{\mathrm{Pen}}}(\tau)
\end{equation}
and rewrite \eqref{eq:bell.43} in the form
\begin{equation}\label{eq:bell.43.1}
\beta_{n+1}\;=\; [X^n] \sum_{\ell=1}^n (-1)^\ell\binom{n}{\ell} T^{\mathrm{Pen},\ell}_{W}(X)\;.
\end{equation}

\begin{lemma}\label{lemma6} For $m\in\mathbb{N}$, we have
\begin{equation}\label{eq:mer10}
T^{\mathrm{Pen},\ell}_W(X)\;=\;\left[T^{\mathrm{Pen},1}_W(X)\right]^\ell.
\end{equation}
\end{lemma}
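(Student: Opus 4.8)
The plan is to prove \eqref{eq:mer10} coefficient by coefficient, realising $n!\,[X^n]$ of either side as one and the same weighted count of trees and matching the two counts through the Penrose concatenation $\mathrm{PFM}$.

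First I would expand the right-hand side with the elementary rule for powers of an exponential power series. Writing $T^{\mathrm{Pen},1}_W(X)=\sum_{m\ge1}(a_m/m!)X^m$ with $a_m=\sum_{\sigma\in\mathcal{T}^0_{\mathrm{Pen},1}[m]}W_{R_{\mathrm{Pen}}}(\sigma)$, the multinomial expansion (equivalently, \eqref{eq:bell.20.ex} combined with \eqref{eq:bell.27}) gives
\begin{equation*}
n!\,[X^n]\,\bigl[T^{\mathrm{Pen},1}_W(X)\bigr]^\ell\;=\;\sum_{(I_1,\ldots,I_\ell)}\;\prod_{j=1}^\ell a_{\card{I_j}}\;,
\end{equation*}
the sum running over ordered non-trivial partitions $(I_1,\ldots,I_\ell)$ of $\{1,\ldots,n\}$. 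A short relabelling remark is needed here: for the order-preserving bijection of $\{0,1,\ldots,\card{I_j}\}$ onto $\{0\}\cup I_j$ fixing $0$, the Penrose closure $R_{\mathrm{Pen}}$ and the index $j_{\max}$ are equivariant (both are defined through the order of the vertex labels), so being Penrose-unsplittable is a relabelling-invariant property, while $W_{R_{\mathrm{Pen}}}$ is unchanged because relabelling the integration variables leaves the integrand of \eqref{eq:bell.30} invariant. Consequently $a_{\card{I_j}}=\sum_{\sigma\in\mathcal{T}^0_{\mathrm{Pen},1}[I_j]}W_{R_{\mathrm{Pen}}}(\sigma)$, and the displayed quantity equals $\sum_{(I_1,\ldots,I_\ell)}\sum_{\sigma_j\in\mathcal{T}^0_{\mathrm{Pen},1}[I_j]}\prod_{j=1}^\ell W_{R_{\mathrm{Pen}}}(\sigma_j)$.

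Next I would send each such tuple $(\sigma_1,\ldots,\sigma_\ell)$ to $\tau:=\mathrm{PFM}(\sigma_1,\ldots,\sigma_\ell)\in\mathcal{T}^0[n]$. Iterating the $R_{\mathrm{Pen}}$-faithfulness established in Section~\ref{sec:unspit} (part (II) of the lemma on the Penrose concatenation, stated there for two trees and extended to $\ell$ pieces by induction on the number of pieces) turns the summand into $W_{R_{\mathrm{Pen}}}(\tau)$. It then remains to verify that $(\sigma_1,\ldots,\sigma_\ell)\mapsto\tau$ is a bijection from $\ell$-tuples of Penrose-unsplittable trees over ordered non-trivial partitions of $\{1,\ldots,n\}$ onto $\mathcal{T}^0_{\mathrm{Pen},\ell}[n]$. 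I would argue this from the structural fact already invoked to obtain \eqref{eq:bell.36}: every $\ell$-splitting of an $\ell'$-splittable tree is a grouping of consecutive components of its \emph{unique} maximal splitting, whose components are themselves unsplittable. Given unsplittable $\sigma_1,\ldots,\sigma_\ell$, the tree $\tau=\mathrm{PFM}(\sigma_1,\ldots,\sigma_\ell)$ has $(\sigma_1,\ldots,\sigma_\ell)$ among its $\ell$-splittings; since a concatenation of two or more trees is splittable, no group of $\ge2$ consecutive components of the maximal splitting of $\tau$ can coincide with one of the $\sigma_j$, so $(\sigma_1,\ldots,\sigma_\ell)$ must itself \emph{be} the maximal splitting. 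This yields simultaneously that $\tau\in\mathcal{T}^0_{\mathrm{Pen},\ell}[n]$ (well-definedness) and that $(\sigma_1,\ldots,\sigma_\ell)$ is recovered from $\tau$ (injectivity); surjectivity is immediate, the maximal splitting of any $\tau\in\mathcal{T}^0_{\mathrm{Pen},\ell}[n]$ consisting of $\ell$ unsplittable trees with $\mathrm{PFM}$ equal to $\tau$. Putting everything together, $n!\,[X^n]\,[T^{\mathrm{Pen},1}_W]^\ell=\sum_{\tau\in\mathcal{T}^0_{\mathrm{Pen},\ell}[n]}W_{R_{\mathrm{Pen}}}(\tau)=n!\,[X^n]\,T^{\mathrm{Pen},\ell}_W$, which is \eqref{eq:mer10}.

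The main obstacle is not any single computation but keeping the bookkeeping unambiguous: one must work throughout with the one partition scheme $R_{\mathrm{Pen}}$; one must check that the relabelling invariances really do hold for the \emph{order-sensitive} Penrose rule (they do, once equivariance of $R_{\mathrm{Pen}}$ and of $j_{\max}$ under order-preserving relabellings is noted); and one must invoke the unique-maximal-splitting property in exactly the form already available from \cite{ram15,ramtat15} and used to pass to \eqref{eq:bell.36}, so that the argument does not become circular.
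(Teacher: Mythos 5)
Your proposal is correct and follows essentially the same route as the paper: both rest on the unique decomposition of a Penrose $\ell$-splittable tree into $\ell$ unsplittable components, the $R_{\mathrm{Pen}}$-faithfulness of the concatenation, the fact that the unsplittable-tree sums depend only on the cardinality of the index set, and the multinomial expansion of a power of an exponential power series. You merely run the identification in the opposite direction (from the $\ell$-th power to the tree sum rather than from the tree sum to the power) and spell out the relabelling-equivariance and bijection details that the paper leaves implicit.
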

\begin{proof}
Each $\tau\in\mathcal{T}^0_{\mathrm{Pen},\ell}$ admits exactly one decomposition $(\tau_1,\ldots,\tau_\ell)$ with $\tau_i\in \tau\in\mathcal{T}^0_{\mathrm{Pen},1}[I_i]$ for some partition $(I_1,\ldots,I_k)$.  Hence,
\begin{equation}\label{eq:bell.44}
\sum_{\tau\in\mathcal{T}^0_{\mathrm{Pen},\ell}[n]} W_{R_{\mathrm{Pen}}}(\tau)\;=\;
\sum_{\begin{array}{c}\scriptstyle (I_1,\ldots,I_k)
\\ \scriptstyle  \mathrm{n.t.\,part.\;of}\,\{1,\ldots, n\}\end{array}}
\sum_{\tau_1\in\mathcal{T}^0_{\mathrm{Pen},1}[I_1]} W_{R_{\mathrm{Pen}}}(\tau_1)
\cdots
\sum_{\tau_\ell\in\mathcal{T}^0_{\mathrm{Pen},1}[I_\ell]} W_{R_{\mathrm{Pen}}}(\tau_\ell)
\end{equation}
Each sum 
\[
t_{\card I}^{(m)}\;:=\:\sum_{\tau\in\mathcal{T}^0_{\mathrm{Pen},m}[I]} W_{R_{\mathrm{Pen}}}(\tau)
\] 
depends only on the cardinality of the index set $ I$.  Hence, \eqref{eq:bell.44} becomes
\begin{equation}\label{eq:bell.45}
t_{n}^{(\ell)}\;=\; \sum_{\begin{array}{c}\scriptstyle(m_1,\ldots,m_\ell)\\ \scriptstyle m_1+\cdots+m_n=\ell\end{array}} \binom{n}{m_1\,\ldots\,m_\ell}\, t_{m_1}^{(1)} \cdots t_{m_\ell}^{(1)}
\end{equation}
which, by \eqref{eq:bell.4.1}, proves \eqref{eq:mer10}.
\end{proof}

\begin{lemma}\label{lemma3}
The following identity between formal power series holds
\begin{equation}\label{eq:mer21}
T^{\mathrm{Pen},1}_W(X)\;=\;1-\Beta_W^{-1}(X)\;.
\end{equation}
where the coefficients of $\Beta_W(X)$ are defined in \eqref{eq:rr41.1.gg}. 
\end{lemma}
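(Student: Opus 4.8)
The plan is to read \eqref{eq:mer21} off from Lemma~\ref{lemma6} together with the fact that the Penrose $\ell$-splittability classes stratify $\mathcal{T}^0[n]$, after which the identity becomes nothing but the summation of a geometric series in $\rser$. Throughout, $\Beta_W(X)$ denotes the tree sum of the form \eqref{eq:rr41.1.gg} built from the actual weights $W_{R_{\mathrm{Pen}}}(\tau)$, so that by Proposition~\ref{prop:rr100.part} applied to the pinned expansion and by \eqref{eq:bell.29} one has $n![X^n]\Beta_W=\sum_{\tau\in\mathcal{T}^0[n]}W_{R_{\mathrm{Pen}}}(\tau)=b_{n+1}$, i.e.\ $\Beta_W=\uhat b'$.

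First I would verify that every tree has a well-defined maximal splitting number, so that $\mathcal{T}^0[n]=\bigsqcup_{\ell\ge 1}\mathcal{T}^0_{\mathrm{Pen},\ell}[n]$. Indeed $\tau$ is itself a $1$-splitting of $\tau$, while a $k$-splitting carries a nontrivial partition of $\{1,\dots,n\}$ into $k$ blocks and hence forces $k\le n$; and the grouping-of-consecutive-components argument already used to pass to \eqref{eq:bell.36} shows that $\mathrm{Sp}_\ell(\tau)\ne\emptyset$ implies $\mathrm{Sp}_{k}(\tau)\ne\emptyset$ for all $1\le k\le\ell$, so $\{k:\mathrm{Sp}_k(\tau)\ne\emptyset\}$ is an interval $[1,\ell_{\max}(\tau)]$ and $\tau\in\mathcal{T}^0_{\mathrm{Pen},\ell_{\max}(\tau)}[n]$. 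Reorganizing the defining sum of $\Beta_W$ according to this stratification gives
\begin{equation*}
\Beta_W(X)\;=\;1+\sum_{\ell\ge 1}\,\sum_{n\ge 1}\frac{X^n}{n!}\sum_{\tau\in\mathcal{T}^0_{\mathrm{Pen},\ell}[n]}W_{R_{\mathrm{Pen}}}(\tau)\;=\;1+\sum_{\ell\ge 1}T^{\mathrm{Pen},\ell}_W(X)\;.
\end{equation*}

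Next I would insert the factorization $T^{\mathrm{Pen},\ell}_W(X)=[T^{\mathrm{Pen},1}_W(X)]^\ell$ of Lemma~\ref{lemma6}, so that $\Beta_W(X)=\sum_{\ell\ge 0}[T^{\mathrm{Pen},1}_W(X)]^\ell$. Since the series \eqref{eq:thevir3} defining $T^{\mathrm{Pen},1}_W$ starts at $n=1$, this element of $\rser$ has zero constant term; hence the geometric series is a legitimate element of $\rser$ (only finitely many summands affect each coefficient), and the standard identity $(1-t)\sum_{\ell\ge 0}t^\ell=1$ valid for $t\in\rser$ with $t_0=0$ shows that $1-T^{\mathrm{Pen},1}_W(X)$ is a unit with $\Beta_W(X)=(1-T^{\mathrm{Pen},1}_W(X))^{-1}$. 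Rearranging yields $\Beta_W^{-1}(X)=1-T^{\mathrm{Pen},1}_W(X)$, which is \eqref{eq:mer21}.

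The only step that deserves genuine care is the first one: checking that the splittability strata are exhaustive and pairwise disjoint, equivalently that the maximal splitting number is a well-defined integer in $[1,n]$. This is essentially forced by the definitions once one invokes the consecutive-grouping remark preceding \eqref{eq:bell.36}; everything after that is formal bookkeeping in $\rser$, the geometric-series manipulation being valid precisely because $T^{\mathrm{Pen},1}_W$ has no constant term.
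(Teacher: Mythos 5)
Your proof is correct and follows essentially the same route as the paper's: partition $\mathcal{T}^0[n]$ into the Penrose splittability classes, apply Lemma~\ref{lemma6} to turn $\sum_\ell T^{\mathrm{Pen},\ell}_W$ into a geometric series in $T^{\mathrm{Pen},1}_W$ (legitimate since its constant term vanishes), and invert. The only difference is presentational: you justify explicitly that the strata are exhaustive and disjoint via the consecutive-grouping argument, a point the paper takes for granted in stating \eqref{eq:mer22}.
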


\begin{proof}
The identity
\begin{equation}\label{eq:mer22}
\sum_{m=1}^n\sum_{\tau\in\mathcal{T}^0_{\mathrm{Pen},m}[n]} W(\tau)\;=\;\sum_{\tau\in\mathcal{T}^0[n]} W(\tau)
\end{equation}
implies 
\begin{equation}\label{eq:mer25}
n!\,[X^n]\sum_{m=1}^nT^{\mathrm{Pen},m}_W(X)=n![X^n](\Beta_W(X)-1)\qquad,\quad n\ge 0\;.
\end{equation}
By Lemma \ref{lemma6} this implies
\begin{equation}\label{eq:mer26}
[X^n]\sum_{m=1}^n[T^{\mathrm{Pen},1}_W(X)]^m=[X^n](\Beta_W(X)-1),\qquad n\ge 0\;.
\end{equation}
Formula \eqref{eq:bell.5} readily implies that 
\begin{equation}\label{eq:mer26.1}
[X^n]\sum_{m=0}^n[T^{\mathrm{Pen},1}_W(X)]^m=[X^n]\bigl[1-T^{\mathrm{Pen},1}_W(X)\bigr]^{-1}\;,
\end{equation}
hence \eqref{eq:mer26} yields the identity
\begin{equation}\label{eq:mer26.2}
T^{\mathrm{Pen},1}_W(X) \bigl[1-T^{\mathrm{Pen},1}_W(X)\bigr]^{-1}\;=\; \Beta_W(X)-1
\end{equation}
which can be manipulated into the form  \eqref{eq:mer21}.
\end{proof}

Combining \eqref{eq:bell.43.1} with the last two lemmas we obtain useful expressions for the virial coefficients.
\begin{proposition}\label{prop:rr.10}
For $n\ge 0$, 
\begin{eqnarray}\label{eq:mer30}
\frac{\beta_{n+1}}{(n+1)!} &=& \frac{1}{n+1} \,[X^n] \bigl[1-T^{\mathrm{Pen},1}_W(X)\bigr]^n\\
\label{eq:mer31}
&=& \frac{1}{n+1} \,[X^n]\, \Beta^{-n}(X)\;.
\end{eqnarray}
\end{proposition}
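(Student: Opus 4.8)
The plan is to read off the statement by combining the two preceding lemmas with the elementary binomial identity in the ring $\rser$. Starting from formula \eqref{eq:bell.43} for $\beta_{n+1}$ as an alternating sum over Penrose $\ell$-splittable trees, I would divide by $(n+1)!$ and use the definition \eqref{eq:thevir3} of the exponential generating functions $T^{\mathrm{Pen},\ell}_W$, which gives, for $n\ge 1$,
\[
\frac{\beta_{n+1}}{(n+1)!}\;=\;\frac{1}{n+1}\sum_{\ell=1}^{n}(-1)^\ell\binom{n}{\ell}\,[X^n]\,T^{\mathrm{Pen},\ell}_W(X)\;.
\]
The next step is to apply Lemma \ref{lemma6} to replace each $T^{\mathrm{Pen},\ell}_W$ by $\bigl[T^{\mathrm{Pen},1}_W\bigr]^{\ell}$, turning the right-hand side into the $X^n$-coefficient of a truncated binomial expansion in the single series $T^{\mathrm{Pen},1}_W$.

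To complete that truncated sum into an honest power, I would use that $T^{\mathrm{Pen},1}_W$ has zero constant term, so $\bigl[T^{\mathrm{Pen},1}_W(X)\bigr]^{\ell}$ has order at least $\ell$ and its $X^n$-coefficient vanishes whenever $\ell>n$; since also $\binom{n}{\ell}=0$ for integer $n$ and $\ell>n$, and the $\ell=0$ summand is the constant $1$ (irrelevant to the $X^n$-coefficient for $n\ge 1$), one may freely replace $\sum_{\ell=1}^{n}$ by $\sum_{\ell=0}^{n}$ and invoke the polynomial identity $\sum_{\ell=0}^{n}\binom{n}{\ell}(-Y)^{\ell}=(1-Y)^{n}$ with $Y=T^{\mathrm{Pen},1}_W(X)$. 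This yields \eqref{eq:mer30}. The case $n=0$ is immediate, both sides being equal to $\beta_1=1$.

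Finally, \eqref{eq:mer31} follows by substituting the identity $1-T^{\mathrm{Pen},1}_W(X)=\Beta_W^{-1}(X)$ of Lemma \ref{lemma3}, after remarking that $\Beta_W(X)$ has constant term $1$ and is therefore invertible in $\rser$, so that $\Beta_W^{-n}=(\Beta_W^{-1})^{n}$ is a well-defined formal power series. I do not expect a real obstacle in this argument: the substantive work --- $R_{\mathrm{Pen}}$-faithfulness of the Penrose concatenation, the multinomial bookkeeping behind Lemma \ref{lemma6}, and the inverse-series identity of Lemma \ref{lemma3} --- is already in place, and what remains is bookkeeping with the summation range. The only point that needs a little care is precisely that extension of the range, where the zero-constant-term property of $T^{\mathrm{Pen},1}_W$ (together with the vanishing of $\binom{n}{\ell}$ for $\ell>n$) guarantees that the added and dropped terms do not affect the $X^n$-coefficient.
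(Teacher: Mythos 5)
Your proof is correct and takes essentially the same route as the paper, which obtains the proposition by combining the tree expression \eqref{eq:bell.43} (rewritten via the generating functions \eqref{eq:thevir3}) with Lemma \ref{lemma6}, the binomial theorem, and Lemma \ref{lemma3}. Your explicit treatment of the $(n+1)!$ normalization, the extension of the summation range to $\ell=0$, and the $n=0$ case merely fills in bookkeeping the paper leaves implicit.
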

Notice that by \eqref{eq:bell.8} 
\begin{eqnarray}\label{eq:mer30.1}
\frac{\beta_{n+1}}{(n+1)!} &=& \frac{1}{n+1} \,\mathrm{Res} \Bigl[X^{-n-1}\,\bigl[1-T^{\mathrm{Pen},1}_W(X)\bigr]^n\Bigr]\\
\label{eq:mer31.1}
&=& \frac{1}{n+1} \,\mathrm{Res} \Bigl[X^{-n-1}\, \Beta^{-n}(X)\Bigr]\;.
\end{eqnarray}
\smallskip

\subsection*{Conclusion of the roof of Theorem \ref{th:rr1}}
The remaining step is to bound $\card{\beta_{n+1}}$.  This is done through the inequality
\begin{equation}\label{eq:mer35}
[X^n]\,\bigl|1-T^{\mathrm{Pen},1}_W(X)\bigr|\;\le\; [X^n]\;\bigl[1+ T^{\mathrm{Pen},1}(X)\bigl]
\end{equation}
where $T^{\mathrm{Pen},1}(X)$ is defined exactly as $T^{\mathrm{Pen},1}_W(X)$ but using the weights \eqref{eq:rn5.gen}.  Note that, 
\begin{equation}\label{eq:mer36}
T^{\mathrm{Pen},1}(X)\;=\; 1-\Beta^{-1}(X)
\end{equation}
with the coefficients of $\Beta(X)$ defined with the weights \eqref{eq:rn5.gen}.  This identity is proven following the steps of the proof of Lemma \ref{lemma3}.  Finally, for $r$ within the radius of convergence $r^{**}$ of the series $\Beta(X)$, which coincides with that of $T^{\mathrm{Pen},1}(X)$, we can apply Cauchy bound $\card{\mathrm{Res}[f(z)]}\;\le\; r \card{f(r)}$ to \eqref{eq:mer30.1}--\eqref{eq:mer31.1}.  We conclude:
\begin{eqnarray}\label{eq:rrmer40}
\frac{\card{\beta_{n+1}}}{(n+1)!} &\le& \frac{1}{n+1} \biggl[ \inf_{0\le r\le r^{**}}\Bigl(\frac 2r-\frac{1}{r\,\Beta(r)}\Bigr)\biggr]^n\\
\label{eq:rrmer40.1}
&=& \frac{1}{n+1} \biggl[ \inf_{0\le r\le r^{**}}\Bigl(\frac{1+ T^{\mathrm{Pen},1}(r)}{r}\Bigr)\biggr]^n\;.
\end{eqnarray}

The proof of Theorem  \ref{th:rr1} is completed.

\paragraph{\textbf{Acknowledgments}} T. X. Nguyen has been partially supported by the grant of GSSI (Gran Sasso Science Institute). T. X Nguyen and R. Fern\'andez also would like to thank NYU Shanghai (New York University Shanghai)  for supporting the visiting of T. X Nguyen to Shanghai. We also thank  S. Ramawadh and S. J. Tate for sharing with us their novel handling of the virial-expansion coefficients.

\end{document}